\documentclass[11pt]{article}

\usepackage{fullpage}
\usepackage{amssymb,amsmath}
\usepackage{graphicx, epsfig}

\def\idrm#1{\ensuremath{\mathrm{#1}}}
\def\idtt#1{\ensuremath{\mathtt{#1}}}
\def\idsf#1{\ensuremath{\mathsf{#1}}}

\def\floor#1{\lfloor #1 \rfloor}
\def\ceil#1{\lceil #1 \rceil}

\newcommand{\no}[1]{}

\newtheorem{theorem}{Theorem}
\newtheorem{lemma}{Lemma}

\newenvironment{proof}{\trivlist\item[]\emph{Proof}:}%
{\unskip\nobreak\hskip 1em plus 1fil\nobreak$\Box$
\parfillskip=0pt%
\endtrivlist}

\newenvironment{itemize*}%
  {\begin{itemize}%
    \setlength{\itemsep}{0pt}%
    \setlength{\parskip}{0pt}%
    \setlength{\parsep}{0pt}%
    \setlength{\topsep}{0pt}%
    \setlength{\partopsep}{0pt}%
  }%
  {\end{itemize}}%

\newcommand{\cT}{{\cal T}}

\newcommand{\cQ}{{\cal Q}}

\newcommand{\cN}{{\cal N}}

\newcommand{\tS}{\widetilde{S}}
\newcommand{\tI}{\widetilde{I}}

\newcommand{\oE}{\overline{E}}

\newcommand{\ptr}{\idsf{ptr}}
\newcommand{\ppar}{\idtt{par}}
\newcommand{\mytop}{\idsf{top}}
\newcommand{\rank}{\idrm{rank}}
\newcommand{\pred}{\idrm{pred}}
\newcommand{\ssucc}{\idrm{succ}}
\newcommand{\docc}{\mathit{docc}}
\newcommand{\occ}{\mathit{occ}}

\newcommand{\mindist}{\mathit{mindist}}
\newcommand{\docrank}{\mathit{docrank}}
\newcommand{\tf}{\mathit{tf}}
\newcommand{\idf}{\mathit{idf}}

\newcommand{\Yx}{\mathit{Yx}}
\newcommand{\Yw}{\mathit{Yw}}

\newcommand{\eps}{\varepsilon}

\pagestyle{plain}

\begin{document}

\no{\title{\bf Optimal Top-$k$ Document Retrieval \\ and Ranked Range Reporting 
in Linear Space}}
\title{Optimal Top-$k$ Document Retrieval%
\thanks{Partially funded by Fondecyt Grant 1-110066, Chile, and by
Millennium Nucleus Information and Coordination in Networks ICM/FIC P10-024F,
Chile. An early partial version of this article appeared in {\em Proc. SODA
2012} \cite{NN12}.}} 
\author{
\begin{tabular}{cc}
Gonzalo Navarro &
 Yakov Nekrich \\ 
 Department of Computer Science~~~~ &
 ~~~~Department of Computer Science \\
 University of Chile, Chile &
 University of Kansas, USA Ã‚Â¸\\
{\tt gnavarro@dcc.uchile.cl} &
{\tt yakov.nekrich@googlemail.com} 
\end{tabular}}
\date{}

\maketitle

\begin{abstract}
 Let $\mathcal D$ be a collection of $D$ documents, which are
strings over an alphabet of size $\sigma$, of total length $n$.
We describe a data structure that uses linear space and 
and reports $k$ most relevant documents that contain a query
pattern $P$, which is a string of length $p$, in time $O(p/\log_\sigma n+k)$,
which is optimal in the RAM model in the general case where $\lg D =
\Theta(\log n)$, and involves a novel RAM-optimal suffix tree search. 
Our construction supports an ample set of important relevance 
measures, such as the number of times $P$ appears in a document (called term
frequency), a fixed document importance, and the minimal distance 
between two occurrences of $P$ in a document. 

When $\lg D = o(\log n)$, we show how to reduce the space 
of the data structure from $O(n\log n)$ to $O(n(\log\sigma+\log D+\log\log n))$
bits, and to $O(n(\log\sigma+\log D))$ bits in the case of the popular 
term frequency measure of relevance, at the price of an additive term
$O(\log^\eps n \log\sigma)$ in the query time, for any constant $\eps>0$.

We also consider the dynamic scenario, where documents
can be inserted and deleted from the collection. We obtain linear space and
query time $O(p(\log\log n)^2/\log_\sigma n+\log n + k\log\log k)$, 
whereas insertions and deletions require $O(\log^{1+\eps} n)$ time per symbol,
for any constant $\eps>0$. 

Finally, we consider an extended static scenario where an extra parameter 
$\ppar(P,d)$ is defined, and the query must retrieve only documents $d$
such that $\ppar(P,d)\in [\tau_1,\tau_2]$, where this range is specified at 
query time. We solve these queries using linear space and 
$O(p/\log_\sigma n + \log^{1+\eps} n + k\log^\eps n)$ time, for any constant 
$\eps>0$.

Our technique is to translate these top-$k$ problems into multidimensional 
geometric search problems. As an additional bonus, we describe some 
improvements to those problems.
\end{abstract}

\section{Introduction}

The design of efficient data structures for document (i.e., string) collections
that can report those containing a query pattern $P$
is an important problem  studied in the 
information retrieval and pattern matching communities
(see, e.g., a recent survey \cite{Nav13}). 
Due to the steadily increasing volumes of data, 
it is often necessary to generate a list $L(P)$ of the documents 
containing a string pattern $P$ in decreasing order of
relevance. Since the list $L(P)$
can be very large, in most cases we are interested in 
answering \emph{top-$k$ queries}, that is, 
reporting only the first $k$ documents from $L(P)$
for a parameter $k$ given at query time.

Inverted files~\cite{BBHME87,Knu73,BYRN11} 
that store lists of documents containing 
certain keywords 
are frequently used in practical implementations of 
information retrieval methods. 
However, inverted files only work when query patterns 
belong to  a fixed pre-defined set of strings (keywords). 
The suffix tree~\cite{Wei73}, a handbook data structure known since 1973, 
uses linear space (i.e., $O(n)$ words, where $n$ is the total length of all the
documents) and finds all the $\occ$
occurrences of a pattern $P$ in $O(p+ \occ)$ time, where $p=|P|$.
Surprisingly, the general \emph{document listing problem}, 
that is, the problem of reporting all the documents that contain 
an arbitrary query pattern $P$, was not studied until 
the end of the 90s.  
Suffix trees and other data structures for standard 
pattern matching queries do not provide a satisfactory 
solution for the document listing problem because 
the same document may contain many occurrences of $P$.
Matias et al.~\cite{MMSZ98} described the first data structure 
for  document listing queries; their structure 
uses $O(n)$ words of  space and reports all $\docc$ documents that 
contain $P$  in $O(p\log D+ \docc)$ time, where $D$ is the total number of
documents in the collection. 
Muthukrishnan~\cite{Mut02} presented a data structure that 
uses $O(n)$ words of space and answers document listing queries 
in $O(p+\docc)$ time. 
Muthukrishnan~\cite{Mut02} also initiated the study of more sophisticated 
problems in which only  documents that contain $P$ and satisfy some 
further criteria are reported. 
In the $K$-mining problem, we must report documents in which $P$ occurs
 at least 
$K$ times; in the $K$-repeats problem, we must report  documents in which 
at least two occurrences of $P$ are within  a distance $K$.  
He described $O(n)$- and an $O(n\log n)$-word data structures that answer 
$K$-mine and $K$-repeats queries, respectively, both in $O(p+\occ)$ time, 
where $\occ$ is the number of reported documents.

A problem not addressed by Muthukrishnan, and arguably the most important one 
for information retrieval, is the  {\em top-$k$ document retrieval}
problem:  report   $k$ most highly ranked documents for a query pattern
 $P$ in decreasing order of their ranks. The ranking is 
measured with respect to the so-called {\em relevance} of a string $P$ for a 
document $d$. A basic relevance measure is $\tf(P,d)$, the number of 
times $P$ occurs in $d$. Two other important examples 
are  $\mindist(P,d)$, the minimum distance between two occurrences of 
$P$ in $d$, and $\docrank(d)$, an arbitrary static rank assigned to a document $d$. 
Some more complex measures have also been proposed. 
Hon et al.~\cite{HPSW10} presented a solution for the top-$k$ document 
retrieval problem for the case when the relevance measure is $\tf(P,d)$. 
Their data structure uses $O(n\log n)$ words of space and answers queries 
in $O(p+k+\log n \log\log n)$ time. Later,
Hon, Shah and Vitter \cite{HSV09} presented a general solution 
for a wide class of relevance measures. 
Their data structure uses
 linear space and needs $O(p+ k\log k)$ time to answer a 
top-$k$ query. A recent $O(n)$ space data structure 
\cite{KN11} enables us to answer top-$k$ queries in $O(p+k)$ time when
the relevance measure is $\docrank(d)$. 
However, that result cannot be extended to other
more important  relevance measures.

\no{
The classic Information Retrieval (IR) ranked search paradigm handles a
collection of documents, each seen as a bag of weighted terms.
Queries look for a combination of terms, and the documents where most 
query terms appear with highest weights are output, using some formula
to combine the weights \cite{BYRN11}. This model is appropriate 
on natural language text collections, where documents can be seen as 
sequences of atomic words (i.e., the terms that can be queried).

The extension of this model to handle arbitrary string collections enables
carrying out IR activities not only on Oriental languages like Chinese, 
Korean and Japanese, where it is difficult to cut words automatically, but also
on collections where the concept of word is absent or sloppy, such as ADN, 
proteins, MIDI pitches, source code, numeric and multimedia streams, and so on.
In this case any {\em substring} of the document is a valid term, with some
implicitly defined weight, that can be 
queried, and thus classic IR solutions are not appropriate.

The generalized problem has received consideration for some time
\cite{MMSZ98}, but most of the results have appeared in the last decade.
Muthukrishnan \cite{Mut02} showed how, by using a suffix tree to search for a
pattern $P$ and various extra arrays, it was possible to solve in optimal time 
(i.e., $O(|P|+occ)$) and linear space (i.e., $O(n)$ words, or $O(n\log n)$ 
bits), the problem of finding the $occ$ documents where $P$ appears. 
This so-called {\em document listing} problem is the most basic one, and still
does not address ranked retrieval. Muthukrishnan gave solutions for other more 
sophisticated problems, such as the {\em document mining} problem, where a 
threshold $K$ is given and one must report the documents where $P$ appears at 
least $K$ times. Yet a third one is the {\em repeats} problem, where
one must report all documents where at least two occurrences of $P$ appear
within a given distance $K$. Muthukrishnan gave linear space and optimal time
($O(|P|+|output|)$) solutions for all those problems, except that the repeats
problem required $O(n\log n)$-word space.

A problem not addressed by Muthukrishnan, and probably the most important one 
for IR applications, is the so-called {\em top-$k$ document retrieval}, that is,
find $k$ highest ranked documents for a query pattern $P$. The ranking is 
measured with respect to the so-called {\em relevance} of a string $P$ for a 
document $d$. A basic relevance measure is $\tf(P,d)$, the number of 
times $P$ occurs in $d$, but more complex ones have been advocated.
Hon et al.~\cite{HPSW10} presented a solution to this problem requiring 
$O(|P|+k+\log n \log\log n)$ time and $O(n\log n)$-word space. Finally,
Hon, Shah and Vitter \cite{HSV09} presented a solution using linear space 
and $O(|P|+ k\log k)$ time. 
}

\paragraph{Our Results.}
Hon et al.'s results~\cite{HSV09} are an important achievement, but their 
time is not yet
optimal. In this paper we describe a linear space data structure that answers
top-$k$ document queries in $O(p/\log_\sigma n+k)$ time, where $\sigma$ is
the alphabet size of the collection. This is optimal in
the $\Theta(\log n)$-word RAM model we use, unless the collection has very 
few documents, $\lg D = o(\log n)$ (i.e., $D=o(n^\eps)$ for any
constant $\eps>0$). We support the same relevance measures as Hon 
et al.~\cite{HSV09}. 

\begin{theorem}\label{theor:topk}
  Let $\mathcal{D}$ be a collection of strings (called {\em documents}) of 
  total length $n$ over an integer alphabet $[1,\sigma]$, and let $w(S,d)$ be a 
  function  that assigns a numeric {\em weight}
  to string $S$ in document $d$, so that $w(S,d)$ depends only on the set of 
  starting positions of occurrences of $S$ in $d$.
  Then there exists an $O(n)$-word space data structure that, given a string
  $P$ of length $p$ and an integer $k$, reports $k$ documents $d$ containing 
  $P$ with highest $w(P,d)$ values, in decreasing order of $w(P,d)$, in 
  $O(p/\log_\sigma n+k)$ time. The time is online on $k$. 
\end{theorem}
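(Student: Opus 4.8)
The plan is to build the generalized suffix tree (GST) $\cT$ of $\cD$, reduce the top-$k$ problem to a purely geometric one, and then solve the geometry. Two ingredients are required: (i) a way to locate in $\cT$ the locus $v_P$ of the query pattern $P$ in time $O(p/\log_\sigma n)$ rather than $O(p)$ — the RAM-optimal suffix tree search promised in the abstract; and (ii) a linear-space structure that, once $v_P$ is known, streams out the top-$k$ answers in $O(k)$ total time, one at a time in decreasing order of $w(P,\cdot)$.

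For the reduction, the key observation is that the set of starting positions of $P$ in a document $d$ coincides with the set of starting positions of $\mathrm{str}(w)$, where $w$ is the lowest common ancestor in $\cT$ of the $d$-leaves that lie below $v_P$; hence by hypothesis $w(P,d)=w(\mathrm{str}(w),d)$. For each $d$, let $\cT^d$ be the subtree of $\cT$ spanned by the $d$-leaves and their pairwise lowest common ancestors; $\cT^d$ has $O(|d|)$ nodes, so there are $O(n)$ nodes in total. For every $w\in\cT^d$ with $\cT^d$-parent $w'$, create a weighted point whose first coordinate is the preorder rank of $w$ in $\cT$, whose second coordinate is the string depth of $w'$ in $\cT$ (take $-1$ when $w$ is the root of $\cT^d$), whose weight is $w(\mathrm{str}(w),d)$, and whose label is $d$. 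A routine case analysis shows that $d$ contains $P$ if and only if $\cT^d$ contains exactly one node $w$ whose preorder rank falls inside the preorder interval of the subtree of $v_P$ and whose associated second coordinate is smaller than $p$, and that this $w$ is precisely the restricted locus above, so its weight is $w(P,d)$. Consequently the $k$ most relevant documents are exactly the labels of the $k$ heaviest points in the three-sided region $[\,\text{preorder interval of }v_P\,]\times(-\infty,p)$, and these labels are automatically distinct, so no explicit deduplication is needed.

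It then suffices to establish two lemmas. First, a RAM-optimal suffix tree search: preprocess $\cT$ into $O(n)$ words so that, given $P$ packed into $O(p/\log_\sigma n)$ machine words, one computes $v_P$ (together with a pointer into the geometric structure) in $O(p/\log_\sigma n)$ time; the idea is to descend while consuming $\Theta(\log_\sigma n)$ symbols of $P$ per step, combining hashing of pattern prefixes to skip chains of low-degree nodes with packed-word longest-common-prefix comparisons against stored edge labels, so that each block of symbols is charged $O(1)$ work, including a final verification. Second, a geometric lemma: $O(n)$ weighted points in rank space can be stored in $O(n)$ words so that the top-$k$ heaviest points of a three-sided query region are reported in decreasing weight order, online in $k$, in $O(k)$ time — where we exploit that the admissible first-coordinate ranges are exactly the preorder intervals of $\cT$-nodes, so the structure can be indexed directly by $\cT$-nodes (making $v_P$ hand us the entry point for free) and organized along a decomposition of $\cT$ augmented with heap-ordered / persistent auxiliary structures so that no $\Theta(\log n)$ start-up is paid. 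Composing the two lemmas with the reduction gives the theorem: search for $v_P$, then stream the top-$k$ points and output their labels, the decreasing-order and correctness guarantees following from the equivalence established above.

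The main obstacle is the elimination of the logarithmic additive terms in both ingredients. For the search, a plain blind-trie or $z$-fast-trie descent costs $\Theta(\log p)$, which exceeds $O(p/\log_\sigma n)$ for short patterns, so the block-reading machinery must genuinely do constant amortized work per $\Theta(\log_\sigma n)$-symbol block; for the geometry, the standard three-sided top-$k$ structures spend $\Theta(\log n)$ merely to set up a query (decomposing the first-coordinate range and seeding a priority queue over the pieces), and we must leverage the fact that this range is always a GST subtree — so that a constant number of precomputed ``anchor'' sorted lists suffice at $v_P$ — in order to obtain the clean $O(p/\log_\sigma n+k)$ bound. Everything else (bounding the point set by $O(n)$, correctness of the reduction, reporting order) is straightforward once these two pieces are in place.
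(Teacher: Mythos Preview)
Your reduction is essentially the paper's: your $\cT^d$-parent edges are exactly the $\ptr(\cdot,d)$ pointers of the Hon--Shah--Vitter framework, and your block-hashing plan for the suffix-tree search matches what the paper does (perfect hashing on $\Theta(\log_\sigma n)$-symbol chunks, plus a weak-prefix-search structure for the final partial chunk). The substantive divergence is in the geometric lemma.

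You demand a pure $O(k)$ answer once the locus is known, and propose to buy it by exploiting that the $x$-range is always a subtree interval, so that node-anchored precomputed lists kill the $\Theta(\log n)$ start-up. The paper takes a simpler route: it proves a \emph{general} grid result (Theorem~\ref{theor:weight}) answering an arbitrary three-sided query $[a,b]\times[0,h]$ in $O(h/\log^c n + k)$ time and linear space, for any constant $c$. The extra $h/\log^c n$ term looks like a defect, but in the application $h$ is the (tree) depth of the locus, hence $h\le p$, and with $c\ge 1$ it is dominated by the $O(p/\log_\sigma n)$ already paid for the search. So the paper never uses the subtree-interval restriction at all; $[a,b]$ is treated as an arbitrary range and the start-up is absorbed into the pattern-reading time. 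The mechanism behind Theorem~\ref{theor:weight} is a doubly-exponential hierarchy of vertical stripe widths $\Delta_j=n^{1/2^j}\log^2 n$, with an $O(m^f+k)$-time structure per stripe (Lemma~\ref{lemma:polyn}) and precomputed dyadic top lists across stripes; at query time one picks the level $j$ whose stripe cost matches $\max(h,k)$.

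Your alternative is not wrong in principle --- the paper's conclusion explicitly notes that Shah et al.\ subsequently achieved exactly the $O(k)$-from-locus bound you are aiming at --- but your sketch (``heap-ordered / persistent auxiliary structures'', ``a constant number of anchor sorted lists'') is not yet a construction, and you yourself flag it as the main obstacle. The paper's insight that an $h$-dependent additive term is acceptable is precisely what lets it avoid that obstacle and give a complete, self-contained proof.
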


Note that the weighting function is general enough to encompass measures 
$\tf(P,d)$, $\mindist(P,d)$ and $\docrank(d)$. 
As stated, our solution is {\em online} on $k$:
It is not necessary to specify $k$ beforehand; our
data structure can simply report documents in decreasing 
relevance order until all the documents are reported or the query processing 
is terminated by a user.

An online top-$k$ solution using the $\tf$ measure solves the
$K$-mining problem in optimal time and linear space. An online top-$k$
solution using $\mindist$ measure solves the $K$-repeats problem in optimal
time and linear space. We remind that Muthukrishnan \cite{Mut02} had solved 
the $K$-repeats problem using $O(n\log n)$-word space; later Hon et 
al.~\cite{HSV09} reduced the space to linear. Now all these results appear
as a natural corollary of our optimal top-$k$ retrieval solution. Our results 
also subsume those on more recent variants of the problem \cite{KN11}, for
example when the rank $\docrank(d)$ depends only on $d$ (we just use
$w(P,d) = \docrank(d)$), or where in addition
we exclude those $d$ where $P$ appears less than $K$ times for a fixed 
pre-defined $K$ (we just use $w(P,d) = \docrank(d)$ if $\tf(P,d) \ge K$, 
else 0). 

Moreover, we can also answer queries for some relevance metrics not 
included in Theorem~\ref{theor:topk}. For instance, we might be interested 
in reporting all the documents $d$ with $\tf(P,d) \times \idf(P)
\ge \tau$, where $\idf(P) = \log (N/\mathit{df}(P))$ and 
$\mathit{df}(P)$ is the number of documents where $P$ appears \cite{BYRN11}.
Using the $O(n)$-bit structure of Sadakane~\cite{Sad07}, we can compute 
$\idf(P)$ in $O(1)$ time from the suffix tree locus of $P$. To answer the query,
 we use our data structure 
of Theorem~\ref{theor:topk} in online mode on measure $\tf$: For every 
reported document $d$ we find $\tf(P,d)$ and 
compute  $\tf(P,d) \times \idf(P)$; the procedure is terminated 
when a document $d_l$ with $\tf(P,d_l)\times \idf(P)<\tau$ 
is encountered. 
Thus we need  $O(p/\log_{\sigma} n+\occ)$ time to report all
 $\occ$ documents with $\tf \times \idf$ scores above a threshold.

\medskip

When $\lg D = o(\log n)$, it is not clear that our time is RAM-optimal. Instead,
we show that in this case the space of our data structures can be reduced from 
$O(n\log n)$ bits to $O(n(\log\sigma + \log D + \log\log n))$. This is 
$o(n\log n)$ bits unless $\lg\sigma = \Theta(\log n)$ (in which case the
linear-space data structure is already asymptotically optimal).
For the most important $\tf$ relevance 
measure, where we report documents in which $P$ occurs most frequently, 
we obtain a data structure that uses $O(n(\log\sigma+\log D))$ bits of space.  
The price of the space reduction is an additive term $O(\log^\eps n\log\sigma)$
in the query time, for any constant $\eps>0$.

\medskip

We also consider the dynamic framework, where collection $\mathcal{D}$
admits insertions of new documents and deletions of existing documents. Those
updates are supported in slightly superlogarithmic time per character, whereas 
the query times are only slightly slowed down. We note that measure $C_w$ is 
just $O(1)$ for the typical relevance measures $\tf$ and $\docrank$, and
$O(\log n)$ for $\mindist$.

\begin{theorem}\label{theor:dyntopk}
  Let $\mathcal{D}$ be a collection of documents of 
  total length $n$ over an integer alphabet $[1,\sigma]$, and let $w(S,d)$ be 
  a function  that assigns a numeric {\em weight}
  to string $S$ in document $d$, so that $w(S,d)$ depends only on the set of 
  starting positions of occurrences of $S$ in $d$, and can be computed in
  $O(C_w|d|)$ time for all the nodes of the suffix tree of document $d$.
  Then there exists an $O(n)$-word space data structure that, given a string
  $P$ of length $p$ and an integer $k$, reports $k$ documents $d$ containing 
  $P$ with highest $w(P,d)$ values, in decreasing order of $w(P,d)$, in 
  $O(p(\log\log n)^2/\log_\sigma n+\log n+k\log\log k)$ time, online in $k$.
  The structure can insert new documents and delete existing documents in 
  $O(C_w+\log^{1+\eps} n)$ time per inserted character and
  $O(\log^{1+\eps} n)$ per deleted character, for any constant 
  $\eps>0$.
\end{theorem}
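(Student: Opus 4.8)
The plan is to dynamize the geometric reduction behind Theorem~\ref{theor:topk}, but using an encoding of the geometric point set that is insensitive to updates of the underlying suffix tree. Maintain a generalized suffix tree $T$ of $\mathcal D$. For each document $d$, consider (conceptually) its stand-alone suffix tree $ST_d$, and for every non-root node $u$ of $ST_d$ let $s_u$ be the lexicographically smallest suffix of $d$ whose leaf lies below $u$, let $\delta(u)=|\mathrm{str}(u)|$ be the string depth of $u$, let $\delta^-(u)$ be the string depth of its parent, and set $\omega(u)=w(\mathrm{str}(u),d)$; the last value is well defined because $w$ depends only on the set of occurrence positions, which is constant along an $ST_d$-edge. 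Store, once and for all, a weighted point with ``coordinates'' $(s_u,\delta(u),\delta^-(u))$ and satellite data $(\omega(u),d)$, where $s_u$ is kept as an abstract leaf of $T$ whose current left-to-right rank is obtained through an order-maintenance structure. If $v_P$ is the locus in $T$ of a query pattern $P$, with leaf interval $[sp,ep]$ and $p=|P|$, a short verification shows that the points with $s_u\in[sp,ep]$, $\delta(u)\ge p$, and $\delta^-(u)<p$ are in one-to-one correspondence with the documents that contain $P$, and that the weight stored with the point of $d$ equals $w(P,d)$; hence the top-$k$ output, in decreasing weight order and online in $k$, is exactly a ranked (weight-sorted) range reporting query over these constant-dimensional points. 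The key feature is that $s_u$, $\delta(u)$, $\delta^-(u)$, $\omega(u)$ and $d$ are intrinsic to $d$: inserting or deleting other documents, together with the edge splits and contractions of $T$ this entails, changes only the \emph{rank} of the abstract leaf $s_u$ among the leaves of $T$, which the order-maintenance structure absorbs, so no point ever has to migrate.

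To find $v_P$ and $[sp,ep]$ I would keep $T$ as a dynamic suffix tree over the packed alphabet: each node indexes its children by the first $\Theta(\log_\sigma n)$ symbols of the corresponding edge, stored in a dynamic predecessor/dictionary structure, so one block of $P$ is processed in $O((\log\log n)^2)$ time (a dynamic predecessor query plus $O(1)$ auxiliary operations, including the detection of a partial match within an edge). Reaching $v_P$ then costs $O(p(\log\log n)^2/\log_\sigma n)$, plus an additive $O(\log n)$ for the final partial block and for reading off $[sp,ep]$. Inserting a document of length $m$ inserts its $m$ suffixes into $T$ and updates $O(m)$ node dictionaries and the order-maintenance structure in $O(\log^{1+\eps}n)$ amortized time per symbol; in addition one builds $ST_d$ and evaluates $w(\mathrm{str}(u),d)$ at all its nodes in the stipulated $O(C_w\,|d|)$ total time and inserts the $O(m)$ points, which accounts for the $O(C_w)$ term per inserted symbol. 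Deleting a document removes its $O(m)$ points and its $m$ suffixes from $T$ (no weight recomputation) in $O(\log^{1+\eps}n)$ amortized time per symbol.

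For the geometric side I would maintain the weighted points in a dynamic structure for online ranked range reporting in constant dimensions --- a balanced tree over one coordinate, augmented in priority-search-tree fashion by the heaviest points of each subtree, together with fusion/predecessor gadgets --- so that, after an $O(\log n)$-time initialisation for a query, each further document (emitted in decreasing weight order) costs $O(\log\log k)$ amortized time, for a total of $O(\log n+k\log\log k)$, while point insertions and deletions cost $O(\log^{1+\eps}n)$. This structure must operate in rank space, so that the changing leaf ranks delivered by the order-maintenance structure cause no trouble; a cruder variant with somewhat worse query terms could alternatively be obtained by the Bentley--Saxe logarithmic method applied to the static ranked-reporting structure underlying Theorem~\ref{theor:topk}. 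I expect this component to be the main obstacle: pushing the per-answer cost down to $O(\log\log k)$ while still allowing polylogarithmic, re-ranking-tolerant updates and producing the answers online, without knowing $k$ in advance.

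Assembling the pieces, a query first obtains $v_P$, $[sp,ep]$ and $p$ from the dynamic suffix tree and then drives the geometric structure, which emits the documents containing $P$ --- each exactly once --- in decreasing $w(P,d)$ order, until $k$ have been reported or the user stops; the running time is $O(p(\log\log n)^2/\log_\sigma n+\log n+k\log\log k)$, online in $k$. The dynamic suffix tree with its per-node dictionaries, the order-maintenance structure, and the geometric structure on $O(n)$ points each occupy $O(n)$ words, and the update costs are $O(C_w+\log^{1+\eps}n)$ per inserted symbol and $O(\log^{1+\eps}n)$ per deleted symbol, as claimed.
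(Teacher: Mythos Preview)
Your reduction to geometry is valid --- each non-root node of $ST_d$ does correspond to a potential answer, and your three constraints correctly isolate, for each $d$ containing $P$, the unique locus of $P$ in $ST_d$. But your choice of $x$-coordinate (the leftmost leaf $s_u$) forces the extra constraint $\delta(u)\ge p$: knowing that $s_u$ lies in $[sp,ep]$ does \emph{not} imply that $u$ lies in the subtree of the locus of $P$ in $T$, so you end up with a three-dimensional (range-plus-interval-stabbing) weighted top-$k$ query. You then defer the dynamic structure for this query, calling it ``the main obstacle'' --- and indeed no off-the-shelf result gives $O(\log n+k\log\log k)$ online ranked reporting with $O(\log^{1+\eps}n)$ updates for that query shape; the Bentley--Saxe fallback you mention supports neither deletions nor the per-answer $O(\log\log k)$ cost. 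Since this structure is the entire technical content of the theorem, the proposal is essentially a restatement of the problem.

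The paper sidesteps the extra dimension by a different $x$-coordinate: it assigns an order-maintenance label directly to each marked node of $T$ (equivalently, to each node of each $ST_d$), kept in preorder via per-node label lists and $\mathit{first}/\mathit{last}$ sentinels. Then ``$u$ in the subtree of the locus'' becomes simply ``label$(u)\in[a,b)$'', and the query is the two-dimensional three-sided range $[a,b)\times[0,p)$ on $(\text{label},\delta^-)$. These points are just as ``intrinsic'' as yours --- the label is stable under updates of other documents, and $\delta^-(u)$ equals the string depth of the target of $\ptr(u,d)$, which edge splits never change because they never create new $d$-marked nodes --- so using $s_u$ buys nothing and costs a dimension. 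Even after this simplification, the dynamic three-sided top-$k$ structure (Theorem~\ref{theor:dynweight}) is far from a one-liner: a B-tree of arity $\log^\eps n$ over the labels, each node holding arrays $W_v[0..r{-}1]$ of the heaviest not-yet-claimed point at each $y$-value, maintained under splits and merges by ``reinsertion''/``uninsertion'' cascades, then lifted to polylogarithmic $y$-range by a constant-height multiresolution tree, and finally horizontally banded so that only $\lceil p/r^c\rceil$ grids participate. A secondary gap: indexing children by ``the first $\Theta(\log_\sigma n)$ symbols of the edge'' does not yield $O((\log\log n)^2)$ per block, because edges can be shorter than $\log_\sigma n$ and a root-to-locus path can contain $\Theta(p)$ of them; the paper fixes this with level-based accelerators keyed by string depth, so that each predecessor lookup is guaranteed to consume a full $\ell=\log_\sigma n$ characters.
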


We note that a direct
dynamic implementation of the solution of Hon et al.~\cite{HSV09} would 
require at least performing $p+k$ dynamic RMQs, which cost
$\Omega(\log n / \log\log n)$ time \cite{AHR98}. Thus modeling the original
problem as a geometric one pays off in the dynamic scenario as well.

\medskip

Furthermore, we can extend the top-$k$ ranked retrieval problem by allowing
a further parameter $\ppar(P,d)$ to be associated to any pattern $P$ and
document $d$, so that only documents with $\ppar(P,d)\in [\tau_1,\tau_2]$ are
considered. 
Some applications are selecting a range of creation dates,
lengths, or PageRank values for the documents (these do not depend on $P$),
bounding the allowed number of occurrences of $P$ in $d$, or the minimum
distance between two occurrences of $P$ in $d$, etc.

\begin{theorem}\label{theor:partopk}
  Let $\mathcal{D}$ be a collection of  documents of 
  total length $n$ over an integer alphabet $[1,\sigma]$, 
  let $w(S,d)$ be a function  that assigns a numeric {\em weight}
  to string $S$ in document $d$, and let $\ppar(S,d)$ be another parameter, 
  so that $w$ and $\ppar$ depend only on the set of starting positions of 
  occurrences of $S$ in $d$.
  Then there exists an $O(n)$-word space data structure that, given a string
  $P$ of length $p$, an integer $k$, and a range $[\tau_1,\tau_2]$, reports $k$ documents $d$ 
  containing $P$ and with $\ppar(P,d) \in [\tau_1,\tau_2]$, with highest $w(P,d)$ 
  values, in decreasing order of $w(P,d)$, in 
  $O(p/\log_\sigma n+\log^{1+\eps}n + k\log^\eps n)$ time, online in $k$,
  for any constant $\eps>0$.
\end{theorem}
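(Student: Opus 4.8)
The plan is to reuse the geometric machinery behind Theorem~\ref{theor:topk} and to fold the extra parameter $\ppar$ into it as one additional coordinate, paying the extra $\log^{1+\eps}n$ and $k\log^\eps n$ for a range tree that sits on top of that coordinate. First I would recall the geometric view underlying Theorem~\ref{theor:topk}. We build the generalized suffix tree $\cT$ of $\cD$; for a pattern $P$ of length $p$, its locus $v$ and the interval $[l_v,r_v]$ of $\cT$-leaves below $v$ are found in $O(p/\log_\sigma n)$ time by the RAM-optimal suffix tree search. For each document $d$ let $\tau_d$ be the sub-tree of $\cT$ induced by the leaves of $d$; with each node $u$ of each $\tau_d$ we associate a weighted point whose (constantly many) coordinates encode precisely the set of loci $v$ for which $u$ is the topmost $\tau_d$-node below $v$, the weight being $w$ evaluated on the fixed set of $d$-positions below $u$ — which, for any such locus $v$, equals $w(P,d)$, since then the occurrence set of $P$ in $d$ is exactly that set. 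A top-$k$ query thus becomes the geometric query ``report the $k$ heaviest points whose coordinates are compatible with $[l_v,r_v]$, in decreasing weight order, online in $k$'', which Theorem~\ref{theor:topk} answers in $O(k)$ extra time with $O(n)$ words.

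The crucial point is that $\ppar(P,d)$ enjoys exactly the same invariance as $w$: as $v$ ranges over the loci for which $u$ is the topmost $\tau_d$-node below it, the multiset of $d$-positions below $v$ — hence the occurrence set of $P$ in $d$, hence $\ppar(P,d)$ — is constant. Therefore we may precompute, for every point, the number $\rho(u,d)=\ppar$ evaluated on that fixed occurrence set, and attach it as one further coordinate (when $\ppar$ does not depend on $P$, $\rho$ is just a per-document number). The $\ppar$-restricted top-$k$ query becomes: among the points compatible with $[l_v,r_v]$ \emph{and} with $\rho\in[\tau_1,\tau_2]$, report the $k$ heaviest, decreasingly by weight, online in $k$ — the same geometric problem, one dimension higher.

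To solve this augmented problem I would place a range tree of branching factor $B=\Theta(\log^{\eps'} n)$, with $\eps'<\eps$, over the $\rho$-coordinate (first mapped to rank space $[1,n]$); it has $O(\log n/\log\log n)$ levels, and $[\tau_1,\tau_2]$ splits into $O(B\log_B n)=O(\log^{1+\eps}n)$ canonical nodes, located in $O(\log^{1+\eps}n)$ time using fractional cascading. At each node $z$ we keep, over the points whose $\rho$ falls in $z$'s slab, the geometric structure used to prove Theorem~\ref{theor:topk}, so that, given $[l_v,r_v]$, it enumerates those points in decreasing weight order, online. To answer a query we initialise this enumerator at each of the $O(\log^{1+\eps}n)$ canonical nodes and lazily merge the resulting sorted streams through a priority queue; the $k$ heaviest then emerge in $O(\log^{1+\eps}n+k\log^\eps n)$ time, paying per delivered document an $O(\log\log n)$ priority-queue step plus the $O(\log^\eps n)$ cost of advancing one stream inside a space-reduced secondary structure. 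Adding the $O(p/\log_\sigma n)$ locus search gives the claimed bound, and online-ness in $k$ is immediate.

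The main obstacle is space: copying every point into the $O(\log n/\log\log n)$ secondary structures of the range tree would use $\omega(n)$ words. I expect to resolve this in the standard way — store the actual coordinates and weights only once, in a single global structure, and let each range-tree node carry only $O(\log\log n)$-bit-per-point rank/permutation information that reroutes into the global data (summed over all levels this is $O(n\log n)$ bits, i.e.\ $O(n)$ words) — while re-implementing the Theorem~\ref{theor:topk} enumeration so that it runs, in relocatable fashion, on an arbitrary subset of points presented through such rank information; advancing it one step then costs $O(\log^\eps n)$, which is exactly why the second term is $k\log^\eps n$. A minor companion issue is ensuring that the predecessor searches needed to place $[l_v,r_v]$ inside the canonical nodes collapse, via fractional cascading down the range tree, into the $O(\log^{1+\eps}n)$ budget rather than costing an extra $\log\log n$ factor — which is why we take the branching exponent $\eps'$ strictly below $\eps$.
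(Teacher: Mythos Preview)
Your reduction to geometry is right and matches the paper: attach $\ppar(P,d)$ as a third coordinate, so the query becomes a five-sided top-$k$ query $[l_v,r_v]\times[0,depth(v))\times[\tau_1,\tau_2]$ on $O(n)$ weighted points (this is the paper's Theorem~\ref{theor:parweight}). Your outer decomposition --- a multiary tree of fan-out $\log^{\eps'}n$ over the $\ppar$ axis, merge the $O(\log^{1+\eps}n)$ canonical nodes through a priority queue --- is also essentially what the paper does (it uses a multiary wavelet tree on $\ppar$, which is exactly your ``range tree with only $O(\log\log n)$ bits per point per level'').

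The gap is in what you put at each canonical node. You propose to reuse the structure behind Theorem~\ref{theor:weight} and ``reroute into global data'' so that only $O(\log\log n)$ bits per point remain locally. But that structure (Lemma~\ref{lemma:polyn} plus the precomputed $\mytop_j$ lists) intrinsically needs $\Theta(\log m)$ bits per point --- the weight-rank arrays, the per-class rank-space reductions, and the precomputed top lists are all subset-specific and cannot be recovered from a global table plus an $O(\log\log n)$-bit permutation label. Replicating it at each of the $O(\log n/\log\log n)$ range-tree levels therefore gives $\Theta(n\log n/\log\log n)$ words, not $O(n)$. The paper avoids this with a different idea: it first slices the grid into horizontal stripes of height $r=\log^{c+1+\eps}n$ along the \emph{depth} axis (the one-sided coordinate). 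Each point lies in a single stripe, and inside a stripe the depth fits in $O(\log\log n)$ bits. Now at each wavelet-tree node (over $\ppar$) one only needs a \emph{two-dimensional} RMQ on the $x$- and depth-coordinates; since one dimension is polylogarithmic, this RMQ costs $O(\log\log n)$ bits per point and answers in $O((\log\log n)^2)$ time (Lemma~\ref{lemma:exttopk2d}). Summed over the $O(\log n/\log\log n)$ wavelet-tree levels and over all stripes, the space is $O(n)$ words; the $\lceil h/r\rceil$ stripes are then merged as in Section~\ref{sec:dynfinal}. So the missing ingredient is not ``relocate Theorem~\ref{theor:weight}'', but rather: exploit that the depth restriction is one-sided to slice it down to polylog size, so that the per-node secondary structure becomes a cheap small-universe 2D RMQ instead of a full three-sided top-$k$ enumerator.
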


Our solutions map these document retrieval problems into range search problems
on multidimensional spaces, where points in the grids have associated weights.
We improve some of the existing solutions for those problems.

An early partial version of this article appeared in {\em Proc. SODA 2012}
\cite{NN12}. This extended version includes, apart from more precise 
explanations and fixes, the improvement of the static results to achieve
RAM-optimality on the suffix tree traversal, and the new results on the dynamic
scenario. The paper is organized as follows. In Section~\ref{sec:topkfram} we
review the top-$k$ framework of Hon et al.~\cite{HSV09} and reinterpret it as
the combination of a suffix tree search plus a geometric search problem. We
introduce a RAM-optimal suffix tree traversal technique that is of independent
interest, and state our results on geometric grids, each of which is related 
to the results we achieve on document retrieval. Those can also be of 
independent interest. Sections~\ref{sec:polyn} and \ref{sec:optimal} describe
our basic static solution. Section~\ref{sec:dynamic} describes our dynamic
solution. In Section~\ref{sec:freqspace} we show how the static solution can
be modified to reduce its space requirements, and in
Section~\ref{sec:topk2dim} we show how it can be extended to support an
additional restriction on the documents sought. Finally,
Section~\ref{sec:concl} concludes and gives future work directions.

\section{Top-$k$ Framework}
\label{sec:topkfram}

In this section we overview the framework of Hon, Shah, and
Vitter~\cite{HSV09}.  Then, we describe a geometric interpretation 
of  their structure and show how
top-$k$ queries can be reduced to a special case of range reporting queries
on a grid.

Let $T$ be the generalized suffix tree \cite{Wei73,McC76,Ukk95} for a 
collection of 
documents $d_1,\ldots,d_D$, each ending with the special terminator symbol 
``\$''. $T$  is a compact trie, such that all suffixes of all documents are 
stored in the leaves of $T$. 
We denote by $path(v)$ the string obtained by concatenating the labels 
of all the edges on the path from the root to $v$. 
The \emph{locus} of a string $P$ is the highest node $v$ such that 
$P$ is a prefix of $path(v)$. Every occurrence of $P$ corresponds to 
a unique leaf that descends from its locus. We refer the reader to classical
books and surveys \cite{Apo85,Gus97,MR02} for an extensive description of this 
data structure.

We say that a leaf $l$ is {\em marked} with document $d$ if the 
suffix stored in $l$ belongs to $d$. An internal node $v$ is marked with $d$
if at least two children of $v$ contain leaves marked with $d$.  While
a leaf is marked with only one value $d$ (equal suffixes of distinct documents
are distinguished by ordering the string terminators arbitrarily), an internal 
node can be marked with many values $d$.

In every node $v$ of $T$ marked with $d$, we store a pointer
$\ptr(v,d)$ to its lowest ancestor $u$ such that $u$ is also marked
with $d$.  If no ancestor $u$ of $v$ is marked with $d$, then
$\ptr(v,d)$ points to a dummy node $\nu$ such that $\nu$ is the parent
of the root of $T$.  We also assign a weight to every pointer $\ptr(v,d)$. 
This weight 
is the relevance score of the document $d$ with
respect to the string $path(v)$. 
The following statements hold; we reprove them for completeness.

\begin{lemma}[\cite{HSV09}, Lemma 4]
  The total number of pointers $\ptr(\cdot,\cdot)$ in $T$ is bounded by $O(n)$.
\end{lemma}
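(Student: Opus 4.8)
The plan is to count, for each document $d$ separately, how many nodes of $T$ are marked with $d$, since each marked node $v$ contributes exactly one pointer $\ptr(v,d)$. Fix a document $d$ and let $n_d = |d|$ be its length, so that $d$ contributes $n_d$ leaves to the generalized suffix tree $T$ (one per suffix). I would first bound the number of \emph{leaves} marked with $d$: this is exactly $n_d$, by definition of marking for leaves. The main work is bounding the number of \emph{internal} nodes marked with $d$.

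The key observation is that an internal node $v$ is marked with $d$ only if at least two distinct children of $v$ have descendant leaves marked with $d$. Consider the subtree $T_d$ of $T$ induced by the $n_d$ leaves marked with $d$ together with all their ancestors, and then contract every unary path in $T_d$ (i.e.\ take the ``Steiner tree'' of these leaves inside $T$). In this contracted tree, every internal node has at least two children, and crucially the internal nodes of this contracted tree are in one-to-one correspondence with the internal nodes of $T$ that are marked with $d$: a node $v$ of $T$ is marked with $d$ precisely when it is a branching node of the set of $d$-marked leaves, which is exactly when it survives the contraction as an internal node. Since a rooted tree with $n_d$ leaves in which every internal node has at least two children has at most $n_d - 1$ internal nodes, there are at most $n_d - 1$ internal nodes of $T$ marked with $d$.

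Summing over all documents, the total number of marked (node, document) pairs, and hence the total number of pointers, is at most $\sum_{d} (n_d + (n_d-1)) \le 2\sum_d n_d = 2n = O(n)$. I would also remark that the dummy node $\nu$ adds only a negligible constant, so the bound is unaffected. The only subtle point — and the step I would be most careful about — is justifying the bijection between $d$-marked internal nodes of $T$ and branching nodes of the induced leaf set: one must check that ``$v$ has $\ge 2$ children with $d$-marked descendant leaves'' is equivalent to ``$v$ is a branching point after contracting unary paths,'' which follows because a child with at least one $d$-marked descendant leaf contributes at least one branch below $v$ in the contracted tree, and conversely. Everything else is the elementary fact that a branching tree on $\ell$ leaves has fewer than $\ell$ internal nodes.
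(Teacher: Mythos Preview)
Your proof is correct and follows essentially the same approach as the paper: fix a document $d$, bound the number of $d$-marked internal nodes by the number of $d$-marked leaves (which is $|d|$), and sum over documents. The paper's own proof simply asserts that the number of internal nodes marked with $d$ is smaller than the number of leaves marked with $d$; your contracted/Steiner-tree argument is just a more explicit justification of that same inequality.
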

\begin{proof}
  The total number of pointers $\ptr(v,d)$, $v\in T$, does not exceed
  the number of nodes marked with $d$.  The total number of internal
  nodes marked with $d$ is smaller than the number of leaves marked
  with $d$.  Since there are $O(|d|)$ leaves marked with $d$, the
  total number of pointers $\ptr(v,d)$ for a fixed document $d$ is
  bounded by $O(|d|)$, and those $|d|$ add up to $n$.
\end{proof}

\begin{lemma}[\cite{HSV09}, Lemma 2] \label{lemma:uniqptr}
  Assume that document $d$ contains a pattern $P$ and $v$ is the
  locus of $P$. Then there exists a unique pointer $\ptr(u,d)$, such
  that $u$ is in the subtree of $v$ (which includes $v$) and $\ptr(u,d)$ points to an
  ancestor of $v$. 
\end{lemma}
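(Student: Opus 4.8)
The plan is to pick out, inside the subtree $T_v$ rooted at the locus $v$, one canonical node $w$ that ``represents'' $d$ there; to show that $\ptr(w,d)$ is a pointer of the required kind; and then to show that for every other node $u$ of $T_v$ marked with $d$, the pointer $\ptr(u,d)$ stays inside $T_v$ and so fails the requirement. Recall that ``ancestor'' here means \emph{proper} ancestor, so ``$\ptr(u,d)$ points to an ancestor of $v$'' means its target lies strictly above $v$, equivalently outside $T_v$ (possibly at the dummy node $\nu$); this strict reading is needed for uniqueness to hold, since if $v$ itself were marked with $d$ then, besides $\ptr(v,d)$, some proper descendant of $v$ in $T_v$ marked with $d$ would have a pointer whose target is exactly $v$.

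Since $d$ contains $P$ and $v$ is the locus of $P$, at least one leaf of $T_v$ is marked with $d$, namely the leaf of any occurrence of $P$ in $d$. Let $w$ be the lowest common ancestor of all leaves of $T_v$ marked with $d$; then $w\in T_v$, possibly $w=v$. I would record two facts. (i)~$w$ is marked with $d$: if $T_v$ has at least two leaves marked with $d$, they fall into at least two distinct child subtrees of $w$, so $w$ has two children carrying leaves marked with $d$; if there is a single such leaf, then $w$ equals that leaf, and leaves are marked by definition. (ii)~$w$ is a (weak) ancestor of every node of $T_v$ marked with $d$: any such node $x$ contains a leaf marked with $d$, which lies in $T_v$ and hence descends from $w$, making $x$ and $w$ comparable; and $x$ cannot be a proper ancestor of $w$, for otherwise all leaves of $T_v$ marked with $d$ — all descendants of $w$ — would lie in a single child subtree of $x$, contradicting that $x$ is marked with $d$. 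Thus $w$ is the unique topmost node of $T_v$ marked with $d$.

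For existence I take $u=w$. It suffices to show that no node $x$ that is a proper ancestor of $w$ and a (weak) descendant of $v$ is marked with $d$: such an $x$ satisfies $T_w\subseteq T_x\subseteq T_v$, and since every leaf of $T_v$ marked with $d$ descends from $w$, all of them lie in the one child subtree of $x$ that contains $w$, so $x$ is not marked with $d$. Hence the lowest proper ancestor of $w$ marked with $d$ — which is exactly the node $\ptr(w,d)$ points to — lies strictly above $v$, or equals $\nu$ if no proper ancestor of $w$ is marked with $d$. When $w=v$ this just says $\ptr(v,d)$ points to a proper ancestor of $v$, true by definition.

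For uniqueness, let $u\in T_v$ be marked with $d$ with $u\neq w$. By fact~(ii), $w$ is a proper ancestor of $u$, and $w$ is itself marked with $d$ and lies in $T_v$; hence the lowest proper ancestor of $u$ marked with $d$ — the target of $\ptr(u,d)$ — is a (weak) descendant of $w$, so it lies inside $T_v$ and is not a proper ancestor of $v$. Therefore $w$ is the only node of $T_v$ whose $d$-pointer points above $v$. The step I expect to be the main obstacle is fact~(ii), the ``topmost marked node'' property of $w$; once that is available, both existence and uniqueness reduce to the single observation that all leaves of $T_v$ marked with $d$ hang below $w$, which simultaneously keeps the nodes strictly between $w$ and $v$ unmarked and forces every other marked node's $d$-pointer to remain below $w$.
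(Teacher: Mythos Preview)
Your proof is correct and follows essentially the same approach as the paper: both arguments hinge on identifying the unique topmost $d$-marked node in $T_v$ and observing that its pointer is the only one that escapes above $v$. The paper reaches this node implicitly, arguing by contradiction that two incomparable maximal marked nodes would have a marked LCA inside $T_v$, whereas you construct it directly as the LCA $w$ of all $d$-marked leaves in $T_v$; your version is more explicit (in particular your fact~(ii) spells out what the paper leaves to the reader), but the underlying idea is the same.
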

\begin{proof}
  If $d$ contains $P$ then there is at least one leaf $u$ marked $d$ below the 
locus of $P$, with a pointer $\ptr(u,d)$. If there are two maximal (in the
sense of ancestorship) nodes $u$ and $u'$ below $v$ with pointers $\ptr(u,d)$
and $\ptr(u',d)$, then their lowest common ancestor $v'$ is also
marked. Since $v$ is an ancestor of $u$ and $u'$, $v$ is $v'$ or an ancestor 
of $v'$ and then $\ptr(u,d)$ and $\ptr(u',d)$ must point to $v'$, not to an 
ancestor of $v$. Finally, if $u$ is a unique maximal node with $\ptr(u,d)$
(not $u$ might be $v$), then it must point an ancestor of $v$.
\end{proof}

Moreover, in terms of Lemma~\ref{lemma:uniqptr}, it turns out that
$path(u)$ occurs in $d$ at the same positions as $path(v)$.
 Note that the starting positions
of $P$ and of $path(v)$ in $d$ are the same, since $v$ is the locus of $P$,
and those are the same as the starting positions  of $path(u)$ in $d$. 
Thus $w(P,d)=w(path(u),d)$ for any measure $w(\cdot,\cdot)$ considered
in Theorem~\ref{theor:topk}. 

\paragraph{RAM-optimal suffix tree traversal.} To achieve time $O(p)$ for 
the locus search in the suffix tree while retaining linear space, one needs 
to organize the children of each node in a perfect hash function (phf)
\cite{FKS84}.
In order to reduce this time to the RAM-optimal $O(p/\log_\sigma n)$, we
proceed as follows. Let $l(u,v)$ be the concatenation of string labels from
node $u$ to its descendant $v$. We collect in a phf $H(u)$,
for the suffix tree root $u$, all the highest descendants $v$ such that
$|l(u,v)| \ge \ell = \log_\sigma n$. Those nodes $v$ are indexed with a
key built from the first $\ell$ symbols of $l(u,v)$ interpreted as a number
of $\lg n$ bits. 
We build recursively phfs for
all the descendants $u$ identified. Since each suffix tree node is included
in at most one hash table, the total size is $O(n)$ and the total deterministic
construction time is $O(n\log\log n)$ \cite{Ruz08}. 
Now $P$ is searched for as follows.
We take its first chunk of $\ell$ characters, interpret it as a number, and
query the phf of the root. If no node $u$ is found for that prefix of
$P$, then $P$ is not in the collection. Otherwise, the string depth of $u$ is
$\ge \ell$. We check explicitly the extra $|l(u,v)|-\ell$ symbols in the text,
by comparing chunks of $\ell$ symbols. If there is a match, we continue with
the next $\ell$ unread symbols of $P$, and so on, until there are less than
$\ell$ symbols to match in the remaining suffix $P'$ of $P$. At this point we 
switch to using a {\em weak prefix search (wps)} data structure \cite{BBPV10}: 
For each node $u$ holding a phf, we also store this wps data 
structure with all the nodes $v$ that descend from $u$ where $|l(u,v)| < \ell$.
The wps data structure will return the lexicographic range of the nodes of 
which $P'$ is a prefix (here we identify nodes $v$ with the strings $l(u,v)$).
The first in that range is the locus of $P$. One detail is that, if there is
no such node (i.e., $P$ has no locus), the wps structure returns an arbitrary
value, but this can be easily checked in optimal time in the text. The wps
structure requires, for our length $|P'| < \ell$ and in the RAM model, $O(1)$
query time, and can use, for the strings we store, of length $< \ell$, 
$O(\sqrt{\log n} \log\log n)$ bits, or $o(1)$ words, per stored node
\cite[Thm.~6]{BBPV10}. 
Once again, each node
is stored only in one wps structure, so the overall extra space is linear at
worst. The wps construction is $O(n\log^\eps n)$ randomized time. It can 
be made deterministic $O(n\,\textrm{polylog}(n))$ time by using a phf inside 
the construction \cite{AN96}. By replacing the wps structure by layered phfs
for $(\lg_\sigma n)/2^i$ symbols, we would have an additive term 
$O(\log\log_\sigma n)$ in the query time.

\paragraph{Geometric interpretation.}  We index the nodes of $T$ in the
following way: All nodes of $T$ are visited in pre-order; we also initialize
an index $i \leftarrow 0$. When a node $v$ is visited, if $v$
is marked with values $d_{v_1},\ldots,d_{v_j}$, we assign indexes
$i+1,\ldots,i+j$ to $v$ and set $i \leftarrow i+j$.  
We will denote by $[l_v,r_v]$ the integer
interval bounded by the minimal and maximal indexes assigned to $v$ or its
descendants.  Values $l_v$ and $r_v$ are stored in node $v$
of $T$.  Furthermore, for every $d_{v_t}$, $1\leq t\leq j$,
there is a pointer $\ptr(v,d_{v_t})$ that points to some ancestor
$u_t$ of $v$.  We encode $\ptr(v,d_{v_t})$ as a point $(i+t,depth(u_t))$,
where $depth$ denotes the depth of a node; $depth(\nu)=0$.
Thus every pointer in $T$ is encoded as a two-dimensional point on an
integer $O(n)\times O(n)$ grid. The weight of a point $p$ is that of the
pointer $p$ encodes.  We observe that all the points have different
$x$-coordinates.  Thus we obtain a set $S$ of weighted points with
different $x$-coordinates, and each point corresponds to a unique
 pointer.

For the final answers we will need to convert the $x$-coordinates of points 
found on this grid into document numbers. We store a global array of size
$O(n)$ to do this mapping.

\paragraph{Answering queries.}  Assume that top-$k$ documents containing a
pattern $P$ must be reported. We find the locus $v$ of $P$ in $O(p/\log_\sigma n)$ 
time.  By Lemma~\ref{lemma:uniqptr}, there is a unique pointer
$\ptr(u,d)$, such that $u$ is a descendant of $v$ (or $v$ itself) and $\ptr(u,d)$
points to an ancestor of $v$, for every document $d$ that contains
$P$. Moreover the weight of that point is $w(P,d)$. 
Hence, there is a unique point $(x,y)$ with $x\in [l_v,r_v]$ and
$y\in [0,depth(v)-1]$ for every document $d$ that contains $P$.  Therefore,
reporting top-$k$ documents is equivalent to the following query:
among all the points in the three-sided range $[l_v,r_v]\times [0,depth(v))$,
report $k$ points with highest weights.  We will call such queries
{\em three-sided top-$k$ queries}. In Sections~\ref{sec:polyn} 
and~\ref{sec:optimal} we prove the following result.
Theorem~\ref{theor:topk} is an immediate corollary of it,
as $h=depth(v)-1$ and $depth(v) \le p$, and we can choose $c \ge 1$.

\begin{theorem}\label{theor:weight}
  A set of $n$ weighted points on an $n\times n$ grid can be stored in
  $O(n)$ words of space and built in $O(n\log n)$ time, 
  so that for any $1\leq k,h\leq n$ and
  $1\leq a\leq b\leq n$, $k$ most highly weighted points in the range
  $[a,b]\times [0,h]$ can be reported in decreasing order of their weights
  in $O(h/\log^c n+k)$ time, for any constant $c$. 
\end{theorem}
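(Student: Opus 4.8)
The plan is to build the structure in two layers. The *outer* layer handles the "$y \le h$" slab restriction by a hierarchy of horizontal strips of geometrically increasing height; the *inner* layer handles, within each strip, the three-sided range top-$k$ reporting "$x \in [a,b]$, all $y$" by a standard priority-search-tree–style / heap-ordered structure over the $x$-coordinates. Concretely, for $j = 0,1,\dots,\log_c n$ (using base $\log^{c}n$), I would define strip $j$ to consist of all points with $y < \log^{cj} n$ (so strip $0$ has height $1$, and the last strip is the whole grid). For each strip I store a linear-space data structure that answers, online and in decreasing weight order, three-sided top-$k$ queries of the form $[a,b] \times (-\infty, +\infty)$ restricted to that strip: this is a range-maximum / heap-selection problem on a set of weighted points on a line, which admits an $O(n)$-word, $O(n\log n)$-build structure supporting $O(1 + k)$ online reporting (e.g.\ via a wavelet-tree / range-tree on $x$ with a max-heap priority on weight at each node, doing a Frederickson-style heap-selection, or via the classical priority search tree augmented for top-$k$). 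Summed over all $O(\log_c n / 1) = O(\log n / \log\log n)$ strips the space is $O(n \log n / \log\log n)$ words, which is too much, so the refinement is to put a point into the structure of strip $j$ only if it was *not* already in strip $j-1$, i.e.\ strip $j$ stores only points with $\log^{c(j-1)} n \le y < \log^{cj} n$; then each point lies in exactly one strip structure and the total space is $O(n)$ words and build time $O(n\log n)$.

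To answer a query $[a,b] \times [0,h]$ with parameter $k$, first locate the strip index $j^{*}$ with $\log^{c(j^{*}-1)} n \le h < \log^{c j^{*}} n$; this is $O(h/\log^{c} n)$ if we compute it by (the analogue of) iterated discrete logarithm, or $O(1)$ with a small precomputed table, in any case within the claimed bound. The points of the answer all lie in strips $0,1,\dots,j^{*}$. For strips $0,\dots,j^{*}-1$ the constraint $y \le h$ is automatically satisfied (their whole height is $< \log^{c(j^{*}-1)}n \le h$), so we just need the top of each of those $j^{*}$ three-sided structures over $[a,b]$; since $j^{*} = O(\log_{\log^{c}n} h) = O(h/\log^{c}n)$ (indeed much smaller), we may afford to touch each of them. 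For strip $j^{*}$ itself the points have $y$ in $[\log^{c(j^{*}-1)}n, \log^{cj^{*}}n)$, so only a prefix of that $y$-range satisfies $y \le h$; to keep the space linear I would not pre-store a nested structure but instead recurse: the structure of each strip is itself an instance of Theorem~\ref{theor:weight} on its own (rescaled) $y$-range, giving a recursion of depth $O(\log^{*} n)$ or, more simply, observe that we may just rebuild strip $j^{*}$'s top-$k$ query as a three-sided query with the *reduced* height $h - \log^{c(j^{*}-1)}n < \log^{cj^{*}}n - \log^{c(j^{*}-1)}n$, recursing into a strictly smaller instance; the recursion bottoms out in $O(h/\log^{c}n)$ levels total, contributing $O(h/\log^{c}n)$ to the time. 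Finally I merge the candidate streams from the (at most $j^{*}+1$) structures consulted using an auxiliary max-heap keyed by weight: repeatedly extract the current global maximum, output it, and pull the next element from the structure it came from (each such "pull" is $O(1)$ amortized in the line structures). This yields the first $k$ answers in decreasing weight order in $O(h/\log^{c}n + k)$ total time, and it is online in $k$ because the heap-merge and each underlying structure produce elements one at a time.

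The last ingredient is the linear-space, $O(n\log n)$-time line structure for three-sided (here, unbounded-$y$) online top-$k$ reporting over an $x$-interval. I would take a balanced range tree on the $x$-coordinates; at each node store, over its canonical subset of points, a pointer to the maximum-weight point (and recursively the top-$O(1)$), so that a query $[a,b]$ decomposes into $O(\log n)$ canonical nodes whose maxima seed a max-heap; then Frederickson/Eppstein-style heap selection over the implicit "tournament" forest produces the $k$ largest weights in $O(k)$ additional time after an $O(\log n)$ setup — but an $O(\log n)$ additive cost per strip would total $O(\log^2 n/\log\log n)$, too large, so instead I use the sorted/fractional-cascading trick or, better, replace the range tree by a single global structure per strip that is a *heap-ordered* Cartesian-tree-like layout on $x$ keyed by weight (a priority search tree), for which a three-sided query reports in $O(1 + k)$ time with no $\log$ overhead; such priority search trees use $O(n)$ words and build in $O(n\log n)$ time. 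With a per-strip query cost of $O(1 + k_j)$ (where $k_j$ is the number of points ultimately drawn from strip $j$, and $\sum_j k_j \le k + O(j^{*})$ since we may draw one "lookahead" element per consulted strip), the bound $O(h/\log^{c} n + k)$ follows.

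I expect the main obstacle to be making the *two constraints interact without blowing up the space*: a naive product structure (a range tree on $y$ whose nodes each carry a priority search tree on $x$) would use $O(n\log n)$ words, so the crux is the disjoint-strip decomposition together with the observation that the residual sub-strip query can be folded back into a smaller instance of the same problem, which is exactly where the "$h/\log^{c}n$" additive term — rather than a "$\log n$" term — comes from, and where the choice of base $\log^{c}n$ for the strip heights is forced. The second delicate point is ensuring the merging heap over the consulted strip structures is correctly ordered and remains $O(k)$ overall; this is routine Frederickson/Eppstein heap-selection bookkeeping once each underlying structure is genuinely online in its own $k_j$.
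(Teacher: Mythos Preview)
Your high-level instinct---decompose along one axis and keep a per-piece structure---is reasonable, but the concrete plan has several genuine gaps that prevent it from reaching the claimed $O(h/\log^c n + k)$ bound.

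\textbf{The per-strip structure.} You appeal to a ``priority search tree'' to output, for an $x$-interval $[a,b]$, the $k$ heaviest points of a strip in decreasing weight order in $O(1+k)$ time. A priority search tree solves a different problem: it reports \emph{all} points in a three-sided range in $O(\log n+\occ)$ time; it does not do top-$k$ by an independent weight. What you actually need is one-dimensional interval top-$k$ (points on a line with weights; query $[a,b]$; output the $k$ heaviest, sorted). With a range-maximum structure plus the split-and-recurse you sketch, each extracted maximum creates two subintervals; maintaining them in a heap is $\Theta(\log k)$ per output, hence $\Theta(k\log k)$. Frederickson's selection gives the $k$ heaviest in $O(k)$ but \emph{unsorted}, and you still have to sort $k$ arbitrary $\Theta(\log n)$-bit weights, which is not $O(k)$ for small $k$. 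So $O(1+k)$ sorted online per strip is not established.

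\textbf{The merge.} Even granting $O(1+k_j)$ sorted per strip, fusing $j^*$ sorted streams with a max-heap costs $\Theta(\log j^*)$ per extraction, i.e.\ $\Theta(k\log j^*)$. Since $j^*$ can be $\Theta(\log n/\log\log n)$, this is $\Theta(k\log\log n)$, not $O(k)$.

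\textbf{The partial strip.} For strip $j^*$ you propose to ``recurse into a strictly smaller instance''. But then every strip must itself be a full instance of the theorem, so each point participates in $\Theta(\log^* n)$ nested structures and space is no longer linear; you give neither a base case nor a space accounting for this recursion. (Your remark that it ``bottoms out in $O(h/\log^c n)$ levels'' confuses a query-dependent quantity with the structure's recursion depth.)

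The paper's proof avoids all three issues by a quite different decomposition. It partitions along the \emph{$x$}-axis into vertical stripes of doubly-exponentially shrinking widths $\Delta_j=n^{1/2^j}\log^2 n$, and for each stripe builds the structure of Lemma~\ref{lemma:polyn}, which answers a \emph{full} three-sided top-$k$ query, already sorted, in $O(m^f+k)$ time using $O(m\log m)$ bits (sorted output comes from a weight-based recursive partition plus radix sort when $k$ is large). At query time a \emph{single} level $j$ is selected as a function of both $h$ and $k$ so that $\Delta_j^f=O(h/\log^c n)$; then $[a,b]$ meets at most two end stripes plus a middle run that is handled by precomputed sparse-table lists $\mytop_j(\cdot,\cdot,\cdot,\cdot)$. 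Only a constant number of sorted length-$k$ lists are merged, so the merge is genuinely $O(k)$. The ideas you are missing are precisely (i) choosing the decomposition granularity from the query so that only $O(1)$ pieces are touched, and (ii) the $O(m^f+k)$ sorted building block of Lemma~\ref{lemma:polyn}, which is where the sorting comes for free.
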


\paragraph{Top-$k$ queries on dynamic collections.}
The static suffix tree is replaced by a dynamic one, with search time 
$O(p(\log\log n)^2/\log_\sigma n+\log n)$ 
and update time $O(\log n)$ per symbol. 
We must also update the grid, for which
we must carry out lowest common ancestor queries on the dynamic suffix tree and
also insert/delete points (and columns) in the grid. We split the grid of 
Theorem~\ref{theor:weight} into horizontal stripes of height 
$m=\mathrm{polylog}~n$ to obtain
improved performance, and query the highest $\lceil p/m\rceil$ of those 
grids. In the most general case (i.e., the last grid) we carry out a 
three-sided top-$k$ query. We address this in Section~\ref{sec:dynamic},
where in particular we prove the following result on dynamic grids.
Theorem~\ref{theor:dyntopk} is then obtained by combining those results.

\begin{theorem}\label{theor:dynweight}
  A set of $n$ points, one per column on an $n\times n$ grid, 
  with weights in $[1,O(n)]$, can be stored in
  $O(n)$ words of space, so that for any $1\leq k \le n$, $1 \le h\leq n$ and
  $1\leq a\leq b\leq n$, $k$ most highly weighted points in the range
  $[a,b]\times [0,h]$ can be reported in 
  decreasing order of their weights in $O(h/\log^c n + \log n + 
  k\log\log k)$ time, online in $k$, for any constant $c$. 
  Points (and their columns) can be inserted and deleted 
  in $O(\log^{1+\eps} n)$ time, for any constant $\eps>0$.
\end{theorem}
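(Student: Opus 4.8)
The plan is to build a two-level decomposition of the grid combined with the dynamic three-sided machinery used in the static Theorem~\ref{theor:weight}. First I would partition the $n\times n$ grid into $O(n/m)$ horizontal stripes of height $m=\mathrm{polylog}\,n$, numbered from the bottom. A three-sided query $[a,b]\times[0,h]$ then spans a prefix of $\lceil h/m\rceil$ whole stripes plus, possibly, a partial top stripe of height $<m$. The key structural idea is to maintain, for each stripe $i$, a data structure $\cD_i$ supporting \emph{top-$k$ range reporting} over the $x$-interval $[a,b]$ among the points inside that stripe (an ordinary three-sided query where the $y$-side is the whole stripe), and additionally a global priority-search-tree-like structure over the \emph{per-stripe maxima} so that, given $[a,b]$, we can enumerate stripes in decreasing order of their maximum weight in $[a,b]$. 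To answer a query we run the classic incremental-heap merge: we pull candidates from the relevant $\cD_i$'s and the partial top stripe, always expanding the source whose current best is largest, stopping after $k$ outputs. Since only $\lceil h/m\rceil$ whole stripes are touched and $m=\mathrm{polylog}\,n$, the number of active sources is $O(h/\mathrm{polylog}\,n)=O(h/\log^c n)$ for suitable $c$; each heap operation costs $O(\log\log k)$ using an atomic-heap / $q$-heap over the $O(k)$ active weights (weights are in $[1,O(n)]$, so they fit the word and can be sorted/compared in batches), giving the $O(h/\log^c n+\log n+k\log\log k)$ bound, with the additive $\log n$ absorbing the initial locating of stripe boundaries and the setup of the global maxima query.

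The next step is to realize the per-stripe structures $\cD_i$. A stripe has height $m=\mathrm{polylog}\,n$, so a point inside it is described by its column in $[1,n]$ and a height in $[1,m]$ needing only $O(\log\log n)$ bits. I would represent each $\cD_i$ as a balanced search tree over the occupied columns of that stripe, each node storing the maximum weight in its subtree together with the height at which it is attained; a three-sided top-$k$ query $[a,b]$ within the stripe is then the standard ``report heaviest points in a key range in sorted order'' query, answered again by incremental heap extraction in $O(\log n + k'\log\log k')$ time for $k'$ outputs. Insertion or deletion of a point updates one leaf and the $O(\log n)$ maxima on its root path, in $O(\log n)$ time. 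The partial top stripe of a query is handled identically by one such $\cD_i$ restricted to heights $\le h\bmod m$, which only requires that the per-node maxima be refinable by a height threshold --- achievable by bucketing each stripe's $m$ heights into $O(\log m)=O(\log\log n)$ levels and storing a maximum per level per node, at a multiplicative $O(\log\log n)$ space and update cost that is still within budget; alternatively one recurses the whole construction one more level with stripe height $\sqrt m$, which is cleaner and gives the same asymptotics.

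For the global ``stripes ordered by max weight in $[a,b]$'' structure, I would keep, per stripe $i$, a priority search tree (or a range-max structure) $\cM_i$ over its columns returning $\max\{w : x\in[a,b], (x,\cdot)\in\text{stripe }i\}$ in $O(\log n)$ time, and then treat the $O(n/m)$ stripe-maxima as the leaves of an outer priority search tree keyed by column ranges --- but since the maximum of a stripe over $[a,b]$ is itself a range query, the clean route is simply: for the $O(h/\log^c n)$ relevant stripes, query each $\cM_i$ once to seed the outer heap, at total cost $O((h/\log^c n)\log n)$; this is dominated by the required $O(h/\log^c n)$ term only if we push the $\log n$ factor into the choice of $m$ (take $m=\log^{c+1}n$, so $O(h/\log^{c+1}n)$ stripes times $\log n$ gives $O(h/\log^c n)$), which is exactly why the stripe height is chosen to be polylogarithmic rather than constant. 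Updates to $\cM_i$ on an insert/delete are $O(\log n)$.

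\textbf{Main obstacle.} The delicate point is the $O(\log^{1+\eps} n)$ update bound: a single point insertion that also inserts a new \emph{column} must be threaded through $\cD_i$, $\cM_i$, and the atomic-heap auxiliary structures while those data structures themselves must support $O(\log\log n)$-time predecessor/heap operations, which generally needs a weak-AF / exponential-search-tree layer whose rebuilding is amortized. Reconciling exact (worst-case-ish) polylog updates with the $O(\log\log k)$ query step --- i.e., choosing dynamic predecessor and dynamic priority-queue structures over a universe of size $O(n)$ that simultaneously give $o(\log n)$ query time and $O(\log^{1+\eps}n)$ update time --- is where the $\eps$ in the statement is spent, and is the part of the argument I expect to require the most care (it is essentially an invocation of known dynamic-predecessor and dynamic-RMQ-over-polylog-blocks results, but the composition must be checked). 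Everything else is a routine combination of priority search trees, the incremental heap-merge paradigm already used for Theorem~\ref{theor:weight}, and the stripe trick for bounding the number of active sources.
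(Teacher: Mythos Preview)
Your high-level decomposition --- horizontal stripes of height $m=\mathrm{polylog}\,n$, one three-sided query per relevant stripe, results merged through a Thorup-style heap, and $m$ chosen as $\log^{c+1}n$ so that $O(h/m)$ seeding calls at $O(\log n)$ each stay within $O(h/\log^c n)$ --- is exactly the paper's outer architecture (Section~5.5, combined with the per-stripe Lemma~6).

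The gap is the per-stripe structure $\cD_i$. You assert that a balanced BST over columns with subtree maxima answers top-$k$ on an $x$-range in $O(\log n + k'\log\log k')$ via ``incremental heap extraction.'' This is not true as stated: once you extract the maximum of a canonical subrange you must produce the \emph{next} maximum of that same subrange, and with only per-subtree maxima this costs $\Theta(\log n)$ (you have to split the range around the extracted point and recompute), giving $O(k\log n)$, not $O(k\log\log k)$. A priority-search-tree variant with the heap order on weights would give constant-time ``next'' for the full-stripe case, but then the partial stripe forces a simultaneous $y$-threshold, which a single PST does not support. Your height-bucketing fix stores $\Theta(\log\log n)$ words per node and is superlinear; your ``recurse at $\sqrt m$'' fix seeds $\Theta(\sqrt m)$ sub-stripes at $\Theta(\log n)$ each for the single partial stripe, overshooting the $O(\log n)$ budget.

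The paper's per-stripe structure (Sections~5.3--5.4) is precisely the missing idea. For a slim grid of height $r=\lg^\eps n$ it keeps a B-tree of arity $\rho=\lg^\eps n$ in which every node $v$ carries an array $W_v[0..r{-}1]$: $W_v[y]$ is the heaviest point of $y$-coordinate $y$ in $v$'s subtree that is not already claimed by an ancestor's $W$-array. Because $\rho r=o(\log n)$, the rank summaries $\Yx_u,\Yw_u$ of a node's children fit in $o(\log n)$ bits and universal tables return ``the $(k{+}1)$-th heaviest entry with $y<h$ among children $v_s,\ldots,v_e$'' in $O(1)$ time; that is what makes each extraction cost only the heap's $O(\log\log k)$. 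Updates propagate displaced $W$-entries down (\emph{reinsertion}) or pull replacements up (\emph{uninsertion}); Fleischer's deamortization bounds node splits/merges to one per operation, and a split/merge triggers $O(r)$ uninsertions of depth $O(\log_\rho n)$, giving the $O(r\log_\rho n\log(\rho r))=O(\log^{1+\eps}n)$ update time. The partial stripe is then handled by stacking a constant number $c$ of such slim grids in an $r$-ary multiresolution tree (Section~5.4), so the $y$-range $[0,h)$ decomposes into $c=O(1)$ slim-grid queries rather than $\sqrt m$ of them. You correctly identified the update/extraction tension as the crux; its resolution is this specific $W$-array B-tree with sublogarithmic-width universal tables, not a black-box dynamic-RMQ citation.
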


\paragraph{Parameterized top-$k$ queries.}
We use the same geometric interpretation as described above, but now
each pointer $\ptr(v,d)$ is also
associated with the parameter value $\ppar(path(v),d)$.  
We encode a pointer $\ptr(v,d_{v_t})$ as a three-dimensional 
point $(i+t,depth(u_t),\ppar(path(v),d_{v_t}))$, where $i$, $t$, and $u_t$ 
are defined as in the case of nonparameterized top-$k$ queries.
All the documents that contain a pattern $P$ (with locus $v$) and satisfy 
$\tau_1\leq \ppar(P,d)\leq \tau_2$ correspond to unique points 
in the range $[l_v,r_v]\times [0,depth(v))\times [\tau_1,\tau_2]$. 
Hence, reporting top-$k$ documents with $\ppar(P,d)\in [\tau_1,\tau_2]$ 
is equivalent to reporting top-$k$ points in a three-dimensional range. 
The following result is proved in Section~\ref{sec:topk2dim}, and
Theorem~\ref{theor:partopk} is an immediate corollary of it, choosing any
$c \ge 1$.

\begin{theorem}\label{theor:parweight}
  A set of $n$ weighted points on an $n\times n\times n$ grid can be stored in
  $O(n)$ words of space, so that for any $1\leq k,h\leq n$,
  $1\leq a\leq b\leq n$, and $1\leq \tau_1\leq \tau_2 \leq n$, 
  $k$ most highly weighted points in the range
  $[a,b]\times [0,h]\times [\tau_1,\tau_2]$ can be reported in 
  decreasing order of their weights in $O(h/\log^c n+\log^{1+\eps} n + 
  k\log^\eps n)$ time, for any constants $c$ and $\eps>0$.
\end{theorem}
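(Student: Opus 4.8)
The plan is to build a three-level recursive structure that peels off the dimensions one at a time, reducing Theorem~\ref{theor:parweight} first to Theorem~\ref{theor:weight} and then handling the extra coordinate $[\tau_1,\tau_2]$ via a standard range-tree-style decomposition. Concretely, I would first build a weight-balanced binary search tree $\mathcal{Z}$ over the $n$ possible values of the third ($\ppar$) coordinate. Each node $z$ of $\mathcal{Z}$ owns the subset $S_z$ of points whose third coordinate lies in the subtree interval of $z$; as usual, the query range $[\tau_1,\tau_2]$ decomposes into $O(\log n)$ canonical nodes $z_1,\ldots,z_m$ whose intervals partition $[\tau_1,\tau_2]$, so that answering the top-$k$ query over $[a,b]\times[0,h]\times[\tau_1,\tau_2]$ amounts to merging the outputs of $m = O(\log n)$ two-dimensional three-sided top-$k$ queries, one per canonical node, each on the point set $S_z$ projected onto its first two coordinates. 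Since $\sum_z |S_z| = O(n\log n)$, storing a structure from Theorem~\ref{theor:weight} at every node would cost $O(n\log n)$ words, which is too much; to get back to $O(n)$ words I would only store the full Theorem~\ref{theor:weight} structure at nodes of $\mathcal{Z}$ whose subtree contains $\ge \log^{1+\eps} n$ points (there are $O(n/\log^{1+\eps} n)$ such ``heavy'' canonical-ancestor levels along any root-to-leaf chain, but more carefully: charge $|S_z|$ words and observe that along the $O(\log n)$ canonical nodes of a query at most... ) — actually the cleaner route is to truncate $\mathcal{Z}$ at leaves of size $\Theta(\log^{1+\eps} n)$ so it has $O(n/\log^{1+\eps} n)$ leaves, put a Theorem~\ref{theor:weight} structure at each of the $O(n/\log^{1+\eps} n)$ internal nodes on the sets $S_z$ (total $O(n/\log^{1+\eps}n \cdot \log n \cdot \log^{1+\eps} n) = O(n\log n)$ — still too much), so instead one must store, at each node $z$, the Theorem~\ref{theor:weight} structure only on a sampled/merged subset, or exploit that the per-node grid can be shrunk to $|S_z|\times|S_z|$ by rank-reduction and that the $x$-coordinate universe at $z$ has size $|S_z|$.

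The key trick to fit in linear space is rank space reduction at every node of $\mathcal{Z}$ together with a fractional-cascading-like pointer scheme: at node $z$ we store the points of $S_z$ sorted by $x$-coordinate, remember for each its rank among $S_z$, and store a predecessor structure (a $y$-fast trie or just the Theorem~\ref{theor:weight} layout's own $x$-index) so that a query interval $[a,b]$ in the global universe is mapped to a rank interval $[a_z,b_z]$ in $O(\log\log n)$ or $O(1)$ time; then Theorem~\ref{theor:weight} applied to the $|S_z|\times|S_z|$ grid gives the three-sided top-$k$ on $S_z$ in time $O(h/\log^c n + k_z)$ where $k_z$ is the number of points we actually draw from $z$. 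The space at $z$ is $O(|S_z|)$ words, summing to $O(n\log n)$ — so the remaining and essential idea is the one the paper already hints at in the dynamic section: split by the $y$-coordinate into $O(p/m)$ horizontal stripes of height $m=\mathrm{polylog}\,n$, store the $\ppar$-range-tree only inside the bottom stripe (where $h$ can be genuinely small) and, for the $O(h/m)$ full stripes, store a much cheaper structure — e.g.\ for each full stripe a single persistent/merge structure that answers ``top-$k$ over the whole stripe with $\ppar\in[\tau_1,\tau_2]$'' without the $[a,b]$ restriction being expensive, because the $x$-range within a stripe can be handled by a wavelet-tree-like or priority-search-tree-like object in $O(\log n + k\log^\eps n)$ total. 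I would design the full stripes so their total size is $O(n)$ (each point lies in exactly one stripe at the level where its $y$ is resolved), and only the special ``last'' stripe of height $<m$ carries the expensive three-dimensional apparatus, which then contributes the $O(h/\log^c n)$ term (choosing $m$ a suitable power of $\log n$) and the $\log^{1+\eps} n$ term from the $O(\log n)$ canonical $\ppar$-nodes each costing $O(\log^\eps n)$ to enter.

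Once the decomposition is in place, the top-$k$ extraction itself is the usual atomic-heap / tournament argument: each of the $O(\log n)$ canonical $\ppar$-nodes (resp.\ each of the $O(h/m)$ stripes plus the last one) is a ``stream'' that produces its points in decreasing weight order online via its own Theorem~\ref{theor:weight}-style structure; we keep the current front elements of all $O(\log n)$ streams in a priority queue and repeatedly extract the global maximum, advancing the corresponding stream, for $k$ steps. Extracting and inserting in a heap of $O(\log n)$ elements costs $O(\log\log n)$ with an atomic heap, but since we may pay $O(\log^\eps n)$ per canonical node anyway, a plain binary heap over $O(\log n)$ items at $O(\log\log n)$ per operation is fine; the total is $O(\log n)$ setup $+\, k\cdot O(\log^\eps n)$, matching the claimed $k\log^\eps n$. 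Correctness of the online/decreasing-order guarantee follows because each stream is individually sorted and a max-heap merge of sorted streams is sorted. Building time is $O(n\log n)$ from sorting, building the $O(n/\mathrm{polylog})$ balanced trees, and invoking the $O(m_z\log m_z)$ builders of Theorem~\ref{theor:weight} on pieces summing to $O(n)$ (with an extra $\log$ from the $\ppar$-range-tree layer).

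The hard part, and where I expect to spend the most effort, is the space accounting: naively a range tree over the third coordinate with a Theorem~\ref{theor:weight} structure at each node is $O(n\log n)$ words, a full $\log n$ factor too large, and the stripe/height trick has to be combined with it so that the only place the $\ppar$-range-tree lives at full strength is a single polylog-height slab — yet that slab must still correctly answer queries whose $[0,h]$ extends above it, which forces a second, cheaper family of per-stripe structures that ignore $h$ internally and handle only $[a,b]\times[\tau_1,\tau_2]$. Making those cheap stripe structures simultaneously (i) linear in total size, (ii) able to report in decreasing weight order online, and (iii) costing only $O(\log^{1+\eps} n + k\log^\eps n)$, without secretly reintroducing a $\log n$ factor in space, is the crux; I expect to resolve it by representing each full stripe with a merge structure built bottom-up in $O(\log n)$ levels where level $i$ stores only $O(1)$ sampled heaviest points per $\ppar$-block, so that a query touches $O(\log n)$ levels and $O(\log^\eps n)$ work per reported point suffices to verify and descend, while each point is sampled at $O(1)$ levels so total space stays $O(n)$ — mirroring exactly the design philosophy already used for Theorem~\ref{theor:weight} and Theorem~\ref{theor:dynweight} in this paper.
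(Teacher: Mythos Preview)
Your outer decomposition --- slice horizontally on $y$ into stripes of polylogarithmic height, run a per-stripe top-$k$ query on the two full-range coordinates $x,\ppar$ plus the now-bounded $y$, and merge the $\lceil h/r\rceil$ resulting streams with a priority queue --- is exactly what the paper does in Section~\ref{sec:parweight}, and your accounting for the $O(h/\log^c n + \log^{1+\eps} n)$ and $k\log^\eps n$ terms is right. The gap is entirely in the per-stripe structure. A binary range tree $\mathcal{Z}$ on $\ppar$ with a Theorem~\ref{theor:weight} structure at every node replicates each point on $\Theta(\log n)$ levels at $\Theta(1)$ words each, giving $\Theta(n\log n)$ words; rank-reduction does not touch this replication factor, and confining the expensive structure to ``the last stripe'' does not help either, since that stripe may hold $\Theta(n)$ points (nothing forbids all $y$-coordinates from coinciding), so a $\Theta(n_i\log n)$-word structure there is still $\Theta(n\log n)$. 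Your final sampling sketch (keep $O(1)$ heaviest points per $\ppar$-block per level and ``verify and descend'') is too vague to evaluate; as stated it is unclear how one recovers the true $k$th heaviest point within the claimed time.

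The idea you are missing is Lemma~\ref{lemma:exttopk2d}: do not build a range tree on $\ppar$ carrying heavy secondary structures; build instead a \emph{multiary wavelet tree} on one of the full-range coordinates (arity $\log^\eps n$, height $O(\log n/\log\log n)$), which is linear-space by construction, and at each wavelet-tree node store only a two-dimensional RMQ on the remaining full-range coordinate together with the within-stripe $y$. Because the within-stripe $y$-range has only $\mathrm{polylog}\,n$ values, that 2D RMQ costs $O(\log\log n)$ bits per point; summed over the $O(\log n/\log\log n)$ wavelet-tree levels this is $O(\log n)$ bits, i.e.\ $O(1)$ words, per point. Top-$k$ extraction then proceeds by the standard RMQ-and-split trick across the $O(\log^{1+\eps} n/\log\log n)$ canonical wavelet-tree nodes covering $[\tau_1,\tau_2]$, with actual weights recovered in $O(\log^\eps n)$ time per point by walking the wavelet tree. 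In short: exploit that $y$ is tiny inside a stripe to make the \emph{secondary} structure sub-word per point, rather than trying to thin out the $\ppar$-tree itself.
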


\section{An $O(m^f+k)$ Time Data Structure}
\label{sec:polyn}

In this section we give a data structure that does not yet achieve the desired
$O(h/\log n+k)$ time, but its time depends on the width $m$ of the grid. This will be 
used in Section~\ref{sec:optimal} to handle vertical stripes of the global
grid, in order to achieve the final result.

We assume that a global array gives access to the points of a set $S$ in 
constant time: if we know the $x$-coordinate $p.x$ of a point $p\in S$,  we 
can obtain the $y$-coordinate $p.y$ of $p$ in $O(1)$ time. Both $p.x$ and 
$p.y$ are in $[1,O(n)]$, thus the global array requires $O(n)$ words of space. 
We consider the question of how much additional space our data structure uses 
if this global array is available. The result of this section is summed up in 
the following lemma, where we consider tall grids of $m$ columns and $n$ rows.

\begin{lemma}\label{lemma:polyn}
  Assume that  $m\leq n$ and let $0<f<1$ be a constant. There exists a data
  structure that uses $O(m\log m)$ additional bits of space and
  construction time. It
  answers three-sided top-$k$ queries for a set of $m$ points on an
  $m\times n$ grid in $O(m^f+k)$ time.
\end{lemma}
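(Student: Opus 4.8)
The goal is a data structure on an $m\times n$ grid with only $O(m\log m)$ extra bits (so roughly $O(m)$ words, given the global array) that answers three-sided top-$k$ queries in $O(m^f+k)$ time. The plan is to reduce the problem to a bounded-width one by exploiting that there are only $m$ columns, and to use a static tournament/heap structure over the columns so that the top weight in any horizontal band of a contiguous range of columns can be extracted quickly, then iterate.

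First I would build a balanced binary tree $\mathcal{B}$ over the $m$ columns (in $x$-order), so each node covers a dyadic interval of columns; this is $O(m)$ nodes and the tree shape costs $O(m)$ bits. The column range $[a,b]$ of a query decomposes into $O(\log m)$ canonical nodes of $\mathcal{B}$. For the $y$-restriction $[0,h]$ I would, within each node, store the points of its column-interval sorted by $y$; to answer a band query we need, inside a node, the maximum-weight point with $y\le h$, and successive such maxima in decreasing weight order. The natural tool is a \emph{priority search tree}-like arrangement, or simply, for each node, a structure supporting ``report points with $y\le h$ in decreasing weight order, online'': a sorted-by-$y$ list plus a range-maximum structure on weights over $y$-sorted prefixes. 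A three-sided (two-sided here, since $x$ is fixed to the node) top-$k$ in decreasing weight order is classic: keep a heap of candidate subranges, each time pop the max, report it, and push the two subranges obtained by splitting around the reported point; this gives $O(\log(\text{node size}))$ per reported point with $O(\text{node size})$ space.

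To get the \emph{global} decreasing order over the $O(\log m)$ canonical nodes, I would run a meta-heap over those nodes: initialize it with the current top candidate from each of the $O(\log m)$ node-structures, repeatedly extract the overall maximum, report it, and replenish from the node it came from. Each reported point then costs $O(\log m)$ (meta-heap) plus $O(\log m)$ (intra-node step), i.e.\ $O(\log m)$, so $k$ points cost $O(k\log m)$ — but we want $O(k)$. To shave the $\log m$, I would instead make $\mathcal{B}$ have branching factor $m^{f}$ (so constant depth $O(1/f)$), giving only $O(1/f)$ canonical nodes per query and $O(m^{f})$-way splits; then the meta-heap has $O(m^{f})$ entries and each pop costs $O(m^{f})$... which is where the tradeoff really bites. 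The cleaner route is: decompose $[a,b]$ into $O(1/f)=O(1)$ canonical nodes of a constant-depth $m^f$-ary tree, but at the bottom level each leaf-block has $m^{f}$ columns and we handle it by brute force in $O(m^{f})$ time; at internal levels, each node stores its points sorted by $y$ together with, in each weight-sorted tournament over $y$-prefixes, enough information to extract successive maxima in $O(1)$ amortized time after an $O(m^{f})$ setup. Thus the query pays $O(m^{f})$ once to set up all $O(1)$ canonical structures and the top-level merging heap (of size $O(1)$), and then $O(1)$ per reported point, for $O(m^f+k)$ total.

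The main obstacle I anticipate is the \textbf{space bound}: we are allowed only $O(m\log m)$ \emph{bits}, i.e.\ $O(m)$ words, so we cannot afford, e.g., storing $y$-sorted copies of the points at every level of a tree (that would be $O(m\log m)$ \emph{words}). The fix is to keep at each node only $O(1)$ words of summary data — pointers into the global array and into a single shared permutation of $S$ — and to recover the needed sorted orders on the fly within the $O(m^f)$ budget, or to use the wavelet-tree / succinct-range-max toolkit so that each node's auxiliary data is $O(\log m)$ bits. The second delicate point is guaranteeing the \emph{decreasing weight order} across the $O(1)$ canonical pieces with only $O(1)$-size merging overhead per output point while respecting the online-in-$k$ requirement; this is handled by the standard lazy heap-of-heaps argument, but one must check the amortization survives the succinct representation. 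Ties in weight are broken by $x$-coordinate (all $x$ distinct), so the order is well-defined. Assembling these pieces yields the claimed $O(m\log m)$-bit, $O(m^f+k)$-time structure.
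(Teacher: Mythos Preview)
Your plan has a real gap, and it stems from partitioning along the wrong axis. You build a tree over the \emph{$x$-coordinates} and then try to merge top-weight streams from its canonical pieces. The claim that an $m^f$-ary tree yields ``$O(1/f)=O(1)$ canonical nodes'' for an arbitrary range $[a,b]$ is false: in a $B$-ary tree a range touches up to $2(B-1)$ full siblings at every level, so with $B=m^f$ and height $\Theta(1/f)$ you get $\Theta(m^f)$ canonical pieces, not $O(1)$. With $\Theta(m^f)$ streams the meta-heap already costs $\Theta(k\,f\log m)$ for $k$ extractions, and you never explain a per-node structure that, within $O(\log m)$ bits per point, delivers ``next heaviest point with $y\le h$'' in $O(1)$ time --- your own RMQ-and-split scheme gives $\Theta(\log m)$ per output. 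The sentence ``one must check the amortization survives the succinct representation'' is precisely the unchecked step where the argument fails.

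The paper's proof avoids all of this by partitioning \emph{by weight}, not by $x$. Sort the $m$ points by weight and cut them into $r=m^{f'}$ classes $S_1,\dots,S_r$ (heaviest first), each of size $m^{1-f'}$; recurse inside each class, giving a weight-tree of constant height $O(1/f')$. For every class store only an \emph{unweighted} three-sided counting structure and an unweighted three-sided reporting structure (a tiny wavelet tree plus an RMQ), each costing $O(\log m)$ bits per point; since every point lives in $O(1/f')$ classes, the total is $O(m\log m)$ bits. A query walks the classes in weight order: count $|S_i\cap Q|$; if it is $\le k$, report all of $S_i\cap Q$ and decrement $k$; otherwise recurse into $S_i$. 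You touch $O(m^{f'})$ classes per level over $O(1)$ levels, paying $O(\log m)$ per count, so $O(m^{f'}\log m)=O(m^f)$ total plus $O(k)$ for the reporting. A final $O(k)$-time radix sort (using weight \emph{ranks} in $[1,m]$) puts the output in order. The key idea you are missing is that grouping by weight turns top-$k$ into a sequence of \emph{unweighted} range problems, which is what makes both the $O(k)$ output cost and the $O(m\log m)$-bit space bound fall out cleanly.
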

\begin{proof}
The idea is to partition the points of $S$ by weights, where the weights are
disregarded inside each partition. Those partitions are also refined into a tree.
Then we solve the problem by traversing the appropriate partitions and 
collecting all the points using classical range queries on unweighted points. 
A tree of arity $m^{\Theta(f)}$ yields constant height and thus constant space
 per point.

More precisely, we partition $S$ into classes $S_1,\ldots, S_r$, where
$r=m^{f'}$ for a constant $0<f'<f$.  For any $1\leq i< j\leq r$, the
weight of any point $p_i\in S_i$ is larger than the weight of any
point $p_j\in S_j$.  For $1\leq i<r$, $S_i$ contains $m^{1-f'}$
points.  Each class $S_i$ that contains more than one element is
recursively divided into $\min(|S_i|,r)$ subclasses in the same
manner.  This subdivision can be represented as a tree: If $S_i$ is
divided into subclasses $S_{i_1},\ldots, S_{i_k}$, we will say that
$S_i$ is the parent of $S_{i_1},\ldots, S_{i_k}$. This tree has constant 
height $O(1/f')$.

For every class $S_j$ we store data structures that support
three-sided range counting queries and three-sided range reporting
queries, in $O(\log m)$ and $O(\log m +occ)$ time respectively. These
structures will be described in Section~\ref{sec:wtree} and require
$O(m^{1-f'}\log m)$ construction time; 
note they do not involve weights. This adds up to $O((1/f')m\log m)$
construction time.

We will report
$k$ most highly weighted points in a three-sided query range
$Q=[a,b]\times [0,h]$ using a two-stage
procedure. During the first stage we produce an unsorted list $L$ of
$k$ most highly weighted points. During the second stage the list $L$
is sorted by weight.

Let $Q^k$ denote the set of $k$ most highly weighted points in $S\cap
Q$. Then $Q^k$ can be formed as the union of the result of the three-sided
query over certain classes, at most $O(m^{f'})$ clases per
level over a constant number of levels. More precisely, 
there are $O(m^{f'})$ classes $S_c$, such that $p\in Q^k$
if and only if $p\in S_c\cap Q$ for some $S_c$.  During the first
stage, we identify the classes $S_c$ and report all the points in $S_c\cap
Q$ using the following procedure.  Initially, we set our current tree node
to $\tS=S$ and its child number to $i=1$.
We count the number of points inside $Q$ in the $i$-th child
$S_i$ of $\tS$.  If $k_i=|S_i\cap Q| \leq k$, we report all the points
from $S_i\cap Q$ and set $k=k-k_i$.  If $k=0$, the procedure is
completed; otherwise we set $i=i+1$ and proceed to the next child
$S_i$ of $\tS$.  If $k_i> k$, instead, we set $\tS=S_i$, $i=1$, and 
report $k$ most highly weighted points in the children of $\tS$ using the same
procedure. During the first stage we examine $O(m^{f'})$ classes $S_i$
and spend $O(m^{f'}\log m + k)=O(m^f+k)$ time in total.

When the list $L$ is completed, we can sort it in $O(m^f+k)$ time. If
$L$ contains $k<m^{f'}$ points, $L$ can be sorted in $O(k \log k)
=O(m^f)$ time. If $L$ contains $k\geq m^{f'}$ points, then we can sort it in 
$O(k)$ time using radix sort: As the set $S$ contains at most $m$ distinct
weights, we store their ranks in an array ordered by $x$-coordinate, and thus
can sort the result using the ranks instead of the original values. By sorting 
$f'\log_2 m$ bits per pass the radix sort runs in time $O(k)$.

As for the space, the structures in Lemmas~\ref{lemma:count1} and
\ref{lemma:rep1} require $O(\log m)$ bits per point. Each point of $S$ belongs 
to $O(1/f')$ classes $S_i$. Hence, the total number of points in all classes 
is $O(m)$, giving $O(m\log m)$ bits of total space. The local
array of weight ranks also uses $O(m\log m)$ bits.
\end{proof}
 
\subsection{Counting and Reporting Points} \label{sec:wtree}

It remains to describe the data structures that answer three-sided
counting and reporting queries, with no weights involved.

\begin{lemma}\label{lemma:count1}
  Let $v\leq m\leq n$.  There exists a data structure that uses
  $O(v\log m)$ additional bits and answers three-sided range counting
  queries for a set of $v$ points on an $m\times n$ grid in $O(\log v)$
  time. It can be built in $O(v\log v)$ time.
\end{lemma}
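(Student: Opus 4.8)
The plan is to reduce the three-sided counting query to the classical two-dimensional range-counting problem in rank space, solved by a wavelet tree, being careful about two things: we are allowed only $O(v\log m)$ additional bits, and the $y$-coordinates live in $[1,n]$ and so cannot be stored verbatim when $n\gg m$.

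First I would sort the $v$ points by $x$-coordinate (ties broken arbitrarily; in our application the $x$-coordinates are distinct anyway) and store the sorted $x$-values in an array $A[1..v]$. Since each $x$-coordinate fits in $\ceil{\lg m}$ bits, this costs $O(v\log m)$ bits, and a binary search on $A$ converts the query interval $[a,b]$ into an interval $[a',b']$ of \emph{positions} in the $x$-sorted order in $O(\log v)$ time. Symmetrically, I would sort the points by $y$-coordinate, breaking ties by $x$-coordinate so that each point receives a well-defined rank $r\in[1,v]$, and store the $x$-coordinates of the points, listed in this $y$-sorted order, in a second array $C[1..v]$ of $O(v\log m)$ bits. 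Using the global array available from Lemma~\ref{lemma:polyn} (which returns $p.y$ from $p.x$ in $O(1)$ time), a binary search on $C$ computes $\rho(h)=|\{p\in S : p.y\le h\}|$ in $O(\log v)$ time, since the $y$-coordinates read along $C$ are monotonically nondecreasing.

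The core structure is a balanced wavelet tree $W$ over the string $r_1r_2\cdots r_v$, where $r_i$ is the $y$-rank of the $i$-th point in $x$-sorted order. Its alphabet is $[1,v]$, so $W$ occupies $O(v\log v)=O(v\log m)$ bits (recall $v\le m$) and, together with the $\rank$ structures on its bitvectors, is built in $O(v\log v)$ time, which also dominates the two sorts. A three-sided counting query $[a,b]\times[0,h]$ now amounts to computing $|\{i\in[a',b'] : r_i\le \rho(h)\}|$: this is the standard wavelet-tree rank query that descends the root-to-leaf path dictated by the threshold $\rho(h)$, maps $[a',b']$ through the $\rank$ operations of the bitvectors at each step, and, whenever the path goes to a right child, adds the count of positions of the current interval that would go to the left child. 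It visits $O(\log v)$ nodes with $O(1)$ $\rank$ operations each, for $O(\log v)$ total time; adding the $O(\log v)$ spent on the two binary searches gives the claimed $O(\log v)$ query time, $O(v\log m)$ additional bits, and $O(v\log v)$ construction time.

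The only genuine obstacle is the asymmetry between the coordinate ranges: $x\in[1,m]$ can be stored as is, but $y\in[1,n]$ cannot, since $\ceil{\lg n}$ bits per point would break the space budget when $n\gg m$. This is exactly what forces the indirection — working with $y$-ranks inside $W$ and recovering the rank threshold $\rho(h)$ via a binary search that consults true $y$-values only through the externally supplied global array, rather than storing a sorted list of $y$-coordinates. Everything else is routine.
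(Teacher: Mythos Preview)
Your proof is correct and follows essentially the same approach as the paper: reduce $x$ to rank space via a stored sorted array and binary search, build a wavelet tree that partitions the points by $y$-rank, and circumvent the large $y$-coordinates by indirection through stored $x$-coordinates plus the global $x\!\to\!y$ array. The only cosmetic difference is that you convert $h$ to a $y$-rank threshold $\rho(h)$ once, upfront, via a separate array $C$ of $x$-coordinates sorted by $y$, whereas the paper stores at each wavelet-tree node the $x$-coordinate of its maximum-$y$ point and compares $h$ against that value during the traversal itself; both tricks cost $O(v\log m)$ bits and $O(\log v)$ time.
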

\begin{proof}
  Using the rank space technique~\cite{GBT84}, we reduce the problem
  of counting on an $m\times n$ grid to the problem of counting on a
  $v\times n$ grid: let $\tau(p.x,p.y)=(\rank(p.x,S_x),p.y)$, where
  the set $S_x$ consists of the $x$-coordinates of all the points in $S$ and
  $\rank(a,S)=|\{\,e\in S\, |\, e\leq a\,\}|$. Then the mapped set is
  $S' = \{ \tau(p.x,p.y), (p.x,p.y) \in S \}$.
  Let $\pred(a,S)=\max\{\,e\in S\,|\,e\leq a\,\}$ and
  $\ssucc(a,S)=\min\{\,e\in S\,|\,e\geq a\,\}$.  Then a query $[a,b]\times
  [0,h]$ on $S$ is equivalent to a query
  $[a',b'] \times [0,h]$ on $S'$, where $a'=\rank(\ssucc(a,S_x), S_x)$ and
  $b' = \rank(\pred(b,S_x), S_x)$.
  Using standard binary search on an array, we can find $a'$ and
  $b'$ in $O(\log v)$ time and $v\log m$ bits of space.
 
We store the points on a variant of the wavelet tree data structure
\cite{GGV03}. Each node of this tree $W$ covers all the points within
a range of $y$-coordinates. The root covers all the nodes, and the two
children of each internal node cover half of the points covered by their
parent. The leaves cover one point. The $y$-coordinate limits of the nodes
are not stored explicitly, to save space. Instead, we store the $x$-coordinate 
of the point holding the maximum $y$-coordinate in the node. With the global
array we can recover the $y$-coordinate in constant time. Each internal node 
$v$ covering $r$ points stores a bitmap $B_v[1..r]$, so that $B_v[i]=0$ iff the
$i$-th point, in $x$-coordinate order, belongs to the left child (otherwise
it belongs to the right child). Those bitmaps are equipped with data
structures answering operation $rank_b(B_v,i)$ in constant time and $r+o(r)$ 
bits of space \cite{Mun96}, where $rank_b(B_v,i)$ is the number of occurrences
of bit $b$ in $B_v[1..i]$. Since $W$ has $O(v)$ nodes and height $O(\log v)$,
its bitmaps require $O(v \log v)$ bits and its pointers and $x$-coordinates
need $O(v\log m)$ bits. The construction time is $O(v\log v)$.

We can easily answer range counting queries $[a',b'] \times [0,h]$ on $W$ 
\cite{MN07}. The procedure starts at the root node of $W$, with the range
$[a',b']$ on its bitmap $B$. This range will become $[a_l,b_l] =
[rank_0(B,a'-1)+1,rank_0(B,b')]$ on the left child of the root, and $[a_r,b_r]
= [rank_1(B,a'-1)+1,rank_1(B,b')]$ on the right child. If the maximal 
$y$-coordinate of the left child is smaller than or equal to $h$, we count the 
number of points $p$ with $p.x\in [a,b]$ stored in the left child, which is
simply $b_l-a_l+1$, and then visit the right child.  Otherwise, the maximal
$y$-coordinate in the left child is larger than $h$, and we just visit the
left child. The time is $O(1)$ per tree level.
\end{proof}

\begin{lemma}\label{lemma:rep1}
  Let $v\leq m \leq n$.  There exists a data structure that uses
  $O(v\log m)$ additional bits and answers three-sided range reporting
  queries for a set $S$ of $v$ points on an $m\times n$ grid in $O(\log v
  + occ)$ time, to report the $occ$ results. It can be built in $O(v\log v)$
time.
\end{lemma}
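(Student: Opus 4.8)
The plan is to keep the global array and the rank-space reduction from Lemma~\ref{lemma:count1}, but to abandon the wavelet tree of that lemma — descending it to report would cost $\Theta((1+occ)\log v)$, which is too slow when $occ$ is large — in favour of an iterated range-minimum scheme in the style of Muthukrishnan~\cite{Mut02}. Concretely, I would sort the $v$ points by $x$-coordinate and store the sorted list $X[1..v]$ of their original $x$-coordinates explicitly; since $x\in[1,m]$ this takes $v\lceil\log m\rceil=O(v\log m)$ bits, and a binary search in $X$ maps a query interval $[a,b]$ to the contiguous range $[a',b']$ of $x$-ranks it spans, in $O(\log v)$ time, exactly as before. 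Let $Y[1..v]$ be the conceptual array with $Y[i]$ equal to the $y$-coordinate of the $i$-th point in $x$-order; by the global array, $Y[i]$ is recovered from $X[i]$ in $O(1)$ time, so $Y$ is never stored. On top of $Y$ I would build a succinct range-minimum-query (RMQ) structure that uses $O(v)$ bits, is built in $O(v)$ time (reading $Y$ through the global array), and returns the \emph{position} of the minimum of any contiguous subrange of $Y$ in $O(1)$ time without later access to $Y$. The total additional space is then $O(v\log m)$ bits and the construction time is dominated by the sort, $O(v\log v)$.

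To answer a three-sided query $[a,b]\times[0,h]$ I would first compute $[a',b']$ by the binary search above, and then run the classical recursion: given a subinterval $[i,j]\subseteq[a',b']$, query the RMQ structure for the position $t$ of the minimum of $Y[i..j]$, recover $Y[t]$ in $O(1)$ from $X[t]$ and the global array; if $Y[t]>h$ then no point with $x$-rank in $[i,j]$ satisfies $y\le h$ and this branch stops, otherwise we report the point $(X[t],Y[t])$ and recurse on $[i,t-1]$ and $[t+1,j]$. Correctness is the standard argument — a subinterval yields a reported point exactly when its minimum $y$-value is at most $h$ — so every non-terminating call emits one distinct result; the recursion tree therefore has $occ$ internal nodes and $O(occ)$ stopped leaves, each handled in $O(1)$ time, giving $O(\log v + occ)$ overall, as claimed.

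The step I would be most careful with is the coupling with the global array: the RMQ structure must hand back the \emph{index} of the minimum (which succinct RMQ structures do), so that the $y$-values themselves need not be stored — this is precisely what keeps the space at $O(v\log m)$ rather than $O(v\log n)$ bits — and if two points share an $x$-coordinate, the ordering used to form $X$ must be the one consistent with the global array, so that ranks, stored $x$-values, and looked-up $y$-values all refer to the same point. Once these bookkeeping details are fixed, the $O(v\log m)$-bit space bound, the $O(v\log v)$ construction time, and the $O(\log v+occ)$ query time all follow directly.
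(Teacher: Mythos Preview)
Your proposal is correct and follows essentially the same approach as the paper: rank-space reduction on the $x$-coordinates via a sorted array and binary search, a succinct RMQ over the (implicit) $y$-array accessed through the global array, and Muthukrishnan's recursive reporting. The paper's proof is terser but identical in substance; your extra care about recovering $Y[t]$ through $X[t]$ and the global array is exactly the mechanism the paper relies on when it says the RMQ ``does not need to access $Y$ after construction (so we do not store $Y$).''
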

\begin{proof}
  We can reduce
  the problem of reporting on an $m\times n$ grid to the problem of
  reporting on a $v\times n$ grid, as described above.  The query time
  is increased by an additive $O(\log v)$ factor, and the space usage
  increases by $O(v\log m)$ bits of space.
  Now we sort the $v$ points in $x$-coordinate order, build the sequence
$Y[1..v]$ of 
  their $y$-coordinates, and build a Range Minimum Query (RMQ) data structure 
  on $Y$ \cite{FH11}. This structure requires only $O(v)$ bits of space, does 
not need to access $Y$ after construction (so we do not store $Y$), and answers
  in constant time the query $rmq(c,d) = \mathrm{arg}~\min_{c \le i \le d}
Y[i]$ for any $c,d$.
  It is well known that with such queries one can recursively
  retrieve all the points in the three-sided range in $O(occ)$ time; see, 
  for example, Muthukrishnan~\cite{Mut02}.
The construction time is dominated by the sorting of points.
\end{proof}

\section{An Optimal Time Data Structure}
\label{sec:optimal}

The data structure of Lemma~\ref{lemma:polyn} gives us an $O(m^f+k)$ time 
solution, for any constant $f$, where $m$ is the grid width. In this section 
we use it to obtain $O(h/\log n+k)$ time.
The idea is to partition the space into vertical stripes, for different
stripe widths, and index each stripe with Lemma~\ref{lemma:polyn}. Then the
query is run on the partition of width $m$ so that the $O(m^f)$ time 
complexity is dominated by $O(h/\log n+k)$.
The many partitions take total linear space because the size per point in
Lemma~\ref{lemma:polyn} is $O(\log m)$, and our widths decrease doubly
exponentially. As a query may span several stripes, a structure similar to 
the one used in the classical RMQ solution \cite{BF00} is used.
This gives linear space for stripes of width up to 
$\Omega(\log^2 n)$. 
Smaller ones are solved with universal tables.

In addition to the global array storing
$p.y$ for each $p.x$, we use another array storing the weight corresponding to
each $p.x$. As there are overall $O(n)$ different weights, those can be mapped 
to the interval $[1,O(n)]$ and still solve correctly any top-$k$ reporting 
problem. Thus the new global array also requires $O(n)$ words of space.

\subsection{Structure} \label{sec:struc}

Let $g_j=1/2^j$ for $j=0,1,\ldots, r$. We choose $r$ so that
$n^{g_r}=O(1)$, thus $r=O(\log\log n)$.  
The $x$-axis is split into intervals of size
$\Delta_j=n^{g_j}\log^2 n$ and
$j=1,\ldots, r$.  For convenience, we also define $\Delta_0=n$ and
$\Delta'_j=\Delta_j/\log^2 n=n^{g_j}$.
For every $1\leq j < r$ and for every interval
$I_{j,t}=[(t-1)\Delta_j,t\Delta_j-1]$, we store all the points $p$ with 
$p.x\in I_{j,t}$ in a data structure
$E_{j,t}$ implemented as described in Lemma~\ref{lemma:polyn}.
$E_{j,t}$ supports three-sided top-$k$ queries in $O((\Delta_j)^{f}+k)$ time
for any constant $0<f<1/4$. We also construct a data structure $E_0$ 
that contains all the points of $S$ and supports three-sided top-$k$
queries in $O(n^{1/4}+k)$ time. To simplify the description, we 
also assume that $I_{-1}=I_0=[0,n-1]$ and $E_{-1}=E_0$.

The data structures $E_{j,t}$ for a fixed $j$ contain $O(n)$ points overall,
hence by Lemma~\ref{lemma:polyn} all $E_{j,t}$ use $O(n\log\Delta_j) =
O(n\log(n^{g_j}\log^2 n))=(1/2^j)O(n\log n) + O(n\log\log n)$ 
additional bits of space.  Thus all $E_{j,t}$ for $0\leq j < r$ use
$\sum_{j=0}^{r-1} [(1/2^j)O(n\log n) + O(n\log\log n)]=O(n\log n)$ bits, or
$O(n)$ words. They also require $O(n\log n)$ total construction time.
Since $f < 1/4$, a data structure $E_{j,t}$ supports top-$k$ queries in 
time $O((\Delta_j)^{f}+k) = O((n^{g_j}\log^2 n)^{f}+k)=
O(n^{g_{j+2}}/\log n+k)= O(\Delta'_{j+2}/\log n+k)$ time.
For each of the smallest intervals $I_{r,t}$ we store data structures $\oE_{r,t}$
that use $o(\log^2 n)$ words of space (adding up to $o(n)$) and support
three-sided top-$k$ queries in $O(h/\log n+k)$ time. This structure will be
described in Section~\ref{sec:smallgrids}.

Note that our choice of writing $(n^{g_j}\log^2 n)^{f} = O(n^{g_{j+2}}/\log
n)$ was arbitrary, because $f$ is strictly less than $1/4$. 
We could have written $(n^{g_j}\log^2 n)^{f} =
O(n^{g_{j+2}}/\log^c n)$ for any constant $c$, and this would yield
$O(h/\log^c n+k)$ query time. We have chosen to favor simplicity in the
exposition, but will
return to this point at the end of Section~\ref{sec:smallgrids}.

\subsection{Queries}

We can carry out the query using a range of intervals $I_{j,t}$ of any width
$\Delta_j$. The key idea is to use a $j$ value according to the height of the
three-sided query, so that the search time in $I_{j,t}$ gives the desired
$O(h/\log n)$ time. More precisely,
assume we want to report $k$ points with highest weights in the
range $[a,b]\times [0,h]$.  First, we find the index $j$ such that
$\Delta'_{j+1} > \max(h,k) \geq \Delta'_{j+2}$.  The index $j$ can be
found in $O(\log\log(h+k))$ time by linear search\footnote{This is
$O(h/\log n + k)$ for sure if $\max(h,k) = \Omega(\log n\log\log n)$; 
otherwise a small table can be used to perform the search in constant time.}.  
If $[a,b]$ is
contained in some interval $I_{j,t}$, then we can answer a query in
$O(\Delta'_{j+2}/\log n+k) = O(h/\log n+k)$ time using $E_{j,t}$. 
If $[a,b]$ is contained 
in two adjacent intervals $I_{j,t}$ and $I_{j,t+1}$, we generate the lists of
top-$k$ points in $([a,b]\cap I_{j,t})\times [0,h]$ and $([a,b]\cap
I_{j,t+1})\times [0,h]$ in $O(h/\log n+k)$ time, and merge them in $O(k)$ time.  
To deal with
the case when $[a,b]$ spans one or more intervals $I_{j,t}$, we store
pre-computed solutions for some intervals.

For $1\leq j\leq r$, we consider the endpoints of intervals $I_{j,t}$.
Let $\mytop_j(a,b,c,k)$ denote the list of top-$k$ points in the range 
$[a\cdot\Delta_j,b\cdot\Delta_j-1]\times [0,c]$ in descending weight order.  
We store the values of 
$\mytop_j(t,t+2^v,c,\Delta'_{j+1})$ for any 
$t\in [0,n/\Delta_j]$, any $0\leq v\leq \log_2 (n/\Delta_j)$, and any 
$0\leq c\leq \Delta'_{j+1}$.  All the lists
$\mytop_j(\cdot,\cdot,\cdot,\cdot)$ use space
$O((n/\Delta_j)(\Delta'_{j+1})^2\log n)=O(n/\log n)$ words. 
Hence the total word space usage of all lists 
$\mytop_j(\cdot,\cdot,\cdot,\cdot)$ 
for $2\leq j\leq r$ is $O(n\log\log n/\log n) = o(n)$.
It can also be built in $o(n)$ time using dynamic programming.

Assume that $[a,b]$ spans intervals $I_{j,t_1+1},\ldots, I_{j,t_2-1}$;
$[a,b]$ also intersects with intervals $I_{j,t_1}$ and $I_{j,t_2}$.  Let
$a' \Delta_j$ and $b' \Delta_j$ denote the left endpoints of $I_{j,t_1+1}$ and 
$I_{j,t_2}$, respectively. The list $L_m$ of top-$k$ 
points in $[a' \Delta_j,b' \Delta_j-1]\times [0,h]$ can be generated as follows.
Intervals $[a'\Delta_j,(a'+2^v)\Delta_j-1]$ and
$[(b'-2^v)\Delta_j,b'\Delta_j-1]$ 
for $v=\ceil{\log_2((b'-a')/\Delta_j)}$ cover
$[a'\Delta_j,b'\Delta_j-1]$. Let $L'_m$ and $L''_m$ denote the lists of the 
first $k$ points in $\mytop_j(a',a'+2^v,h,\Delta'_{j+1})$ and
$\mytop_j(b'-2^v,b',h,\Delta'_{j+1})$ (we have $k$ results 
because $k < \Delta'_{j+1}$; similarly we have the results for $c=h$ because
$h < \Delta'_{j+1}$). We merge both lists 
(possibly removing duplicates) according to the
weights of the points, and store in $L_m$  the set of the first $k$ points 
from the merged list.  Let $L_{t_1}$ and $L_{t_2}$ denote the sets of top-$k$ points in
$[a,a'\Delta_j-1]\times [0,h]$ and $[b'\Delta_j,b]\times [0,h]$. We can 
obtain $L_{t_1}$ and $L_{t_2}$ in $O(h/\log n+k)$ time using data structures $E_{j,t_1}$ and
$E_{j,t_2}$ as explained above.  Finally, we can merge $L_m$, $L_{t_1}$, and
$L_{t_2}$ in $O(k)$ time; the first $k$ points in the resulting list are
the top-$k$ points in $[a,b]\times [0,h]$. 

\subsection{A Data Structure for an $O(\log^2 n)\times n$ Grid.}
\label{sec:smallgrids}

The data structures $\oE_{r,t}$ for an interval $I_{r,t}$ use the same
general approach as the data structures $E_{j,t}$, at a smaller scale.
Note that these structures will be consulted only when $\max(h,k) < C =
\Delta_{r+1}'=O(1)$.
Each interval $I_{r,t}$ is subdivided into $\log^{7/4}n$
intervals $\tI_1,\tI_2,\ldots$ of width $\log^{1/4} n$. 
Let $\tS$ denote the set that contains the
endpoints of $\tI_1,\tI_2,\ldots$. For every $x\in \tS$, each $1\leq
v\leq 2\log\log n$ and each $h\leq C$, we store the lists
$\mytop(x,x+2^v,h,C)$.
All such lists use $O((n/\log^2 n)C^2\log^{7/4}n\log\log n) = o(n)$ space in total.

A query on $I_{r,t}$ is processed as follows. Suppose that 
$[a,b]$ intersects with intervals $\tI_{g_1},\ldots,\tI_{g_2}$ for some $g_1\leq g_2$. 
We find the top-$k$ points from
$(\tI_{g_1+1}\cup\ldots\cup\tI_{g_2-1})\times [0,h]$ in $O(k)$ time using lists 
$\mytop(\cdot,\cdot,\cdot,\cdot)$, as before. We also find top-$k$ points from
$(\tI_{g_1}\cap [a,b])\times [0,h]$ and $(\tI_{g_2}\cap [a,b])\times [0,h]$ 
in $O(k)$ time using data structures for $\tI_{g_1}$ and $\tI_{g_2}$,
respectively, to be described next. 
We thus obtain three lists of points sorted by their 
weights, and merge them in $O(k)$ time as before. 
 
Finally, we describe how to answer queries in $O(k)$ time in the grids
$\tI_g$, of width $\log^{1/4} n$. We replace the 
$y$-coordinates of points by their ranks; likewise, the weights of
points are also replaced by their ranks. The resulting sequence $X_g$
contains all mapped points in $\tI_g$ and consists of 
$O(\log^{1/4} n\log\log n)$ bits, so all the descriptions of all sequences
$X_g$ require $O(n\log\log n)$ bits, or $o(n)$ words. 
There are $O(\log^{1/2} n)$ queries
that can be asked (considering all the sensible values of $[a,b]$, $h$ and $k$),
and the answers require $O(k\log(\log^{1/4} n)) =
O(\log\log n)$ bits.
Thus we can store a universal look-up table of size 
$2^{O(\log^{3/4} n \log\log n)} O(\log\log n) = o(n)$ words
common to all subintervals $\tI_g$. This table contains pre-computed
answers for all possible queries and all possible sequences $X_g$. Hence,
we can answer a top-$k$ query on $X_g$ in $O(k)$ time. 

A query on $\tI_g$ can be transformed into a query on $X_g$ by reduction to
rank space in the $y$ coordinates. 
Consider a query range $Q=[a,b]\times[0,h]$ on $\tI_g$. 
We can find the rank $h'$ of $h$ among the $y$-coordinates of points 
from  $\tI_g$ in $O(h)=O(1)$ time by linear search (remember that we store 
only the reordering of the local $x$-coordinates, and the actual $y$-coordinates
are found in the global array). 
Then, we can identify the top-$k$ points in $X_g\cap Q'$,
where $Q'=[a,b]\times [0,h']$, using 
the look-up table and report those points in $O(k)$ time. 

\medskip

Thus our data structure uses $O(n)$ words of space and answers 
queries in $O(h/\log n+k)$ time. It can be built in $O(n\log n)$ time.
As mentioned at the end of Section~\ref{sec:struc}, we can obtain any query 
time of the form $O(h/\log^c n+k)$, for any constant $c$. This completes the 
proof of Theorem~\ref{theor:weight}, which is given in this general form.

\subsection{Online Queries}
\label{sec:online}

An interesting extension of the above result is that we can deliver the
top-$k$ documents in online fashion. That is, after the $O(p/\log_\sigma n)$ 
time initialization,
we can deliver the highest weighted result, then the next highest one, and so 
on. It is possible to interrupt the process at any point and spend overall time
$O(p/\log_\sigma n+k)$ after having delivered $k$ results. That is, we obtain the same
result without the need of knowing $k$ in advance.
This is achieved via an online version of Theorem~\ref{theor:weight}, and is
based on the idea used, for example, in \cite{BFGLO09,HSV09}.  We describe it
for completeness.

Consider an arbitrary  data structure that answers top-$k$ queries 
in $O(f(n)+kg(n))$ time in the case when $k$ 
must be specified in advance. Let $k_1=\ceil{f(n)/g(n)}$, $k_i=2k_{i-1}$, 
and $s_i=\sum_{j=1}^{i-1}k_j$ for $i\geq 2$. 
Let $S$ be the set of points stored in the data structure and 
suppose that we must report top points from the range $Q$ in the 
online mode.  
At the beginning, we identify top-$k_1$ points in $O(f(n)+g(n))$ time
and store them in a list $L_1$.
Reporting is divided into stages. During the 
$i$-th stage, we report points from a list $L_i$. 
$L_i$ contains $\min(k_i, |Q\cap S|-s_i)$ top points that were not 
reported during the previous stages.  Simultaneously we compute the list 
$L_{i+1}$ that contains $\min(2k_i+s_i, |Q\cap S|)<4k_i$ top points. 
We  identify at most $2k_i+s_i$ top points in 
$O(f(n)+4k_i\cdot g(n))=O(k_i\cdot g(n))$ time. 
We also  remove the first $s_i$  points from $L_{i+1}$ in $O(k_i)$ time.
The resulting list $L_{i+1}$ contains $2k_i=k_{i+1}$  points that  
must be reported during the next $(i+1)$-th stage. 
The task of creating  and cutting the  list $L_{i+1}$ is executed in such 
a way that we spend $O(1)$ time when each point of $L_i$ is reported. 
Thus when all the points from $L_i$ are output, the list $L_{i+1}$ that 
contains the next $k_{i+1}$ top points is ready and we can proceed with 
the $(i+1)$-th stage.   

The reporting procedure described above outputs the first $k$ most highly 
weighted points in $O(f(n)+kg(n))$ time. 
It can be interrupted at any time.

\section{A Dynamic Structure} 
\label{sec:dynamic}

We consider now a scenario where insertions and deletions of whole documents
are interspersed with top-$k$ queries. This has two important components: how
to maintain the suffix tree and how to maintain the grid. Then we consider how
queries are performed. We first consider just the measure $\tf$ and then
generalize to others at the end.

\subsection{Dynamic Suffix Trees}
\label{sec:dynst}

We use a dynamic suffix tree maintenance
algorithm where leaves and unary nodes can be inserted and deleted, and lowest
common ancestors can be computed, all in constant worst-case time \cite{CH05}.
The updates on leaves and unary nodes are the operations we need to insert and 
delete all the suffixes of a document in the suffix tree, in time proportional 
to the length of the document inserted or deleted, whereas the lowest common
ancestor queries are necessary to compute the new $\ptr(\cdot,\cdot)$ pointers
to insert. In any node we will maintain the up to $\sigma$ children using a
linear-space dynamic predecessor data structure that supports queries and 
updates in worst-case time $o((\log\log\sigma)^2)$ \cite{AT07}. 

Upon insertion of a new document $d$ of length $|d|$, we follow McCreight's
procedure to insert a new string in a generalized suffix tree \cite{McC76}. 
First we search for the new document string in the suffix tree, add the 
corresponding leaf and, if necessary, insert its parent splitting an edge. 
Then we compute the suffix link of the last node that
belonged to the path found, follow the suffix link, descend again as much as 
possible, create a new leaf and possibly an internal node, create suffix links 
from the last to the current created nodes, and continue until exhausting the
suffix. This requires $O(|d|)$ suffix tree operations, and total time
$O(|d|(\log\log\sigma)^2)$.

The deletion of a document $d$ is symmetric to insertion.
We find its corresponding string, delete its leaf and possibly its parent if 
it becomes unary, follow the suffix link, and repeat the process until removing
all the leaves and possibly their parents. 
This takes $O(|d|(\log\log\sigma)^2)$ time.

In order to accelerate searches we will use a technique analogous to the one 
used with the static suffix tree. We define $\ell=\log_\sigma n$, and the 
{\em level} of a node $v$ as $lev(v) = \lfloor |l(root,v)| / \ell \rfloor$. 
Note that the level of a node depends on its string depth, and thus it does 
not change upon updates. Each suffix tree node $v$ with parent $u$ such that 
$lev(v) > lev(u)$ will maintain a predecessor data structure called an
{\em accelerator}, storing all its highest descendant nodes $z$ such that 
$lev(z) > lev(v)$. The key used for the predecessor data structure are the 
$\ell$ characters ($\lg n$ bits) formed by $l(root,z)[lev(v)\cdot\ell+1,
(lev(v)+1)\cdot\ell]$. Note these keys do not depend precisely on $v$ being
the node holding the accelerator; any other ancestor of $z$ of
the same level of $v$ yields the same key. Note also that the nodes $z$ stored 
in the accelerator of $v$ are owners of subsequent accelerators. 

The predecessor structures hold $O(n)$ nodes, and thus they require
$o((\log\log n)^2)$ time and linear space \cite{AT07}. The total extra space
is linear because each suffix tree node belongs to at most one predecessor
structure.

Upon searches, we start at the root and use the accelerators of 
successive nodes, using consecutive chunks of $\ell$ symbols in $P$. In some
cases we may arrive at nodes whose string distance to the previously visited 
node is more than $\ell$; in those cases we check the missing symbols directly
in the text, also in chunks of $\ell$ characters. When, finally, the 
accelerator does not give a node matching the next $\ell$ symbols of $P$, we
switch to the character-based search. Thus the total search time is
$O(p(\log\log n)^2/\log_\sigma n + (\log_\sigma n)(\log\log\sigma)^2) =
O(p(\log\log n)^2/\log_\sigma n + \log n)$.

Those accelerators must be updated upon insertions and deletions
of suffix tree nodes. Note that we always know $l(root,v)$ when we insert or
delete a node $v$. Upon insertion of a leaf $v$ as a child of a node $u$, it 
may turn out that the leaf must be inserted into an accelerator
(because $lev(v) > lev(u)$). We can simply find the nearest ancestor holding
an accelerator via at most $\ell$ parent operations from $v$. We must
also initialize an empty accelerator for $v$. Symmetrically, when
a leaf $v$ is removed, we may have to remove it from its ancestor's accelerator.
When an edge from $u$ to $v$ is split with a new node $z$, it may
be that $lev(u) < lev(z) = lev(v)$. In this case, $z$ takes the role of $v$,
``stealing'' the accelerator from $v$ (which needs no change, as
explained). We must also replace $v$ by $z$ in the accelerator stored at
the proper ancestor of $z$. Another case that requires care is when
$lev(u) < lev(z) < lev(v)$. In this case $v$ is replaced by $z$ in the 
proper ancestor of $z$, but $v$ retains its accelerator and $z$
creates a new accelerator holding only $v$. Other cases require no
action. Upon deletions, the obvious reverse actions are necessary.
The total update time can be bounded by $O(|d|\log n)$ for both insertions and
deletions.

\subsection{Relating the Suffix Tree and the Grid}
\label{sec:relating}

Since grid columns will appear and disappear upon document insertions and
deletions, we will not associate integers to columns, but just abstract labels.
The mapping between the suffix tree and the grid columns will be carried out
via a dynamic technique to maintain order in a list $X$ of such abstract labels 
\cite{DS87,BCDFCZ02}. The data structure supports the operations of creating a 
new label $y$ as the immediate successor of a given label $x \in X$, deleting 
a label $y \in X$, 
and determining which of two labels comes first in $X$, all in constant time. 
In addition, each suffix tree node $v$ will hold a (classical) doubly-linked 
list $\mathit{list}(v)$ storing consecutive labels of $X$, 
each label corresponding to a grid column where this node induces points, 
and will maintain pointers to the first and the last node in $\mathit{list}(v)$.
Finally, $v$ will maintain special labels
$\mathit{first}(v), \mathit{last}(v)\in X$ that do not represent any column, 
but are the predecessor (resp.\ successor) in $X$ of the first (resp.\ last) 
label in its subtree.

As we insert a new leaf $v$ as the child of $u$, we must create a predecessor of
$\mathit{last}(v')$ to assign to $\mathit{last}(v)$, where $v'$ is the next
sibling of $v$, and then create a predecessor of $\mathit{last}(v)$ to assign to
$\mathit{first}(v)$. If $v$ is the last child of $u$, then we create a 
predecessor of $\mathit{last}(u)$ to assign to $\mathit{last}(v)$.
When, instead,
we create a new node $v$ that splits an edge from $u$ to $v'$, then 
$\mathit{first}(v)$ will be a new predecessor of $\mathit{first}(v')$ and
$\mathit{last}(v)$ will be a new successor of $\mathit{last}(v')$. When a node 
$v$ is removed, its labels $\mathit{first}(v), \mathit{last}(v)$ are also 
removed from $X$.

As we insert a new document $d$, we must associate new grid columns to the new 
and existing suffix tree nodes traversed. Each newly created pointer 
$\ptr(v,d)$ will require creating a new label $t(v,d) \in X$ as the successor 
of the last one currently held in $\mathit{list}(v)$ (it will
also be stored at the end of $\mathit{list}(v)$). When $\mathit{list}(v)$ is
empty, the new label $t(v,d)$ must be created as the successor of 
$\mathit{first}(v)$.

As we insert new leaves in the suffix tree, we collect them in an array
$L[1,|d|]$. We also create the first label $t(v,d)$ of such leaves $v$. Now we 
sort $L$ by the labels $t(v,d)$, and as a result the new leaves
become sorted by their suffix tree preorder. All the internal suffix tree nodes
that must be labeled with $d$ are obtained as $u=lca(v,v')$ for consecutive
leaves $v=L[i]$ and $v'=L[i+1]$. We create pointers $\ptr(v,d)$ and 
$\ptr(v',d)$ towards node $u$, associated to the labels 
$t(v,d)$ and $t(v',d)$, respectively, and with weights $w(v,d)=w(v',d)=1$.
For each new internal suffix tree node $u=lca(v,v')$ obtained, we create a new
label $t(u,d)$ for the new grid column that $u$ will originate, and associate 
weight $w(u,d)=2$ to it (at the end, $w(u,d)$ will be the number of leaves 
labeled $d$ in the subtree of $u$). Each time $u$ is obtained again (which we
know because the last element of $\mathit{list}(u)$ is already $t(u,d)$), we 
increase $w(u,d)$ by $1$. 

Now we have to propagate weights and pointers from internal nodes labeled with 
$d$ to their nearest ancestors labeled with $d$ (i.e., the nodes that would
be their parent in the suffix tree of document $d$). 
For this sake, the internal nodes $u=lca(v,v')$ 
obtained are collected in a new array $I$, of size up to $|d|-1$, and $I$ is 
sorted by the labels $t(u,d)$, so that the nodes become sorted by preorder.
We traverse $I$ left to right, simulating a recursive preorder traversal of
the suffix tree of document $d$, although the nodes are in the generalized
suffix tree. Let $u=I[i]$ and $v=I[j]$, initially for $i=1$ and $j=2$. If
$lca(u,v)=u$, then $u$ is the parent of $v$ in the suffix tree of $d$.
Thus we recursively traverse the subtree that
starts in $v=I[j]$, which finishes at a node $v'=I[j']$ that is not anymore
a descendant of $v$. Now we check whether $lca(u,v')=u$ (i.e., $v'$ is the
second child of $u$ in the suffix tree of $d$), and so on.
At some point, it will hold that $I[j']$ does not
descend from $u$, and we have finished the traversal of the subtree of $u$.
Along this recursive traversal we will identify the nearest ancestor $u$
labeled $d$ of each node $v$ labeled $d$. For each such pair, and after having
processed $v$ and computed $w(v,d)$, we increase $w(u,d) = w(u,d)+w(v,d)$
and generate the pointer $\ptr(v,d)$ pointing to $u$, associated to label
$t(v,d)$ and weight $w(v,d)$.

All the labels created when inserting a document $d$ are additionally chained
in a (classical) list $\mathit{list}(d)$, to facilitate deletion of
the document. The overall time of this step is $O(|d|\log |d|)$, dominated by 
the sorting via comparisons of labels in $X$.

Finally, we will create new columns and points in the grid associated to all 
the pointers $\ptr(v,d)=u$ created. The label $t(v,d)$ will be an identifier 
for the $x$-coordinate of the point (we remark that these are not integers, but
just labels that can be compared). The $y$-coordinate will be the string depth
of the target node, $|l(root,u)|$. This value is stored in the suffix tree node
when the node is created and, unlike the tree depth, does not change upon 
suffix tree updates. The document associated to the new point is the new one, 
$d$, and the weight is the value $w(v,d)$ associated to the source node of the 
pointer. 

Handling the deletion of a document $d$ is simple.
After deleting all the corresponding suffix tree nodes, we follow the chain of 
labels $t(v,d)$ in $\mathit{list}(d)$, delete them from $X$ and remove 
their nodes from the doubly-linked list $\mathit{list}(v)$. This takes 
$O(|d|)$ additional 
time. We also remove the columns in the grid corresponding to the labels 
deleted, and the associated points. Both insertion and deletion times are 
superseded by those of Section~\ref{sec:dynst}.

\subsection{Slim Grids} \label{sec:slimgrid}

To achieve faster searches, the grid will be divided into horizontal slices of
small height $r$. For every slice, we maintain a structure that 
reports $k$ most highly weighted points from a horizontal range of labels 
$[a,b)$ intersected with a vertical range of integers $[0,y)$. We describe
here how those slices are updated and queried for a sublogarithmic value of
$r$, and in Section~\ref{sec:multires} we extend the solution to grids of 
polylogarithmic height.

Each slice is represented with a B-tree ordered by the labels
(i.e., $x$-coordinates) of the points, of arity $\rho$ to $2\rho-1$, for some 
sublogarithmic $\rho$ to be defined later (as usual, the root can have arity
as low as 2). Thus the B-tree has height $O(\log_\rho n)$.
 At each internal node $u$ with $a(u)$ children $v_1,\ldots,
v_{a(u)}$, we will store $a(u)$ arrays $W_{v_1}[0..r-1],\ldots,
W_{v_{a(u)}}[0..r-1]$. 
In these arrays $W_v$, $W_v[y]$ is the point $p$ with maximum
weight among all points (1) whose $x$-coordinates belong to the 
subtree of $v$, (2) with $y$-coordinate equal to $y$, and (3) 
not stored in $W_u[y]$ for ancestors $u$ of $v$ (some $W_v[y]$ cells can be
empty, if no point with $y$-coordinate $y$ exists below $v$). 
We will also store a structure $W_{root}$ for the root node $root$.
Thus $W_{root}[y]$ contains the point $p_r$ of maximum weight among all 
points with $y$-coordinate $y$; for a child $v$ of $root$, $W_{v}[y]$ contains 
the point $p_v$ of maximum weight among all points $p\not=p_r$ in the subtree
of $v$ with $y$-coordinate $y$. In general, 
all points already stored in ancestors are excluded from consideration.
We store $W_v[y] = (x,w,d)$, where $x$ is the $x$-coordinate, $w$ is the weight, and $d$
is the document of the point. 
Each point is also stored in the corresponding leaf
node. Those points in $W_v$ are not used to separate the $x$-coordinates of 
the points in the tree. Instead, new labels $x(v_1) \ldots x(v_{a(u)-1})\in X$ 
will be created and stored at node $u$, to split the points between 
its $a(u)$ consecutive children. That is, the $x$-coordinate of any
point stored below $v_i$ will be between $x(v_{i-1})$ and $x(v_i)$. 
The size of the list $X$ stays $O(n)$.

The leaves of the B-tree will store $r$ to $2r-1$ points. Leaves store the 
actual points, even if they are also mentioned in some previous $W_v$ structure.
The points in leaves $l$ are arranged in an array $W'_l$, which is similar to
the arrays $W_v$ and list the points in increasing $y$-coordinate order, except
that $W'_l$ has no empty cells and some $y$-coordinates can be repeated in the
points. Therefore the $W'_l$ cells store the full point data,
$W'_l[j] = (x,y,w,d)$.

To each internal node $u$ with children $v_1,\ldots,v_{a(u)}$ we will also 
associate structures $\Yx_u[0..a(u)r-1]$ and $\Yw_u[0..a(u)r-1]$, where the 
child numbers and the $y$-coordinates of the (up to) $r$ points of the $a(u)$ 
arrays $W_{v_i}$ are sorted by their $x$-coordinate 
label (in $\Yx_u$) and by their weight (in $\Yw_u$). That is, in $\Yx_u$ and 
$\Yw_u$ we store the pair $(i,y)$ for each entry $W_{v_i}[y]$, ordered by 
$W_{v_i}[y].x$ (in $\Yx_u$) or by $W_{v_i}[y].w$ (in $\Yw_u$). 
Each value stored in $\Yx_u$ and $\Yw_u$ requires $\lg (2\rho r)$ bits,
thus all the values in these two structures add up to at most
$4\rho r\lg (2\rho r)$ bits. 
Each time we modify a value in a $W_{v_i}$ array, we rebuild from scratch the 
$\Yx_u$ and $\Yw_u$ structures of the parent $u$ of $v_i$.

We will also maintain structures $\Yx_l'$ and $\Yw_l'$ on the (up to) $2r-1$
points of leaves $l$, analogous to the $\Yx_u$ and $\Yw_u$ structures of 
internal nodes. Instead of the pairs $(i,y)$, structures $\Yx_l'$ and $\Yw_l'$ 
will just store positions $j$ of the array $W_l'$ (those positions would 
coincide with $y$-coordinates in internal nodes). Leaves will also store an
array $Y_l[0..r-1]$ where $Y_l[y]=j$ if $j$ is the last position where
$W'_l[j] < y$. Finally, leaves will store 
bitmaps $Q_l$ marking in $Q_l[j]$ whether the point in $W_l'[j]$ also appears 
in the $W_v$ array of an ancestor $v$ of $l$.

We will end up choosing $\rho r=o(\log n)$, and thus 
universal tables of $2^{4\rho r\lg (2\rho r)}\cdot O(\mathrm{polylog}~(\rho r))
= o(n)$ bits will be used to query and update the
arrays $\Yx_u$ and $\Yw_u$, in constant time. Similarly, leaves will use even
smaller universal tables of $2^{4r\lg (2r)}\cdot O(\mathrm{polylog}~(r))=o(n)$
bits.

The whole data structure requires linear space, because the leaves contain
$\Theta(r)$ points. The $W_v$ arrays of internal nodes spend $\Theta(r)$ 
words and can 
be almost empty (if all the descendants have the same $y$-coordinate,
say), but there are only $O(n/r)$ internal nodes. If the whole grid contains 
less than $r$ points, we just store the space for them in a leaf.

\subsubsection{Insertions}

Consider the insertion of a new point $(x,y,w,d)$, 
with label $x \in X$, $y$-coordinate $y \in [0,r)$, weight $w$ and document $d$.
While following the normal insertion procedure on the B-tree 
(where we compare the labels $x(v_i)$ of the nodes with $x$ to decide 
the insertion path), we look for the highest node $v$ with $W_v[y].w < w$ or 
with $W_v[y]$ empty. For the first (i.e., highest) such $v$ we find, 
we set $W_v[y] \leftarrow (x,w,d)$, and then we continue the classical insertion
procedure (not looking at $W_v[y]$ entries anymore) until adding the
point $(x,y,w,d)$ in a leaf $l$. In the leaf we mark in
the corresponding $Q_l$ entry whether we had updated an entry $W_v[y]$ in
some ancestor $v$.

If we updated some $W_v[y]$, and it already had a previous value 
$W_v[y]=(x',w',d')$, we perform a process we call {\em reinsertion} of 
$(x',w',d')$.
We restart the process of inserting the point $(x',y,w',d')$ from node $v$
(note that this point already exists in a leaf; reinsertion will not alter the
structure of the tree, but just rewrite some $W$ and $Q$ values). 
In the reinsertion path, if 
we arrive at a node $v'$ where $W_{v'}[y].w < w$, we set
$W_{v'}[y] \leftarrow (x',w',d')$. If there was a previous value 
$W_{v'}[y]=(x'',w'',d'')$, we continue the reinsertion process for point
$(x'',w'',d'')$ from node $v'$, and so on until either we find an empty space 
in some $W_u[y]$ or we reach the leaf $l$ where the point being reinserted is 
actually stored. In this latter case, we clear the corresponding bit in $Q_l$, 
indicating that this point is not stored anymore in an ancestor structure.

Thus we traverse two paths, one for inserting the point, and another for
reinserting the point(s) possibly displaced from some $W_v[y]$ structure.
This part of the operation requires, in the worst case, $O(\log_\rho n)$ updates
to the structures $\Yx_u$ and $\Yw_u$ of the parents $u$ of nodes $v$ where 
$W_v$ is modified, plus an insertion in a leaf.

\paragraph{Rebuilding structures $\Yx_u$ and $\Yw_u$.}
Upon an assignment $W_{v_i}[y] \leftarrow (x,w,d)$, we must rebuild the 
structures $\Yx_u$ and $\Yw_u$ of the parent $u$ of $v_i$. 
We binary search $\Yx_u$ for $x$, and binary search $\Yw_u$ for $w$,
both in $O(\log (\rho r))$ time. 
In these binary searches we obtain the actual label and weight of each element 
of $\Yx_u$ and $\Yw_u$, respectively, using its $(i,y)$ pair, as 
$W_{v_i}[y].x$ and $W_{v_i}[y].w$.
These binary searches give the insertion positions $0 \le e < 2\rho r$ and 
$0 \le g < 2\rho r$, respectively, of the pair $(i,y)$ in $\Yx_u$ and $\Yw_u$.
Note that the new contents of $\Yx_u$ and $\Yw_u$ depend only on their current 
contents, on the values $e$ and $g$, and on the incoming pair $(i,y)$ (the
existing occurrence of $(i,y)$, if any, must be removed). Thus, the new content
of $\Yx_u$ and $\Yw_u$ for each $(e,g,i,y)$ can be precomputed in a universal 
table of $o(n)$ bits, as explained, so that they are updated in constant time.
Therefore the time to update the structures is $O(\log (\rho r))$, and the
cost of a full reinsertion process is $O(\log_\rho n \log (\rho r))$.

Removing the value of a cell $W_{v_i}[y]$ is analogous. We can insert all the
cells of a whole new $W_v$ array, or remove all the cells of a whole $W_v$ 
array, one by one in time $O(r\log (\rho r))$. When a whole array is inserted
or removed, we have to rename all the labels $i$ in the cells $(i,y)$, but 
those updates can also be precomputed in universal tables of sublinear size.

\paragraph{Insertion in leaves.}
In leaves $l$, we must actually insert the point, possibly displacing all the
entries in $W_l'$ and recalculating $Y_l$, $\Yx'_l$, $\Yw'_l$ and $Q_l$, all
in $O(r)$ time. When a leaf overflows to $2r$ points, we
must split it into two leaves $l'$ and $l''$ of $r$ points each. We first
remove the array $W_l$ from the parent $u$ of $l$, clearing the corresponding
bits in $Q_l$. Now we distribute the points of $W_l'$ into the new arrays 
$W_{l'}'$ and $W_{l''}'$, and make $l'$ and $l''$ children of $u$, replacing 
the old $l$. We create a new label $x(l') \in X$ right after the largest 
$x$-coordinate in $l'$, and add it to $u$ separating $l'$ and $l''$.

Now we build new arrays $W_{l'}$ and $W_{l''}$. Those arrays, as well as the 
$Y$, $\Yx'$, $\Yw'$ and $Q$ structures of $l'$ and $l''$, are built in $O(r)$ 
time 
from $W_l'$, $\Yx_l'$, $\Yw_l'$ and $Q_l$. We mark in $Q_{l'}$ and $Q_{l''}$ the
points that have been included in $W_{l'}$ and $W_{l''}$ (we cannot choose
any point for $W_{l'}$ and $W_{l''}$ that is already marked in $Q_l$).
Finally, we insert $W_{l'}$ and $W_{l''}$ in $u$.

The overall time is $O(r)$, but this is dominated by the $O(r\log(\rho r))$ 
time needed to update the $\Yx_u$ and $\Yw_u$ arrays upon the $O(r)$ changes
induced by substituting $W_l$ by $W_{l'}$ and $W_{l''}$.

\paragraph{Overflows in internal nodes.}
Further, the insertion in the parent $u$ can trigger an overflow in this
internal node, if its arity reaches $2\rho$. We must split $u$, with children
$v_1,\ldots,v_{2\rho}$, into two nodes, $u'$ with children $v_1,\ldots,v_\rho$
and $u''$ with children $v_{\rho+1},\ldots,v_{2\rho}$.
The process is analogous to the case of leaves,
but slightly more complicated. We create a new $x$-coordinate $x(u') \in X$ 
following $x(v_\rho)$, to separate the points of $u'$ and $u''$. 
We create the two nodes with their corresponding arrays $W_{v_i}$, and build 
the tables $\Yx$ and $\Yw$ of $u'$ and $u''$, in $O(\rho r)$ time from 
$\Yx_u$ and $\Yw_u$. 

Now we must create new arrays $W_{u'}$ and $W_{u''}$ to replace $W_u$ in the 
parent of $u$. First, we move the points in $W_u[y]$ into $W_{u'}[y]$ and 
$W_{u''}[y]$, according to their $x$-coordinate. Now we can get rid of $W_u$, 
but we still have several empty cells in $W_{u'}[y]$ and $W_{u''}[y]$. Those 
are filled with a process we call {\em uninsertion}: To fill some cell 
$W_{u'}[y]$ (analogously for $u''$), we take the maximum weight in cells 
$W_{v_1}[y],\ldots,W_{v_\rho}[y]$. The maximum $W_{v_i}[y].w$ 
is found in constant time using a universal table on $\Yw_{u'}$.
Then we copy $W_{u'}[y] \leftarrow W_{v_i}[y]$, and continue the 
uninsertion process for $W_{v_i}[y]$. 
When we finally arrive at uninserting a point
from a leaf $l$, all we have to do is to mark the corresponding entry in $Q_l$.
Note that uninsertion does not alter the structure of the tree; it just 
rewrites some $W$ and $Q$ values.
The cost of one uninsertion is
$O(\log_\rho n \log(\rho r))$, to rebuild the affected structures $\Yx$ and
$\Yw$. Thus the $O(r)$ uninsertions in $u'$ and $u''$ add up to 
$O(r\log_\rho n \log(\rho r))$ time, 
which subsumes the $O(r\log(\rho r))$ cost to 
replace $W_u$ by $W_{u'}$ and $W_{u''}$ in the parent of $u$. 

\medskip

Note that the insertion of a single point could produce one split per level of
the B-tree. To avoid this, we use a deamortization technique by Fleischer
\cite{Fle96}. This maintains an $(\alpha,2\beta)$-tree (for $\alpha \le 2\beta$)
storing $n$ keys in the leaves, and each leaf is a bucket storing at
most $2\log_\alpha n$ keys. It supports constant-time insertion and deletion of
a key once its location in a leaf is known, guaranteeing at most one split per
insertion or deletion. The premises are consistent with our setting, with
$\alpha=\beta=\rho$, and storing $O(r) = o(\log n)$ keys per leaf.

\subsubsection{Deletions}

Deletion of a point $(x,y)$ starts by searching the B-tree for the 
$x$-coordinate $x$. The point will be found in its leaf, and also possibly
in some cell $W_v[y]$ of some internal node $v$. The search takes
$O(\log n)$ time because, for internal nodes $u$, we binary search the
coordinates $x(v_i)$ stored in $u$ for the correct child $v$, in $O(\log\rho)$ 
time, and then only have to check if $W_v[y].x = x$. In leaves $l$, we binary 
search for $x$ in $\Yx_l'$ in $O(\log r)$ time.

If the point has to be deleted from some $W_v[y]$, we carry out the
{\em uninsertion} process already described, in $O(\log_\rho n \log(\rho r))$
time. We also remove the point $(x,y)$ itself from leaf $l$. 
When a leaf $l$ underflows, we merge it with a neighbor leaf and, if necessary,
split it again. The merging process is analogous to the splitting and can be
easily carried out in $O(r)$ time, plus $O(r\log(\rho r))$ to update the
structures $\Yx$ and $\Yw$ in the parent.

If an internal node underflows, we also merge it with its neighbor and 
re-split it if necessary. The merging of two sibling nodes $v$ and $v'$
is carried out in $O(\rho)$ time, including the construction of the $\Yx$
and $\Yw$ structures for the merged node, $u$. The difficult part is, again, to
get rid of the arrays $W_v$ and $W_{v'}$ at the parent node, replacing them
by a new $W_{v^*}$ table for the merged node $v^*$. For this sake, we choose the
maximum weight between each $W_v[y]$ and $W_{v'}[y]$ and assign it to 
$W_{v^*}[y]$.
The point that was not chosen among $W_v[y]$ and $W_{v'}[y]$ must be 
{\em reinserted}, as before.
Finally, we must rebuild the $\Yx$ and $\Yw$ structures of the parent of $v^*$.
The total cost is $O(r\log_\rho n\log(\rho r))$, just as for insertions.

Note that, upon leaf or internal node merges, a separating label $x(v)$ becomes
unused, and it is removed from $X$. Again, Fleischer's technique \cite{Fle96}
ensures at most one underflow per update.

\subsubsection{Queries}

\paragraph{Identifying the relevant nodes.}
To solve a top-$k$ query with label restriction $[a,b)$ and $y$-coordinate 
restriction $[0,y)$ on the slice, we first identify the $O(\log_\rho n)$ ranges
of siblings of the B-tree tree that exactly cover the interval of labels 
$[a,b)$; plus up to 2 leaf nodes that partially overlap the interval. 
For each node $u$ that is the parent of a range of children $v_s,\ldots,v_e$
included in the cover, we find the maximum weight in $W_{v_s}[0,y-1],\ldots,
W_{v_e}[0,y-1]$ and insert the result in a max-priority 
queue $\cQ$ sorted by the weights of the points. 
Such maximum across $W_{v_i}[0,y-1]$ arrays is obtained in constant time using
universal tables on $\Yw_u$. If $l$ is a leaf partially covering $[a,b)$, 
then the interval is some $W_l'[0,y']$, where $y'=Y_l[y]$.
In addition, we must binary search $\Yx_l'$ for the range $[x_a,x_b]$ 
corresponding to the interval $[a,b)$.
Furthermore, we can only return points whose $Q_l$ bit is not set, to avoid 
repeated answers. Knowing the range in $\Yx_l'[x_a,x_b]$ and the range
$W_l'[0,y']$, the
maximum weight can be obtained from $\Yx_l'$, $\Yw_l'$ and $Q_l$ with a 
universal table, in constant time. 
Identifying the cover nodes and finding their $O(\log_\rho n)$
maxima takes $O(\log n)$ time, and leaves add only $O(\log r)$ time.

Each element inserted in $\cQ$ coming from a range of siblings will be a tuple 
$(u,s,e,i,z,k)$, where $u$ is the parent node of the range of children
$v_s,\ldots,v_e$ in the cover, $(i,z)$ means that the maximum was found at 
$W_{v_i}[z]$ ($s \le i \le e$), 
and $k$ indicates that the point $W_{v_i}[z]$ is the $k$th in the 
range of interest for $u$. All the nodes initially inserted have $k=1$.

The elements inserted in $\cQ$ coming from leaves $l$ are of the form
$[l,j,x_a,x_b,k]$, meaning that the maximum was found in $W_l'[j]$, that
the range of interest is $W_l'[0,Y_l[y]]$ and $\Yx'[x_a,x_b]$, and that the 
point is the $k$th in the range of interest. The first insertions use $k=1$.

We also insert in $\cQ$ a third kind of tuples, namely, the maximum-weight 
point in $W_v[0,y-1]$ with $x$-coordinate in $[a,b)$, for each of the 
$O(\log_\rho n)$ ancestors $v$ of the cover nodes, as they may also hold 
relevant points. To find those maxima we consider the parent $u$ of $v$ and
binary search $\Yx_u$ for $a$ and $b$, to find a mapped interval 
$\Yx_u[x_a,x_b]$, in $O(\log(\rho r))$ time.
Note that this area of $\Yx_u$ corresponds to nodes in $W_v$. Then we use
universal tables on $\Yx_u$ and $\Yw_u$ to find the maximum weight of
$y$-coordinate below $y$ and in the range $\Yx_u[x_a,x_b]$.
For these nodes we insert tuples of the form
$\langle u,v,z,x_a,x_b,k\rangle$ in $\cQ$, meaning that the maximum was obtained
from $W_v[z]$, the range of interest is $\Yx_u[x_a,x_b]$, and the point is the
$k$th in its range of interest. Therefore the initial computation on these
nodes requires $O(\log_\rho n \log(\rho r))$ time. 
Recall that the root node of the B-tree will also
have a $W$ structure computed (this is easily treated as a special case).

We implement $\cQ$ as a Thorup's priority queue \cite{Tho04} on the 
universe of weights $[1,O(n)]$. Note that we do not need to insert the whole 
initial set of $O(\log_\rho n)$ tuples in $\cQ$ if this number exceeds $k$:
if a tuple is not among the first $k$, it cannot contribute to the answer.
Then we use linear-time selection to find the $k$th largest weight in the
tuples and then insert only the first $k$ tuples in $\cQ$. This structure
supports insertions in constant time, thus
the initialization of $\cQ$ takes time $O(\log_\rho n)$.

\paragraph{Extracting the top-$k$ points.}
The first answer to the top-$k$ query is among the $O(k)$ tuples
we have inserted in $\cQ$. Therefore, to obtain the first result, we extract 
the tuple with maximum weight from $\cQ$. If it is of the form
$[l,j,x_a,x_b,k]$, that is, it comes from a leaf $l$, we report the
point $W_l'[j]$, compute the $(k+1)$th highest-weight point $W_l'[j']$
within $W_l'[0,Y_l[y]]$ and $\Yx_l'[x_a,x_b]$ using universal tables, and 
reinsert tuple $[l,j',x_a,x_b,k+1]$ in $\cQ$. If, instead, the maximum tuple
extracted from $\cQ$ is of the form
$\langle u,v,z,x_a,x_b,k\rangle$, that is, it becomes from an ancestor of a
cover node, we report the point $W_v[z]$, compute the $(k+1)$th highest-weight 
point $W_v[z']$ with $y$-coordinate below $y$ and within $\Yx_u[x_a,x_b]$
using universal tables, and reinsert tuple
$\langle u,v,z',x_a,x_b,k+1\rangle$. Finally, if the maximum tuple extracted
from $\cQ$ is of the form $(u,s,e,i,z,k)$, we report the point
$W_{v_i}[z]$, where $v_i$ is the $i$th child of $u$, compute the $(k+1)$th 
highest-weight point $W_{v_{i'}}[z']$ in 
$W_{v_s}[0,y-1],\ldots,W_{v_e}[0,y-1]$ using universal tables, and reinsert 
tuple $(u,s,e,i',z',k+1)$. If the extracted point had $k=1$, however, it is
possible that the next highest-weight element comes from the child $v_i$.
Therefore, if $v_i$ is an internal node, we compute the highest-weight point 
in $\Yw_{v_i}$ with $y$-coordinate below $y$. 
Let it be the pair $(i'',z'')$, then we insert a new tuple 
$(v_i,1,a(v_i),i'',z'',1)$ in $\cQ$. If, instead, $v_i$ is a leaf $l=v_i$ with
$r(l)$ elements, then we find the maximum-weight point $W_l'[j']$ in 
$W_{l}'[0,Y_l[y]]$ using $\Yw_{l}'$, and insert the tuple
$[l,j',1,r(l),1]$ in $\cQ$. In all cases the cost to compute and insert the
new tuples is constant.

\medskip

If we carry out $k$ extractions from $\cQ$, we will also carry out up to $2k$ 
insertions, thus the size of $\cQ$ will be $O(k)$ and minima extractions will
cost $O(\log\log k)$ \cite{Tho04}. The cost of this part is then 
$O(k\log\log k)$, and the total query time is
$O(\log n + \log r + \log_\rho n \log(\rho r) + \log_\rho n + k\log\log k)$. 
Given a constant $0 < \eps < 1/2$, we will choose $r = \rho = \lg^\eps n$,
fulfilling the promise that $\rho r = o(\log n)$.
Therefore, the update cost becomes
$O(r\log_\rho n\log(\rho r)) = O(\log^{1+\eps} n)$, and the query time
becomes $O(\log n + k\log\log k)$. 
Note that the process is not online: We must know $k$ in advance so as to 
initially limit the size of 
$\cQ$ to $k$. We use the technique of Section~\ref{sec:online} to make the
process online in $k$. That is, $k$ is not specified in advance and the process
can be interrupted after having produced any number $k$ of results, and the
total cost paid will be $O(\log n + \log\log k)$.

\subsection{Multiresolution Grids} \label{sec:multires}

We extend the result of Section~\ref{sec:slimgrid} to grids of polylogarithmic
height $r^c$, for some constant $c$. We will represent the grid at various 
resolutions and split it into slim grids for each resolution. 
Consider a virtual perfect tree of arity $r$ and $n$ leaves, so that the
$i$th left-ro-right node of height $j$ covers the rows $(i-1)\cdot r^j+1$
to $i\cdot r^j$. The tree is of height $c$.

For each node $v$ of this tree we store a slim grid of $r$ rows, one per child.
All the points whose row belongs to the area covered by the $i$th child of $v$ 
will be represented as having $y$-coordinate $i$ in the slim grid of $v$. 

When a new point $(x,y,w,d)$ is inserted in the grid, we insert it into the
$c$ slim grids that cover it, giving it the appropriate row value in each
slim grid, and similarly when a point is deleted. The $x$-coordinate labels
are shared among all the grids. This arrangement multiplies space and insertion
and deletion times by the constant $c$.

Now consider a 3-sided top-$k$ query with the restriction $[a,b)$ on the
$x$-coordinates and $[0,y)$ on the $y$-coordinates. The range $[0,y)$ is
covered with the union of one range in one slim grid per level of the tree.
Let $y_c, \ldots, y_1$ be the child numbers of the path from the root to
the $y$-th row of the grid. Then we take the 3-sided query $[a,b) \times
[0,y_j)$ at the node of height $j$ in the path.

We start the searches in the $c$ slim grids, and extract the first result
from each grid. We insert those local maxima into a new global queue $\cQ$. 
Now we repeat $k$ times the process of extracting the next result from $\cQ$, 
reporting it, requesting the next result from the grid where the result came
from, and inserting it in $\cQ$. Note that $\cQ$ can be implemented naively
because it contains at most $c$ elements and $c$ is a constant.

Initializing the searches will then require $O(c \log n)$ 
time, and extracting $k$ results from the slim grids will require 
$O(k\log\log k)$ time. Managing $\cQ$ will require $O(ck)$ time
even if done naively. Therefore the total time is still $O(\log n
+ k\log\log k)$. The update time per element stays 
$O(c \log^{1+\eps} n)$. The process is also online. 
Then we obtain the following lemma.

\begin{lemma}\label{lem:dynweight}
  A set of $n$ points, one per column on an $n\times r^c$ grid, for 
  $r=\lg^\eps n$ and any constant $0<\eps<1$ and $c\ge 1$,
  with weights in $[1,O(n)]$, can be stored in
  $O(n)$ words of space, so that for any $1\leq k \le n$, $1 \le h\leq r^c$ and
  $1\leq a\leq b\leq n$, $k$ most highly weighted points in the range
  $[a,b]\times [0,h]$ can be reported in 
  decreasing order of their weights in $O(\log n + k\log\log k)$
  time, online in $k$. Points (and their columns) can be inserted and deleted 
  in $O(\log^{1+\eps} n)$ time.
\end{lemma}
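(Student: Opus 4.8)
The plan is to assemble the lemma essentially as a bookkeeping summary of the constructions already carried out in Sections~\ref{sec:slimgrid} and~\ref{sec:multires}, verifying that each promised bound survives the composition. First I would fix the parameters as announced, $r=\rho=\lg^\eps n$ with $0<\eps<1/2$ (so that $\rho r = \lg^{2\eps} n = o(\log n)$, which is what licenses the universal tables of $2^{O(\rho r\lg(\rho r))}=o(n)$ bits used throughout Section~\ref{sec:slimgrid}), and note that a single point per column means the total number of points is $n$, matching the $n$ columns. The virtual tree of Section~\ref{sec:multires} has arity $r$ and height $c$, so a grid of $r^c$ rows is decomposed into $c$ slim grids of height $r$ each, and the constant $c$ multiplies space and update time but not asymptotically.

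The key steps, in order, are: (1) invoke the slim-grid structure of Section~\ref{sec:slimgrid} to get, for a single slim grid of height $r$, linear space, update time $O(r\log_\rho n\log(\rho r))$, and online three-sided top-$k$ time $O(\log n + k\log\log k)$; (2) substitute $r=\rho=\lg^\eps n$ to turn the update bound into $O(\lg^\eps n \cdot \log n/\log\log n \cdot \log\log n) = O(\log^{1+\eps} n)$, while the query bound is unchanged since the $\log_\rho n\log(\rho r)$ and $\log r$ terms are dominated by $\log n$; (3) stack $c$ such slim grids via Section~\ref{sec:multires}, routing an inserted or deleted point to each of the $c$ grids with its appropriate child-number row, which multiplies space and update time by $c=O(1)$; (4) for a query $[a,b]\times[0,h]$ with $h\le r^c$, decompose $[0,h]$ along the root-to-row-$h$ path into $c$ child-number prefixes $[0,y_j)$, issue a three-sided query in each slim grid, and merge the $c$ online streams through the constant-size queue $\cQ$ of Section~\ref{sec:multires}; (5) conclude that initialization costs $O(c\log n)$, extracting $k$ results costs $O(k\log\log k)$ from the slim grids plus $O(ck)$ to manage $\cQ$, for a total of $O(\log n + k\log\log k)$, and that the whole process remains online in $k$ because each component slim-grid query is (after applying Section~\ref{sec:online}) online and $\cQ$ never needs to know $k$ in advance.

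The main obstacle is not any single calculation but making sure the composition in step~(4) genuinely preserves \emph{online} behavior and \emph{decreasing weight order}: one must check that pulling the global maximum from $\cQ$, reporting it, and then requesting only the next element from the originating slim grid (which itself delivers its elements in decreasing order, online) yields a globally sorted stream — this is the standard $c$-way merge argument, valid because the per-grid streams are individually sorted and the dominating weight can only come from one of the $c$ current heads. A secondary point to state carefully is that the slim-grid query of Section~\ref{sec:slimgrid} was described as requiring $k$ up front (to cap $|\cQ|$ at $O(k)$), so here we must explicitly invoke the deamortization of Section~\ref{sec:online} on each slim grid before composing; that transformation has $f(n)=\Theta(\log n)$ and $g(n)=\Theta(\log\log k)$ in our case, so the doubling schedule leaves the asymptotic bound $O(\log n + k\log\log k)$ intact. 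Everything else — linear space from the $\Theta(r)$ points per B-tree leaf and $O(n/r)$ internal nodes, universal tables of $o(n)$ bits, Fleischer's deamortization guaranteeing one split per update — is quoted directly from the earlier subsections.
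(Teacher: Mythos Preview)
Your proposal is correct and follows essentially the same approach as the paper: the lemma is indeed stated as a summary of the slim-grid construction of Section~\ref{sec:slimgrid} composed with the constant-height multiresolution tree of Section~\ref{sec:multires}, with the parameter choice $r=\rho=\lg^\eps n$ and the online wrapper of Section~\ref{sec:online} applied exactly as you describe. Your verification of the update bound $O(r\log_\rho n\log(\rho r))=O(\log^{1+\eps}n)$, the query bound $O(c\log n + k\log\log k + ck)$, and the $c$-way merge preserving sorted online output matches the paper's own accounting.
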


\subsection{The Final Result} \label{sec:dynfinal}

We find the locus $v$ of $P$ in the suffix tree in time 
$O(p(\log\log n)^2/\log_\sigma n+\log n)$. 
Then the $x$-coordinate range of labels to search for in the grid is $[a,b)$,
where $a$ is the first label in $\mathit{list}(v)$ and $b =\mathit{last}(v)$.
Since we store string depths in the grid, the 
$y$-coordinate range of the query is $[0,p)$.

Our dynamic grid is horizontally split into bands of $r^c$ rows, for a constant
$c$, which are handled as explained in Section~\ref{sec:multires}. 
Therefore, our 3-sided query is translated into 3-sided queries on the first 
$\lceil p/r^c\rceil$ bands. All but the last will query for the whole row
interval $[0,r^c)$, whereas the latter will query for the row interval
$[0,(p-1)~\mathrm{mod}~r^c]$.

We start the searches in all the bands, and extract the first result
from each. If there are more than $k$ bands, we use linear-time selection to
keep only the $k$ highest weights. Then we insert the local maxima into a new 
global queue $\cQ$. 
Now we repeat $k$ times the process of extracting the first result from $\cQ$, 
and if it came from the $i$th band, then we request the next result from that
band and insert it in $\cQ$ (unless it has no more results, in which case we
continue with the remaining bands). Again, $\cQ$ will be implemented with
Thorup's priority queue \cite{Tho04}.

Initializing the searches will then require $O(\lceil p/r^c\rceil \log n)$ time,
and extracting $k$ (and inserting other $k$) results in $\cQ$ will take time 
$O(k\log\log k)$. We choose $r=\lg^\eps n$ for some $0<\eps<1/2$, as explained,
and rename $c$ as $(c+1)/\eps$. Therefore, we
obtain a query time of $O(p/\log^c n + \log n + k\log\log k)$ for the grid.
Once again, the scheme can be made online with the technique of
Section~\ref{sec:online}.
Updating grid points, including extending the grid downwards, requires 
$O(\log^{1+\eps} n)$ time. This yields Theorem~\ref{theor:dynweight}. 

We developed our result for $\tf$ as the relevance measure.
It is very easy to support others like $\docrank$, but if the weights are 
not integer numbers, then Thorup's priority queues \cite{Tho04} cannot be
used. In this case we insert all the new weights that appear in a data structure
for monotonic list labeling, which assigns them integers in a polynomial
universe $[1,n^{O(1)}]$. 
This adds at most $O(\log n)$ time per symbol inserted \cite{DR93} 
(deletions can be handled by deamortized periodic rebuildings).
In general we can support any
measure that can be computed in time $O(|d|C_w)$ over the suffix tree of the
document to insert: We explicitly build such suffix tree, compute the relevance
measure for all the nodes, and then use them to assign the weights as we insert
the nodes in our suffix tree. At the end we delete the suffix tree of $d$. This
suffix tree can be built (and deleted) in $O(|d|)$ time on integer alphabets
\cite{Far97}. Therefore we simply charge $O(C_w)$ time per character inserted
in our text collection. Note that $C_w$ is $O(1)$ for measures $\tf$ and
$\docrank$. Hon et al.~\cite{HSV09} show how to compute $\mindist$ from
the suffix tree of the document in $O(|d|\log |d|)$ time, so $C_w=O(\log n)$
in this case.

Our scheme works as long as $\lg n$ has a fixed value (plus $O(1)$). We use
standard techniques to incrementally rebuild the structure for larger or smaller
$\lg n$ values as more insertions or deletions are processed.

\section{A Space-Efficient Data Structure}
\label{sec:freqspace}

We show now how the space of our static structure can be reduced to
$O(n(\log\sigma + \log D + \log\log n))$ bits, where $\sigma$ is the 
alphabet size and $D$ is the number of documents, and retain almost the same 
query time. Our approach is to partition the tree into minitrees, which are
represented using narrower grids.

\paragraph{Partitioning the tree.} We define $z = \Theta(\sigma D \log n)$.
We say that a node $v\in T$ is {\em heavy} if the subtree rooted at $v$ 
has at least $z$ leaves, otherwise it is {\em light}.
A heavy node is \emph{fat} if it has at least two heavy children, otherwise
it is \emph{thin}. 

All the non-fat nodes of $T$ are grouped into minitrees as follows. 
We traverse $T$ in depth-first order. If a visited node $v$ has two heavy
children, we mark $v$ as fat and proceed.
If $v$ has no heavy children, we mark $v$ as thin or light, and
make $v$ the root of a minitree $T_v$ that contains all the
descendants of $v$ (which need not be traversed). 
Finally, if $v$ has one heavy child $v_1$, we mark $v$
as thin and make it the root of a minitree $T_v$. The extent of this minitree
is computed as follows. If $v_i$, $i\geq 1$, is a thin node with one heavy
child $v_{i+1}$, we visit nodes $v_1,v_2,\ldots v_{j-1}$ and include $v_i$ 
and all the descendants of its other children, until either $v_{j-1}$ has no
heavy children or is fat, or $T_v$ contains more than $\sigma z$ nodes after 
considering $v_j$. Then we continue our tree traversal from $v_j$. Note that
$T_v$ contains at the very least the descendants of $v$ by children other than 
$v_1$.

With the procedure for grouping nodes described above, the leaves of 
minitrees can be parents 
of nodes not in the minitree. Those child nodes can be either fat nodes or
roots of other minitrees. However, at most one leaf of a minitree 
can have children in $T$.

Note that  the size of a minitree is at most $O(\sigma z)$. On the other
 hand, as 
two heavy children have disjoint leaves, there are $O(n/z)$ fat nodes in $T$. 
Finally, minitrees can contain as little as one node (e.g., for leaves that
are children of fat nodes). However, note that a minitree root is either a 
child of a fat node (and thus there are $O(\sigma n/z)$ minitrees of this
kind), or a child of a leaf of another minitree such that 
the sum of both minitree sizes exceeds $\sigma z$ (otherwise we would have
included the root $v_j$ of the child minitree as part of the parent minitree).
Moreover, as said, at most one of the leaves of a minitree can be the parent 
of another minitree, so these minitrees that are ``children'' of others form
chains where two consecutive minitrees  cover at least $\sigma z$ nodes of $T$. 
Thus there are $O(n/(\sigma z))$ minitrees of this second kind. Adding up both
cases, there are $O(\sigma n/z) = O(n/(D\log n))$ minitrees in $T$.

\paragraph{Contracted tree and minitrees.}
The pointers in a tree $T$ are defined in the same way as in 
Section~\ref{sec:topkfram}.
Since we cannot store $T$ without violating the desired space bound, 
we store a {\em contracted tree $T^c$} and the minitrees $T_v$.

The contracted tree $T^c$ contains all fat nodes of $T$, plus one node
$v^c$ for each minitree $T_v$. Each pointer $\ptr(u,d)=u'$ of $T$ is
mapped to a pointer $\ptr^c(u^c,d)=(u')^c$ of $T^c$ as follows. If $u$ 
is a fat node, then $u^c=u$. Else, if $u$ belongs to minitree $T_v$, then
$u^c = v^c$. Similarly, if $u'$ is a fat node then $(u')^c = u'$; else
if $u'$ belongs to minitree $T_{v'}$ then $(u')^c = (v')^c$. In other words,
nodes of a minitree are mapped to  the single node
that represents that minitree in $T^c$ and pointers are changed 
accordingly.

For each minitree $T_v$, we store one additional dummy node $\nu$ that 
is the parent of $v$. If a leaf $u_h$ of $T_v$ has a heavy child 
$u'\not\in T_v$, we store an additional dummy node $\nu' \in T_v$ that is the only
child of $u_h$. Pointers of $T_v$ are modified as follows. 
Each pointer $\ptr(u,d)$, $u\in T_v$, that points to an ancestor 
of $v$ is transformed into a pointer $\ptr(u,d)$ that points to $\nu$. 
Every pointer $\ptr(u'',d)$ that starts in a descendant $u''$ of $u_h$ 
and points to a node $u\in T_v$, $u\not=u_h$,
 (respectively to an ancestor of $v$) 
is transformed into $\ptr(\nu',d)$ that starts in $\nu'$ and points to $u$ 
(respectively to $\nu$).   
By Lemma~\ref{lemma:uniqptr}, there are at most $D$  such pointers $\ptr(u'',d)$.
We observe that there is no need to store pointers to the node $u_h$ 
in the minitree $T_v$ because such 
pointers are only relevant for the descendants of $u_h$ that do not 
belong to $T_v$.
\paragraph{Suffix trees.}
The contracted tree $T^c$ consists of $O(n/(D\log n))$ nodes, and thus it would 
require just $O(n/D)$ bits. The minitrees contain $O(\sigma z)$ nodes,
but still an edge of a minitree can be labeled with a string of length 
 $\Theta(n)$. Instead of representing the
contracted tree and the minitrees separately, we use Sadakane's compressed 
suffix tree (CST) \cite{Sad07} 
to represent the topology of the whole $T$ in $O(n)$ bits, and a 
compressed representation \cite[Thm.~5.3]{GGV03} of the global suffix array 
(SA) of the string collection, which takes $O(n\log\sigma)$ bits. This SA 
representation finds the suffix array interval $[l,r]$ of $P$ in time 
$O(p/\log_\sigma n + \log^\eps n\log\sigma)$ for any constant $\eps>0$, and a 
lowest-common-ancestor query for the $l$-th and $r$-th leaves of $T$ finds the
locus $u$ of $P$ in $O(1)$ additional time. A bitmap $M[1,n]$ marks which
nodes are minitree roots, and another bitmap $C[1,n]$ marks which nodes are
fat or minitree roots. Both are indexed with preorder numbers of $T$, which
are computed in constant time on the CST. With a simple $O(n)$-bit structure
for constant-time marked ancestor queries that is compatible with our CST 
representation \cite[Sec.~4.1]{RNO11}, we can find the lowest ancestor $v$ of 
$u$ marked in $M$ or in $C$.
With bitmap $M$ we can identify whether $u$ belongs to a minitree rooted at 
$v$ (with local preorder $preorder_{T_v}(u) = preorder_T(u)-preorder_T(v)$ and 
depth $depth_{T_v}(u) = depth_T(u)-depth_T(v)$; depths are also computed in 
constant time). Similarly, with $C$ and $M$ we can 
identify whether $u$ is a fat node, and find out its preorder in $T^c$ as 
$preorder_{T^c}(u) = rank_1(C,preorder_T(u))$, in constant time. Its depth in 
$T^c$ can be stored in an array indexed by $preorder_{T^c}$ in $O(n/D)$ bits.

\paragraph{Contracted grid.}

We define the grid of the contracted tree $T^c$ as in 
Section~\ref{sec:topkfram}, considering all pointers $\ptr^c$.
Those are either $\ptr$ pointers leaving from fat nodes, or
leaving from inside some minitree $T_v$ and pointing above $v$.
For every fat node and for every minitree $T_v$, and for each document $d$,
there is at most one such pointer by Lemma~\ref{lemma:uniqptr}. Thus each 
node of $T^c$ contributes at most $D$ pointers $\ptr^c$. As there are 
$O(n/(D\log n))$ nodes, there are $O(n/\log n)$ pointers $\ptr^c$ in 
$T^c$.

Therefore, the grid associated to $T^c$ is of width $O(n/\log n)$ and height
$O(n/(D\log n))$. As there are $O(n/\log n)$ distinct weights among the $\ptr^c$
pointers, we only store their ranks. This change does not alter the result of 
any top-$k$ query. Therefore the data structure of Theorem~\ref{theor:weight} 
on $T^c$ occupies $O(n/\log n)$ words, or $O(n)$ bits.

\paragraph{Local grids.}

The local grid for a minitree $T_v$ collects the pointers $\ptr$ local to
$T_v$. It also includes at most $D$ pointers towards its dummy root $\nu$,
and at most $D$ pointers coming from its node $\nu'$, if it has one. Overall
$T_v$ contains $O(\sigma z)$ pointers and $O(\sigma z)$ nodes, so its grid
is of size $O(\sigma z) \times O(\sigma z)$. The weights are also replaced by 
their ranks, so they are also in the range $[1,O(\sigma z)]$. Using
Theorem~\ref{theor:weight} the minitree requires $O(\log(\sigma z))$ bits
per node. Added over all the nodes of $T$ that can be inside minitrees, the
total space is $O(n\log(\sigma z)) = O(n(\log \sigma + \log D + \log\log n))$.
Note that the tree topology is already stored in the CST, so information 
associated to nodes $u\in T_v$ such as the  intervals $[l_u,r_u]$ can be
stored in arrays indexed by preorder numbers.

\paragraph{Queries.}
Given a query pattern $P$, we find the locus $u$ of $P$ and determine
whether $u$ is a fat node or it belongs to a  minitree in $O(p/\log_\sigma n
+ \log^\eps n \log \sigma)$ time,
as explained. If $u$ is fat, we solve the query on the contracted grid of
$T^c$. Note that this grid does not distinguish among different 
nodes in the same minitree.
But since $u$ is an ancestor either of all 
nodes in a minitree or of none of them, such distinction is not necessary.

If $u$ belongs to a minitree $T_v$, we answer the query using the corresponding
local grid. This grid does not distinguish where exactly  the 
pointers pointing to $\nu$ lead, nor where exactly the pointers that
originate in $\nu'$ come from. 
Once again, however, this information is not important 
in the case where the locus $u$ of $P$ belongs to $T_v$.

Note that we still need to maintain the global array mapping $x$-coordinates
to document identifiers. This requires $O(n\log D)$ bits.

\begin{theorem}\label{theor:topkspace}
  Let $\mathcal{D}$ be a collection of  $D$ documents over an integer 
  alphabet $[1,\sigma]$ with  
  total length $n$, and  let $w(S,d)$ be a function  that assigns a
  numeric  weight
  to string $S$ in document $d$, that depends only on the set of 
  starting positions of occurrences of $S$ in $d$. 
  Then there exists an $O(n(\log D+\log \sigma +\log \log n))$-bit 
  data structure that, given a string
  $P$ and an integer $k$, reports $k$ documents $d$ containing $P$ with highest
  $w(P,d)$ values, in decreasing order of $w(P,d)$, in $O(p/\log_\sigma n +
\log^\eps n \log\sigma +k)$ time, for any constant $\eps>0$.
\end{theorem}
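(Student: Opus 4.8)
The space analysis is essentially already laid out: the compressed suffix tree topology \cite{Sad07} and the marked-ancestor structure \cite{RNO11} take $O(n)$ bits; the compressed suffix array \cite{GGV03} takes $O(n\log\sigma)$ bits; the contracted grid built via Theorem~\ref{theor:weight} on $T^c$ occupies $O(n)$ bits (since $T^c$ has $O(n/(D\log n))$ nodes, $O(n/\log n)$ pointers, and weights reduced to ranks); the local grids occupy $O(n\log(\sigma z)) = O(n(\log\sigma + \log D + \log\log n))$ bits over all minitree nodes, using $z = \Theta(\sigma D\log n)$; the auxiliary arrays ($[l_u,r_u]$ indexed by preorder, depths in $T^c$ indexed by $preorder_{T^c}$) add $O(n/D)$ or $O(n)$ bits; and the global array mapping $x$-coordinates to document identifiers takes $O(n\log D)$ bits. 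Summing, the dominant term is $O(n(\log D + \log\sigma + \log\log n))$.

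\textbf{For correctness of the query, I would proceed in the following steps.} First, given $P$, locate its suffix-array interval $[l,r]$ in $O(p/\log_\sigma n + \log^\eps n\log\sigma)$ time using the compressed SA, and obtain the locus $u$ of $P$ as the LCA of the $l$-th and $r$-th leaves of $T$ in $O(1)$ time on the CST. Second, using the marked-ancestor structure on bitmaps $M$ and $C$, determine in $O(1)$ time whether $u$ is a fat node of $T^c$ or belongs to some minitree $T_v$ (and in the latter case recover $v$, the local preorder $preorder_{T_v}(u)$, and the local depth $depth_{T_v}(u)$). Third, branch: if $u$ is fat, run the three-sided top-$k$ query on the contracted grid of $T^c$; if $u\in T_v$, run it on the local grid of $T_v$, using the locally-indexed interval $[l_u,r_u]$ and local depth as the geometric query parameters. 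Fourth, in both cases translate the returned $x$-coordinates to document identifiers via the global array, and note that since ranks preserve the order of weights, the reported set and order are exactly those of $w(P,d)$. The query time is $O(p/\log_\sigma n + \log^\eps n\log\sigma)$ for the suffix-tree/locus phase plus $O(k)$ for the grid phase (as $h \le p$ and Theorem~\ref{theor:weight} gives $O(h/\log^c n + k)$ with $c\ge 1$), yielding the claimed bound.

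\textbf{The key correctness obligation — the main conceptual point rather than an obstacle — is to argue that collapsing minitrees in $T^c$ and redirecting pointers to dummy nodes $\nu,\nu'$ in each $T_v$ loses no information relevant to the query.} This rests on the structural fact established just before the theorem: since a minitree is entirely below its root or entirely disjoint from any given locus, the locus $u$ of $P$ is an ancestor of all nodes of $T_v$ or of none. Hence when $u$ is fat, every pointer local to a minitree $T_v$ that is relevant (i.e.\ points above $v$, by Lemma~\ref{lemma:uniqptr}) has its source's exact identity inside $T_v$ irrelevant — only the contracted node $v^c$ and its depth matter, which is exactly what $T^c$ records. Conversely, when $u\in T_v$, the only relevant pointers are those local to $T_v$ plus (for descendants of the leaf $u_h$ with a heavy child outside) the at most $D$ pointers rerouted through $\nu'$, and those pointing above $v$ are safely rerouted to $\nu$ since $depth(\nu)=depth(v)-1 < depth(u)$ captures the only fact needed for the three-sided range $[0,depth(u))$. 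Once this invariant is spelled out, applying Lemma~\ref{lemma:uniqptr} within the (contracted or local) grid shows that each qualifying document $d$ contributes exactly one point in the query range with weight $w(P,d)$, so the three-sided top-$k$ primitive of Theorem~\ref{theor:weight} returns precisely the desired answer.
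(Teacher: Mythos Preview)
Your proposal is correct and follows the paper's approach essentially verbatim: the same space decomposition (CST, compressed SA, contracted grid on $T^c$, local grids on minitrees, global $x$-to-document array), the same query branching on whether the locus is fat or lies in a minitree, and the same correctness justification based on the ancestor-of-all-or-none property of minitrees and the rerouting through $\nu,\nu'$. One minor imprecision: when you invoke Theorem~\ref{theor:weight} on a local grid of size $m=O(\sigma z)$, the time bound is $O(h/\log^c m + k)$ rather than $O(h/\log^c n + k)$, but the paper is equally implicit on this point and the stated overall bound is what both you and the paper claim.
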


In case $p < \lg^{1+\eps} n$, we can use a different compressed suffix array
\cite{BN11}, which gives $O(p)$ search time, and the overall time becomes
$O(p+k)$.

\subsection{A Smaller Structure when using Term Frequencies}

In this section we show that  
the space usage can be further improved if $w(P,d)=\tf$, 
i.e., when the data structure must report $k$ documents in which $P$ 
occurs most frequently. 

Our improvement is based on applying the approach of 
Theorem~\ref{theor:topkspace} to each minitree. 
The nodes of a minitree are grouped into microtrees; 
if the structure for a microtree still needs too much space, 
we store them in a compact form that will be described below. 

Let $z'= \sigma D\log m$, where $m$ is the number of nodes 
in a  minitree $\cT$. 
Using the same method as in Theorem~\ref{theor:topkspace}, we divide 
the nodes of $\cT$ into $O(m/z')$ minifat nodes and 
$O(m/(D\log m))$ microtrees, so that each microtree contains 
$O(\sigma z')$  nodes. 
We construct the contracted minitree and the contracted grid for 
$\cT$ as in Theorem~\ref{theor:topkspace}. 
Both the contracted minitree and the structure for the 
contracted grid use $O(m)$ bits.
We can traverse a path in the microtree using the implementation of 
the global suffix tree described in the previous section, as well as
compute local preorders and depths, and attach satellite information to
microtree nodes.

For every microtree $\cT_v$, we define the dummy nodes 
$\nu$ and $\nu'$. Pointers in $\cT_v$ are transformed 
as in the proof of Theorem~\ref{theor:topkspace} with regard to $\nu$ and
$\nu'$.

Let $m'$ denote the number of nodes in a microtree.
If $\log m' = O(\log\sigma + \log D)$, we implement 
the local grid data structure described in Theorem~\ref{theor:topkspace} 
for a microtree. 
In this case we can store a data structure for a microgrid in 
$O(\log(m'+D))=O(\log \sigma + \log D)$ bits per node.

If, instead, $\log m' = \omega(\log\sigma + \log D)$, since 
$\log m' = O(\log(\sigma z')) = O(\log\sigma + \log D + \log\log m)$,
it follows that $\log m' = O(\log\log m)$. Hence, the size of 
the microtree is $m' = \log^{O(1)} m = (\log\sigma+\log D+\log\log n)^{O(1)}
= (\log\log n)^{O(1)}$. The total number of pointers in the microtree is
also $m'' = m'+O(D) = (\log\log n)^{O(1)}$ (since $\log D = o(\log m')$).
Since all the grids in $m'' \times m'$, with one point per $x$-coordinate, and 
weights in $[1,m'']$, can be expressed in $m''(\log m' + \log m'') =
o(\log n)$ bits, we can store pre-computed 
answers for all possible queries on all possible small microtrees.  
The only technical difficulty is that weights of some pointers in a microtree
can be arbitrarily large. However, as explained below, 
it is not necessary to know the exact weights of pointers to answer 
a query on a small microtree. 

All pointers $\ptr(u_l,d)$ where $u_l$ is a leaf node and 
$u_l\not=\nu'$ have weight $1$. The weights of $\ptr(\nu',d)$ can  be 
arbitrarily large. The weight of a pointer $\ptr(u,d)$ for an internal node $u$ 
equals to the sum of weights of all pointers $\ptr(u',d)$ for the same 
document $d$ that lead to $u$. Thus the weight of $\ptr(u,d)$
can also be large. 
We note that there is at most one pointer $\ptr(\nu',d)$ for each $d$.
Therefore the weight of each pointer $\ptr(u,d)$ can 
be expressed as the sum $w_1(u)+w_2(u)$, where $w_1(u)$ is the weight of 
$\ptr(\nu',d)$ or $0$ and $w_2(u)\leq m'$. 
In other words, the weight of $\ptr(u,d)$ differs from the weight of 
$\ptr(\nu',d)$ by at most $m'$. 

\tolerance=1000
Let the set $\cN$ contain the weights of all pointers $\ptr(u_l,d)$ and 
$\ptr(\nu',d)$. Let $\cN'=\{\, \floor{w/m'}, \, \floor{w/m'}+1 \,|\, w\in \cN \}$.
To compare the weights of any two pointers it is sufficient to know 
(i) the tree topology 
(ii) for every leaf $u_l$, the document $d$ whose suffix is stored in $u_l$ 
(iii) for every $\ptr(\nu',d)$, the pair $(\rank(\floor{w/m'},\cN'), w \mod m')$
where $w$ is the weight of $\ptr(\nu',d)$. 
There are $o(n/\log n)$ possible combinations of tree topologies 
and possible pairs $(\rank(\floor{w/m'},\cN'), w \mod m')$. 
Hence, we can store answers to all possible queries for all microtrees
in a global look-up table of size $o(n)$ bits.

The topology of a microtree can be stored in $O(m')$ bits.
We can specify the index of the document $d$ stored in a leaf $u_l$ 
with $\log D$ bits. 
We can specify each pair $(\rank(\floor{w/m'},\cN'), w \mod m')$ 
with $O(\log m')$ bits. Since $D=O(m'/\log m')$, information from 
item (iii) can be stored in $O(m')$ bits. 
Thus each microtree can be stored in $O(m'\log D)$ bits if 
$\log m' = \omega(\log\sigma+\log D)$.
Summing up, our data for a minitree uses $O(m(\log\sigma+\log D))$ bits. 
Therefore the total space usage is $O(n(\log\sigma + \log D))$ bits.  

A query for a pattern $P$ is answered by locating the locus $u$ 
of $P$. If $u$ is a fat node in $T$, the query is answered 
by a data structure for the contracted grid. 
If $u$ belongs to a minitree $\cT$ and $u$ is a minifat node, 
we answer the query by employing the data structure for 
the contracted grid of $\cT$. If $u$ belongs to a microtree 
$\cT_v$, the query is answered either by a microgrid data structure 
or by a table look-up.   
\begin{theorem}\label{theor:freqspace}
  Let $\mathcal{D}$ be a collection of  strings over an integer alphabet 
  $[1,\sigma]$ with  
  total length $n$, and  let $\tf(P,d)$ denote the number of occurrences of $P$ 
  in $d$.
  Then there exists an $O(n(\log D+\log \sigma))$ bit 
  data structure that, given a string
  $P$ and an integer $k$, reports $k$ documents $d$ containing $P$ with highest
  $\tf(P,d)$ values, in decreasing order of $\tf(P,d)$, in $O(p/\log_\sigma n +
\log^\eps n \log\sigma +k)$ time, for any constant $\eps>0$.
\end{theorem}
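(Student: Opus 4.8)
The plan is to prove this by pushing the tree-partitioning idea of Theorem~\ref{theor:topkspace} one level deeper, recursing into each minitree, and then handling the resulting tiny subtrees by brute-force tabulation. Concretely, I would fix a minitree $\cT$ with $m$ nodes, set the threshold $z' = \sigma D\log m$, and reuse the heavy/light/fat classification of Theorem~\ref{theor:topkspace} with this threshold to split $\cT$ into $O(m/z')$ \emph{minifat} nodes and $O(m/(D\log m))$ \emph{microtrees}, each of size $O(\sigma z')$. I would then build a contracted minitree together with its contracted grid exactly as in Theorem~\ref{theor:topkspace}; both occupy $O(m)$ bits, and a traversal in it (including local preorders, depths, and satellite data) is supported through the global CST and the compressed suffix array already in place.

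Next I would treat a microtree $\cT_v$, introducing the dummy nodes $\nu,\nu'$ and transforming its pointers with respect to them as in Theorem~\ref{theor:topkspace}. Let $m'$ be its node count. If $\log m' = O(\log\sigma+\log D)$, I would simply install the local-grid structure of Theorem~\ref{theor:topkspace}, costing $O(\log(m'+D)) = O(\log\sigma+\log D)$ bits per node. If instead $\log m' = \omega(\log\sigma+\log D)$, then since $\log m' = O(\log(\sigma z')) = O(\log\sigma + \log D + \log\log m)$ we must have $\log m' = O(\log\log m)$, hence $m' = (\log\log n)^{O(1)}$ and the total pointer count $m'' = m' + O(D)$ is also $(\log\log n)^{O(1)}$; here I would precompute answers to every possible top-$k$ query on every possible such microtree in a universal look-up table.

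The hard part, and the only place needing real care, is that even though the grid dimensions of a small microtree are tiny, the pointer weights are $\tf$ values that can be $\Theta(n)$, so a naive encoding would not fit in $o(\log n)$ bits and the table would blow up. I would resolve this exactly as the discussion above indicates: every leaf pointer $\ptr(u_l,d)$ with $u_l\neq\nu'$ has weight $1$, there is at most one pointer $\ptr(\nu',d)$ per document, and every internal pointer weight equals $w_1(u)+w_2(u)$ with $w_1(u)$ the weight of the relevant $\ptr(\nu',d)$ (or $0$) and $w_2(u)\le m'$. Consequently the outcome of any weight comparison, and therefore of any top-$k$ query, is a function only of (i) the microtree topology, (ii) the document stored at each leaf, and (iii) for each $\ptr(\nu',d)$, the pair $(\rank(\floor{w/m'},\cN'),\, w\bmod m')$, where $\cN'$ collects $\floor{w'/m'}$ and $\floor{w'/m'}+1$ over the leaf and $\ptr(\nu',\cdot)$ weights $w'$. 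Item (i) and item (iii) cost $O(m')$ bits and item (ii) costs $O(m'\log D)$ bits using $D=O(m'/\log m')$; hence a microtree is stored in $O(m'\log D)$ bits, and the table has $o(n/\log n)$ distinct inputs, i.e.\ $o(n)$ bits.

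Finally I would add up the space: over a minitree the structures total $O(m(\log\sigma+\log D))$ bits, hence $O(n(\log\sigma+\log D))$ over all of $T$; the contracted grid of $T^c$, the CST, the bitmaps $M,C$, and the depth arrays contribute $O(n)$ bits, the compressed global suffix array $O(n\log\sigma)$ bits, and the $x$-coordinate-to-document array $O(n\log D)$ bits, all within budget. For a query I would locate the locus $u$ of $P$ via the compressed suffix array in $O(p/\log_\sigma n + \log^\eps n\log\sigma)$ time plus an $O(1)$ LCA, then in $O(1)$ time decide via the marked-ancestor structure and the bitmaps whether $u$ is fat, a minifat node of some minitree, or a node of some microtree, and route the top-$k$ query to the contracted grid of $T^c$, the contracted grid of the enclosing minitree, the microgrid, or the table, respectively, each answering in $O(k)$ further time, for a total of $O(p/\log_\sigma n + \log^\eps n\log\sigma + k)$, which is the claimed bound.
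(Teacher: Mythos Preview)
Your proposal is correct and follows the paper's own argument essentially verbatim: the same recursion of the Theorem~\ref{theor:topkspace} partition into each minitree with threshold $z'=\sigma D\log m$, the same dichotomy on $\log m'$ versus $\log\sigma+\log D$, and the same $\tf$-specific trick of encoding each large weight as a $(\rank(\floor{w/m'},\cN'),\,w\bmod m')$ pair so that the small-microtree case fits a universal table. The only addition is that you spell out the marked-ancestor/bitmap machinery needed to route the locus to the correct level (fat, minifat, or microtree), which the paper leaves implicit.
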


\section{Parameterized Top-$k$ Queries}
\label{sec:topk2dim}

In this section we improve a recent data structure that supports two-dimensional
top-$k$ queries \cite[Sec. 5]{NNR13}. The structure is similar to our wavelet 
tree $W$ described in the proof of Lemma~\ref{lemma:count1}. In addition, for 
the points stored at any node of $W$, it stores an RMQ data structure that gives
in constant time the position of the point with maximum weight within any 
interval. As explained, this structure \cite{FH11} uses $O(t)$ bits if the node 
of $W$ handles $t$ points, and thus the total space of this extended wavelet
tree $W$ is $O(n)$ words for an $O(n) \times O(n)$ grid.

They \cite{NNR13}
show how to support top-$k$ queries in a general interval $[a,b] \times
[c,d]$ by first identifying the $O(\log n)$ nodes $v \in W$ that cover $[c,d]$,
mapping the interval $[a,b]$ to $[a_v,b_v]$ in all those nodes $v$, and setting
up a priority queue with the maximum-weight point of each such interval. Now, 
they repeat $k$ times the following steps: $(i)$ extract the maximum weight 
from the queue and report it; $(ii)$ replace the extracted point, say 
$x \in [a_v,b_v]$, by two points corresponding to $[a_v,x-1]$ and $[x+1,b_v]$,
prioritized by the maximum weight in those ranges.

Their total time is $O((k+\log n)\log n)$ if using linear space. The $O(\log n)$
extra factor is due to the need to traverse $W$ in order to find out the real
weights, so as to compare weights from different nodes. However, those weights
can be computed in time $O(\log^{\eps} n)$ and using $O(n\log n)$ extra bits
\cite{Cha88,Nek09,CLP11}. The operations on the priority queue can be carried out
in $O(\log \log n)$ time~\cite{Tho04}. Thus we have the following result.

\begin{lemma}\label{lemma:topk2d}
Given a grid of $n \times n$ points,
there exists a data structure that uses $O(n)$ words of space and 
reports $k$ most highly weighted points in a range $Q=[a,b]\times [c,d]$ 
in $O((k+\log n)\log^\eps n)$ time, for any constant $\eps>0$. 
The structure is built in $O(n\log n)$ time.
\end{lemma}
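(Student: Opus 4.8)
The plan is to start from the linear-space two-dimensional top-$k$ structure of \cite[Sec.~5]{NNR13} and accelerate its one bottleneck operation. First I would set up the base structure: build the wavelet tree $W$ over the $n$ points exactly as in the proof of Lemma~\ref{lemma:count1}, where a node handling $t$ points carries a bitmap of $t+o(t)$ bits with constant-time $rank$, and, in addition, an RMQ data structure \cite{FH11} of $O(t)$ bits over the (implicit) sequence of weights of the points in that node. Summed over the $O(\log n)$ levels this is $O(n)$ words, built in $O(n\log n)$ time, and the RMQ part does not need to store the weights after construction.

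Next I would recall the query algorithm of \cite{NNR13} on a box $Q=[a,b]\times[c,d]$: locate the $O(\log n)$ canonical nodes $v\in W$ whose $y$-ranges partition $[c,d]$; using the bitmaps, map $[a,b]$ down to a local interval $[a_v,b_v]$ in each such $v$ on the fly; for each of them run one RMQ to obtain the highest-weighted point of $[a_v,b_v]$ and insert it into a max-priority queue keyed by weight. Then repeat $k$ times: extract the maximum, report it, and if the extracted point sat at local position $x$ inside some $[a_v,b_v]$, reinsert the RMQ-maxima of $[a_v,x-1]$ and $[x+1,b_v]$. Correctness is as in \cite{NNR13}: the process simulates, for each canonical node, a priority-queue traversal of the Cartesian tree of its weight sequence, and the global queue merges these $O(\log n)$ monotone streams into global weight order, so after $k$ extractions the $k$ highest-weighted points of $Q$ come out in decreasing order of weight.

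The key step, and the place where the improvement over \cite{NNR13} lies, is how to make each queue comparison and each reported point cost $o(\log n)$. In \cite{NNR13} a point is known only as a pair (canonical node $v$, local position), and recovering its actual weight (needed to compare streams) or its coordinates (needed for output) is done by walking $W$ from $v$ to the corresponding leaf, at cost $\Theta(\log n)$; since $\Theta(k+\log n)$ such lookups occur, the query is $O((k+\log n)\log n)$. Instead I would attach an auxiliary structure of $O(n\log n)$ extra bits, i.e.\ still $O(n)$ words, that, given such a pair, returns the true weight and the true $(x,y)$ coordinates in $O(\log^\eps n)$ time, using the $\log^\eps n$-ary range-tree / fractional-cascading speed-ups of \cite{Cha88,Nek09,CLP11}; the priority queue is implemented with Thorup's queue \cite{Tho04}, so its operations cost $O(\log\log n)$. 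With these replacements, initialization costs $O(\log n \log^\eps n)$ (one range-maximum query plus one coordinate lookup per canonical node), and each of the $k$ rounds costs $O(\log^\eps n)$ for its $O(1)$ RMQ and coordinate queries plus $O(\log\log n)$ for the queue operations, the latter being subsumed. This yields query time $O((k+\log n)\log^\eps n)$, space $O(n)$ words, and $O(n\log n)$ construction time.

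The step I expect to be the main obstacle is verifying that this $O(\log^\eps n)$-time weight/coordinate retrieval is genuinely compatible with the RMQ-augmented wavelet tree: the RMQ at a node returns only a local position, never a value, so the auxiliary structure must be indexable by (node, local position) pairs consistently with the wavelet-tree mapping, must support cross-node weight comparisons, and must still let us print true coordinates at the end, all within $O(n\log n)$ bits. I expect this to follow by layering the cited speed-ups over $W$ exactly as they are layered over ordinary range trees, but this bookkeeping is the part that needs the most care.
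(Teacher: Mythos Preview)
Your proposal is correct and matches the paper's proof essentially step for step: the paper also starts from the wavelet-tree-plus-RMQ structure of \cite{NNR13}, identifies the $\Theta(\log n)$ weight-lookup per extracted point as the bottleneck, replaces it by an $O(\log^\eps n)$-time lookup using $O(n\log n)$ additional bits via \cite{Cha88,Nek09,CLP11}, and uses Thorup's priority queue \cite{Tho04} for the $O(\log\log n)$ queue operations. The paper's version is terser and does not dwell on the compatibility issue you flag at the end, but the argument is the same.
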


Note this technique automatically admits being used in online mode (i.e.,
without knowing $k$ in advance), since we have not made use of $k$ to speed
up the priority queue as in previous sections. 
We can easily stop the computation at some $k$ and resume it later.

\subsection{Limited Three-Dimensional Queries}
\label{sec:exttopk2dim}

In this section we slightly extend the scenario considered above.
We assume that each point has an additional coordinate, denoted $z$, 
and that $p.z\leq \log^{\alpha} n$ for a constant $\alpha>0$.
Top-$k$ points in a three-dimensional range 
$[a,b]\times [c,d]\times [\beta,\gamma]$ must be reported 
sorted by their weights. Such queries will be further called 
{\em limited three-dimensional top-$k$ queries}. 
We can obtain the same result as in Lemma~\ref{lemma:topk2d} for these
queries.

Instead of a binary wavelet tree, we use a multiary one \cite{FMMN07},
with node degree $\log^{\eps} n$ and height $O(\log n/\log \log n)$. Now
each node $v \in W$ has associated a vector $B_v$ so that $B_v[i]$
contains the index of the child in which the $i$-th point of $v$ is stored. 
Using $B_v$ and some auxiliary data structures, we can obtain 
the weight of any point at any node in $O(\log^\eps n)$ time \cite{Nek09}. 
All vectors $B_v$ and the extra data structures use $O(n)$ words. 

We regard the $t$ points of each node $v$ as lying in a two-dimensional
grid of $x$- and $z$-coordinates. Instead of one-dimensional RMQs
on the $x$-coordinates $[a_v,b_v]$, we issue two-dimensional RMQs
on $[a_v,b_v] \times [\beta,\gamma]$. The wavelet tree of the basic 
two-dimensional RMQ data structure \cite{NNR13} handles $n \times m$ grids 
in $O(n\log m)$ bits of space and answers RMQs in time $O(\log^2 m)$.
In our case $m < \log^\alpha n$ and thus the space is $O(n\log\log n)$ bits
and the query time is $O((\log\log n)^2)$. Thus the space of the two-dimensional
data structures is of the same order of that used for vectors $B_v$, adding
up to $O(n)$ words. As RMQs are built in linear time, the construction time
is $O(n\log n)$.

Now we carry out a procedure similar to that of the two-dimensional version. 
The range $[a,b]$ is covered by $O(\log^{1+\eps} n/\log\log n)$ nodes. We obtain
all their (two-dimensional) range maxima, insert them in a priority queue,
and repeat $k$ times the process of extracting the highest weight and replacing
the extracted point $x \in [a_v,b_v]$ by 
the next highest weighted point in $[a_v,b_v]$ (thus we are running these 
range maxima queries in online mode).

The two-dimensional RMQ structures at nodes $v$ cannot store the absolute 
weights within overall linear space. Instead, when they obtain the 
$x$-coordinate of their local grid, this coordinate $x_v$ is mapped to the 
global $x$-coordinate in $O(\log^\eps n)$ time, using the same technique as 
above. Then the global array of weights is used. 
Hence these structures find a two-dimensional maximum weight in time
$O(\log^\eps n (\log\log n)^2)$. This is repeated over
$O(\log^{1+\eps} n / \log\log n)$ nodes, and then iterated $k$ times. The
overall time is $O((k+\log^{1+\eps} n/ \log\log n)\log^\eps n (\log\log n)^2)$,
which is of the form $O((k+\log n)\log^\eps n)$ by adjusting $\eps$.
The times to handle the priority queue are negligible \cite{Tho04}.

\begin{lemma}\label{lemma:exttopk2d}
Given a grid of $n \times n \times \log^\alpha n$ points, for a constant
$\alpha>0$, there exists a data structure that uses $O(n)$ words of space and 
reports $k$ most highly weighted points in a range $Q=[a,b]\times [c,d] \times
[\beta,\gamma]$ in $O((k+\log n)\log^\eps n)$ time, for any constant $\eps>0$. 
It is built in $O(n\log n)$ time.
\end{lemma}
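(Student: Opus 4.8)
The plan is to adapt the two-dimensional construction of Lemma~\ref{lemma:topk2d} so that it can additionally filter on the coordinate $z$, exploiting that its universe is only $\log^\alpha n$. The starting point is the wavelet tree over the $y$-coordinates with a range-maximum structure at each node over the $x$-coordinates, together with the machinery that recovers the true weight of any point in $O(\log^\eps n)$ time using $O(n\log n)$ extra bits. To keep the traversal cheap I would first replace the binary wavelet tree by a multiary one of degree $\log^\eps n$ and height $O(\log n/\log\log n)$, storing at each node $v$ the vector $B_v$ that records, for each point handled at $v$, which child it descends to; with the auxiliary structures of \cite{Nek09} this still permits recovering a point's weight in $O(\log^\eps n)$ time within $O(n)$ words overall.

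Next, at each wavelet-tree node $v$ I would replace the one-dimensional RMQ over the $x$-coordinates by a two-dimensional RMQ over the $(x,z)$-grid formed by the points stored at $v$. Since $z<\log^\alpha n$, the basic two-dimensional RMQ of \cite{NNR13} on an $n\times m$ grid with $m<\log^\alpha n$ uses only $O(n\log\log n)$ bits in total and answers a query in $O((\log\log n)^2)$ time; summed over all nodes this is still $o(n)$ words, and since the RMQs are built in linear time the whole construction stays within $O(n\log n)$ time. To answer a query $[a,b]\times[c,d]\times[\beta,\gamma]$, I would cover $[c,d]$ with the $O(\log^{1+\eps}n/\log\log n)$ canonical nodes of the multiary wavelet tree, map $[a,b]$ down to a local interval $[a_v,b_v]$ in each of them, and seed a Thorup priority queue with the maximum-weight point of each $[a_v,b_v]\times[\beta,\gamma]$; then, $k$ times, I would extract the current maximum, report it, and reinsert the maxima of the two pieces $[a_v,x-1]$ and $[x+1,b_v]$ obtained by splitting at the extracted column $x$, i.e.\ running these two-dimensional range-maxima queries in online mode, which needs no advance knowledge of $k$.

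The delicate point — and the place where linear space is genuinely at risk — is weight recovery: the per-node two-dimensional RMQ structures cannot afford to store absolute weights, so when one of them returns the local $x$-coordinate of a candidate, I would map it back to the global $x$-coordinate in $O(\log^\eps n)$ time via the same $B_v$-based technique and only then consult the single global weight array, so that weights coming from different nodes are comparable. Each two-dimensional maximum thus costs $O(\log^\eps n\,(\log\log n)^2)$; multiplying by the $O(\log^{1+\eps}n/\log\log n)$ covering nodes and iterating $k$ times gives $O((k+\log^{1+\eps}n/\log\log n)\,\log^\eps n\,(\log\log n)^2)$, which collapses into $O((k+\log n)\log^\eps n)$ after renaming $\eps$, with the priority-queue operations \cite{Tho04} being negligible. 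I expect the bookkeeping of this local-to-global coordinate mapping inside the two-dimensional RMQ queries to be the main obstacle; the rest is a fairly direct lift of the two-dimensional argument.
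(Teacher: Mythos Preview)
Your proposal is correct and follows essentially the same approach as the paper: a multiary wavelet tree of degree $\log^\eps n$ over the $y$-coordinate, with a two-dimensional RMQ over $(x,z)$ at each node (using the \cite{NNR13} structure, which costs $O((\log\log n)^2)$ time and $O(t\log\log n)$ bits for $t$ points since $z<\log^\alpha n$), weight recovery via the $B_v$ machinery in $O(\log^\eps n)$ time, and a priority-queue driven extraction. One small slip: the per-node RMQ structures summed over the $O(\log n/\log\log n)$ levels give $O(n\log n)$ bits, i.e.\ $O(n)$ words rather than $o(n)$ words as you wrote, but this is exactly the bound required and does not affect the argument.
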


Again, this result holds verbatim in online mode.

\subsection{The Final Result}
\label{sec:parweight}

We divide the grid into horizontal stripes of height 
$r=\ceil{\log^{c+1+\eps} n}$ for any constant $c$, 
much as in Section~\ref{sec:dynfinal}. 
We store a data structure for 
limited three-dimensional top-$k$ queries for each slim grid, taking $y$ as the
limited coordinate. A query $[a,b]\times [0,h]\times [\tau_1,\tau_2]$ is 
processed just as in Section~\ref{sec:dynfinal}, with the only difference that
the queries $[a,b]\times [\tau_1,\tau_2] \times [0,y'] $ to the local grids
now require $O(\log^{1+\eps}n)$ initialization time and then 
$O(\log^\eps n)$ time per element retrieved, according to 
Lemma~\ref{lemma:exttopk2d}. 
Then, we initialize our global query $\cQ$ in time
$O(\lceil h/r \rceil \log^{1+\eps} n) = O(h/\log^c n+\log^{1+\eps}n)$, and
then extract each new result in time $O(\log^{\eps}n)$. The time of the priority
queue is blurred by adjusting $\eps$. Hence, the total query time
is $O(h/\log^c n + (k+\log n)\log^\eps n)$, and Theorem~\ref{theor:parweight}
is proved. 

%
%
%

\section{Conclusions}
\label{sec:concl}

We have presented an optimal-time and linear-space solution to top-$k$
document retrieval, which can be used on a wide class of relevance measures
and subsumes in an elegant and uniform way various previous solutions to
other ranked retrieval problems. We have also presented dynamic variants,
space-reduced indexes, and structures that solve extensions of the basic
problem. The solutions reduce the problem to ranked
retrieval on multidimensional grids, where we also present improved results,
some tailored to this particular application, some of more general interest.

After the publication of the conference version of this article \cite{NN12},
Shah et al.~\cite{SSTV13} showed how to achieve the optimal $O(k)$ time 
once the locus of $P$ is known. This is in contrast to our original
result, where we used time $O(p+k)$ after having spent time $O(p)$ to find the
locus. Their improvement allows one to use these techniques in other scenarios
where the loucs is obtained in some other way, without the need to search for
it directly using $P$. They also extend the results to the important case of
the external memory scenario. The new results we obtain in this article about
how to search the suffix tree in RAM-optimal time $O(p/\log_\sigma n)$, and
how to handle the dynamic scenario, nicely complement those results and add up
to a rather complete solution to the problem.

\medskip

There are several relevant research directions, on which we comment next.

\paragraph{RAM optimality.}
In our previous conference version we had achieved time $O(p+k)$, which was
optimal only in the comparison model (although we used RAM-based techniques). 
Now we have improved this result to $O(p/\log_\sigma n + k)$, which is optimal 
in general in the RAM model (considering the case $\lg D = \Theta(\log n)$), 
because it is the size
in words of the input plus the output. Achieving $O(p/\log_\sigma n)$ time on
the suffix tree, without any polylogarithmic additive penalty, is an 
interesting result by itself, and we have obtained it without altering the
topology of the suffix tree (which is crucial for the invariants of Hon et 
al.~\cite{HSV09} to work). However, we do not know if our solution is optimal 
when there are very few distinct documents, $\lg D = o(\log n)$. The question
of whether $O(p/\log_\sigma n + k/\log_D n)$ time can be achieved is still
open.

\paragraph{Construction time.}
Without considering the cost to compute weights $w(path(u),d)$ for all 
pointers $\ptr$ in the suffix tree, the construction time of Hon et 
al.~\cite{HSV09} (which achieves suboptimal query time) is $O(n)$. The time
to build our grid structure is $O(n\log n)$, to which we must add
$O(n\log^\eps n)$ randomized time to achieve RAM-optimal search time in the
suffix tree traversal (or $O(n\,\mathrm{polylog}~n)$ deterministic time for a
weaker version of it). Is it possible to achieve linear, or at least 
$O(n\log n)$, deterministic construction time for our data 
structures?

\paragraph{Dynamic optimality.}
In our dynamic variant, the static RAM-optimal search time in the suffix tree 
becomes $O(p(\log\log n)^2/\log_\sigma n+\log n)$. There are schemes that do
better for large $\sigma$, for example $O(p+(\log\log\sigma)^2)$ 
time \cite{FG13}. Although they do not support deletions yet, this seems
to be possible.
On the other hand, we obtained $O(\log^{1+\eps} n)$ update time per symbol.
A general question is, which is the best search time we can obtain in the
dynamic scenario?

\paragraph{Practical results.}
Our solutions are not complex to implement and do not make use of impractical
data structures. A common pitfall to practicality, however, is space usage. 
Even achieving linear space (i.e., $O(n\log n)$ bits) can be insufficient. We 
have shown that our structure can use, instead, 
$O(n(\log\sigma + \log D))$ bits for the $\tf$ measure (and slightly more for
others), but the constants are still large.
There is a whole trend of reduced-space representations for
general document retrieval problems with the $\tf$ measure
\cite{Sad07a,VM07,HSV09,CNPT10,GNP11,HST12,BNV13,GKNP13,Tsu13,HSTV13,NT13}. 
The current situation is as follows \cite{Nav13}: One trend aims at the least
space usage. It has managed to use just $D\log(n/D)+O(D)+o(n)$ bits on top of
a compressed suffix array of the collection, and the best time complexity it
has achieved is $O(p+k\log^2 k \log^{1+\eps} n)$ for any constant $\eps>0$
\cite{NT13}. Another trend adds to the space the so-called {\em document array} 
\cite{Mut02}, which uses $n\lg D + o(n\lg D)$ bits and enables faster
solutions. Currently the fastest one achieves 
$O(p+k(\log\sigma\log\log n)^{1+\eps})$ time \cite{HST12}; a recent
unpublished result \cite{NTisaac13} obtains time $O(p+k\log^* k)$.
This is very close to optimal, but not yet our $O(p/\log_\sigma n + k)$ time.

In practice, the most compact implementation in this trend \cite{NV12}
reaches about 1--2 times the text size (including a representation of the 
text) and retrieves each of the top-$k$
results within milliseconds. An implementation of the ideas we propose in
this article \cite{KN13} makes use of the fact that, under very general 
probabilistic models, the average height of the suffix tree (and hence of
our grids) is $O(\log n)$ \cite{Szp92}. This enables a simple implementation
of our grid-based index that uses 2--3 times the text size and, although its
average query time, $O(p + (k+\log \log n)\log \log n)$, is not optimal, it
returns each answer within microseconds \cite{KN13}. 

\paragraph{More complex queries.}
In the long term, the most interesting open questions are related to extending
the one-pattern results to the bag-of-words paradigm of information retrieval. 
Our model easily handles single-word searches, and also phrases (which is 
quite complicated with inverted indexes \cite{ZM06,BYRN11}, particularly if 
their 
weights have to be computed). Handling a set of words or phrases, whose weights
within any document $d$ must be combined in some form (for example using the 
$\tf \times \idf$ model) is more challenging. We are only aware of some very 
preliminary results for this case \cite{CP10,GNP11,HSTV10}, which suggest that 
it is unlikely that strong worst-case results can be obtained. Instead, one can
aim at complexities related to the results achieved with inverted lists on the
simpler natural language model. It is 
interesting to note that our online result allows simulating the left-to-right
traversal, in decreasing weight order, of the virtual list of occurrences of
any string pattern $P$. Therefore, for a bag-of-word queries, we can emulate
any algorithm designed for inverted indexes which stores those lists in
explicit form \cite{PZSD96,AM06}, therefore extending 
any such technique to the general model of string documents.

\subsection*{Acknowledgements}

We thank Djamal Belazzougui and Roberto Grossi for helpful pointers.

\bibliographystyle{plain}
\bibliography{paper}

\end{document}